\newcommand\org@hypertarget{}
\let\org@hypertarget\hypertarget
\renewcommand\hypertarget[2]{%
  \Hy@raisedlink{\org@hypertarget{#1}{}}#2%
  }
\definecolor{darkgreen}{RGB}{50,190,50}
\definecolor{darkblue}{RGB}{0,0,190}
\definecolor{darkred}{RGB}{238,0,0}
\renewcommand*{\vec}[1]{\mathbf{#1}}
\newcommand{\pr}{^{\prime}}
\newcommand{\ket}[1]{\ensuremath{\left|\right.\!{#1}\!\left.\right\rangle}}
\newcommand{\bra}[1]{\ensuremath{\left\langle\right.\!{#1}\!\left.\right|}}
\newcommand{\ketbra}[2]{\ensuremath{|{#1}\rangle\langle{#2}|}}
\newcommand{\nr}{\ensuremath{\hspace*{0.5pt}}}
\newcommand{\subtiny}[3]{\ensuremath{_{\hspace{#1 pt}\protect\raisebox{#2 pt}{\tiny{$ #3$}}}}}
\newcommand{\suptiny}[3]{\ensuremath{^{\hspace{#1 pt}\protect\raisebox{#2 pt}{\tiny{$ #3$}}}}}
\newcommand{\SA}{\ensuremath{_{\hspace{-1pt}\protect\raisebox{0pt}{\tiny{$A$}}}}}
\newcommand{\SB}{\ensuremath{_{\hspace{-1pt}\protect\raisebox{0pt}{\tiny{$B$}}}}}
\newcommand{\SAB}{\ensuremath{_{\hspace{-1pt}\protect\raisebox{0pt}{\tiny{$A\hspace*{-0.5pt}B$}}}}}
\newcommand{\ignore}[1]{}
\newcommand{\dv}{\ensuremath{|\hspace*{-0.75pt}|}}
\DeclareMathOperator{\diag}{diag}
\newcommand{\tr}{\textnormal{Tr}}
\newcommand{\djj}{d\kern-0.4em\char"16\kern-0.1em}
\newtheorem{lemma}{Lemma}
\newtheorem{claim}{Claim}
\newtheorem{prop}{Proposition}[]
\renewcommand{\thesection}{\arabic{section}}
\renewcommand{\thesubsection}{\arabic{section}.\Alph{subsection}}
\renewcommand{\p@subsection}{}
\renewcommand{\p@subsubsection}{}
\tikzset{style green/.style={
    set fill color=green!50!lime!60,
    set border color=white,
  },
  style cyan/.style={
    set fill color=cyan!90!blue!60,
    set border color=white,
  },
  style orange/.style={
    set fill color=orange!80!red!60,
    set border color=white,
  },
  style hordash/.style={
    set fill color=white,
    set border color=black,
  },
  hor/.style={
    above left offset={-0.09,0.25},
    below right offset={0.09,-0.05},
    #1
  },
  ver/.style={
    above left offset={-0.09,0.35},
    below right offset={0.09,-0.1},
    #1
  }
}
\definecolor{mycolor}{rgb}{0.122, 0.435, 0.698}
\newmdenv[innerlinewidth=0.5pt, roundcorner=4pt,linecolor=mycolor,innerleftmargin=6pt,
innerrightmargin=6pt,innertopmargin=6pt,innerbottommargin=6pt]{mybox}
\newtcolorbox[blend into=figures]{boxdefi}[3][]
{ float*=ht,width=\textwidth,lower separated=false, 
title={#2},label= def:#3,#1}
\renewcommand{\thesection}{\Roman{section}}
\renewcommand{\thesubsection}{\Roman{section}.\arabic{subsection}}
\renewcommand{\thesubsubsection}{\Roman{section}.\arabic{subsection}.\arabic{subsubsection}}
\renewcommand{\p@subsection}{}
\renewcommand{\p@subsubsection}{}
\begin{document}

\title{Thermodynamically optimal creation of correlations}
\author{Faraj Bakhshinezhad}
\thanks{These authors contributed equally to this work.}
\affiliation{Department of Physics, Sharif University of Technology, Tehran 14588, Iran}
\affiliation{Institute for Quantum Optics and Quantum Information - IQOQI Vienna, Austrian Academy of Sciences, Boltzmanngasse 3, 1090 Vienna, Austria}
\author{Fabien Clivaz}
\thanks{These authors contributed equally to this work.}
\affiliation{Department of Applied Physics, University of Geneva, 1211 Geneva 4, Switzerland}
\affiliation{Institute for Quantum Optics and Quantum Information - IQOQI Vienna, Austrian Academy of Sciences, Boltzmanngasse 3, 1090 Vienna, Austria}
\author{Giuseppe Vitagliano}
\affiliation{Institute for Quantum Optics and Quantum Information - IQOQI Vienna, Austrian Academy of Sciences, Boltzmanngasse 3, 1090 Vienna, Austria}
\author{Paul Erker}
\affiliation{Institute for Quantum Optics and Quantum Information - IQOQI Vienna, Austrian Academy of Sciences, Boltzmanngasse 3, 1090 Vienna, Austria}
\author{Ali T. Rezakhani}
\affiliation{Department of Physics, Sharif University of Technology, Tehran 14588, Iran}
\author{Marcus Huber}
\email{marcus.huber@univie.ac.at}
\affiliation{Institute for Quantum Optics and Quantum Information - IQOQI Vienna, Austrian Academy of Sciences, Boltzmanngasse 3, 1090 Vienna, Austria}
\author{Nicolai Friis}
\email{nicolai.friis@univie.ac.at}
\affiliation{Institute for Quantum Optics and Quantum Information - IQOQI Vienna, Austrian Academy of Sciences, Boltzmanngasse 3, 1090 Vienna, Austria}

\begin{abstract}
Correlations lie at the heart of almost all scientific predictions. It is therefore of interest to ask whether there exist general limitations to the amount of correlations that can be created at a finite amount of invested energy. Within quantum thermodynamics such limitations can be derived from first principles. In particular, it can be shown that establishing correlations between initially uncorrelated systems in a thermal background has an energetic cost. This cost, which depends on the system dimension and the details of the energy-level structure, can be bounded from below but whether these bounds are achievable is an open question. Here, we put forward a framework for studying the process of optimally correlating identical (thermal) quantum systems. The framework is based on decompositions into subspaces that each support only states with diagonal (classical) marginals. Using methods from stochastic majorisation theory, we show that the creation of correlations at minimal energy cost is possible for all pairs of three- and four-dimensional quantum systems. For higher dimensions we provide sufficient conditions for the existence of such optimally correlating operations, which we conjecture to exist in all dimensions.
\end{abstract}

\maketitle


\section{Introduction}\label{sec:introduction}


Correlations can be regarded as the fundamental means of obtaining information: From a physical perspective, obtaining knowledge requires correlation of physical variables in an observed system with physical variables in a measurement apparatus~\cite{GuryanovaFriisHuber2018}. As long as one regards system and measurement apparatus as separate entities which are not persistently strongly interacting~\cite{FriisHuberPerarnauLlobet2016}, this correlation requires an investment of energy. Conversely, all correlations imply extractable work~\cite{PerarnauLlobetHovhannisyanHuberSkrzypczykBrunnerAcin2015}. Since correlations and energy (work) can be considered to be resources in (quantum) information theory and thermodynamics, respectively, these observations establish one of the fundamental connections between these theories. Indeed, both theories already share a common framework using statistical ensembles to capture knowledge about collections of physical systems. This knowledge can then be harnessed to facilitate the most efficient use of the available resources, e.g., energy, for the tasks at hand. The young field of quantum thermodynamics combines features from both fields and investigates their interplay in the quantum domain~\cite{GooldHuberRieraDelRioSkrzypczyk2016, VinjanampathyAnders2016, MillenXuereb2016}.

Here, we explore the specific quantitative relation between correlations and energy~\cite{VitaglianoKloecklHuberFriis2019}. While general qualitative insights can help to understand some quandaries arising from Maxwell's demon or Szilard's engine~\cite{Bennett1982, LeffRex2003, MayuramaNoriVedral2009}, and correlations play various interesting roles in quantum thermodynamics (see, e.g.,~\cite{HuberPerarnauHovhannisyanSkrzypczykKloecklBrunnerAcin2015, PerarnauLlobetHovhannisyanHuberSkrzypczykBrunnerAcin2015, BruschiPerarnauLlobetFriisHovhannisyanHuber2015, FriisHuberPerarnauLlobet2016, BinderVinjanampathyModiGoold2015, AlipourBenattiBakhshinezhadAfsaryMarcantoniRezakhani2016, BeraRieraLewensteinWinter2017, Mueller2018, BeraRieraLewensteinBaghaliWinter2019, SapienzaCerisolaRoncaglia2019}), precise quantitative statements about the trade-off between work and correlations are generally complicated. For example, by allowing arbitrarily slow quasi-static operations and perfect control over arbitrarily many auxiliary systems, one may provide tight lower bounds on the work cost of creating bipartite correlations as measured by the mutual information~\cite{JevticJenningsRudolph2012,JevticJenningsRudolph2012b,HuberPerarnauHovhannisyanSkrzypczykKloecklBrunnerAcin2015, BruschiPerarnauLlobetFriisHovhannisyanHuber2015, VitaglianoKloecklHuberFriis2019}. However, for two identical systems, these bounds are tight only in the case when specific so-called symmetrically thermalizing unitaries (STUs) exist, i.e., unitaries that map initial thermal states to locally thermal states at higher local effective temperatures. Moreover, how well two systems can be correlated for a given energy in finite time and with limited control is generally not known. In particular, the possibility of optimal conversion of energy into correlations for the interesting special case of fully controlled, closed systems with two identical subsystems rests entirely on the assumed existence of STUs. In this sense, central open questions at the very foundations of quantum thermodynamics concern the quantitative correspondence between correlations and energy.

In this paper, we put forward a framework for investigating potential marginal spectra in the unitary orbit of any quantum state via decompositions into locally classical subspaces (LCSs) and majorisation relations. Here, LCSs are subspaces of a bipartite Hilbert space that support only states that are locally classical, i.e., states which have diagonal marginals with respect to a chosen basis. We use this approach to investigate the question of existence of STUs for arbitrary bipartite quantum systems. In particular, we show that STUs exist for two identical copies of arbitrary three- and four-dimensional systems, thereby extending previous results on the special case of Hamiltonians with equally-spaced energy gaps~\cite{JevticJenningsRudolph2012b, HuberPerarnauHovhannisyanSkrzypczykKloecklBrunnerAcin2015}. In addition, we formulate sufficient conditions for the existence of STUs for identical subsystems with arbitrary local dimension and for multiple copies of the initial state. Indeed, we show that STUs exist in the limit of infinitely many copies.

This paper is structured as follows: In Sec.~\ref{sec:framework}, we describe the conceptual framework of our investigation, formulate our central question, and briefly review the state of the art. In Sec.~\ref{sec:asymmetric case}, we further discuss that, while STUs do not in general exist for asymmetric situations where the two systems have different Hamiltonians, this does not resolve the problem of maximizing correlations for asymmetric systems. We then focus on the open case of two identical subsystems in Sec.~\ref{subsec:ent subspace} and discuss a family of unitary transformations that result in symmetric, diagonal marginals. These unitaries have a block-diagonal structure with respect to a specific choice of LCSs, and give rise to a rich structure of potential marginal transformations. In particular, they allow showing that STUs exist in the asymptotic limit of infinitely many copies. In Sec.~\ref{sec:two-qutrit case}, we then formulate three alternative approaches to demonstrating the existence of STUs based on unitaries in these LCSs. As we show, all three approaches allow proving the existence of STUs in local dimension $d=3$. However, only two of these approaches are successful (to different extent) in local dimension $d=4$ and provide general sufficient conditions for the existence of STUs for higher dimensions. Finally, in Sec.~\ref{sec:conclusion} we provide a summary and discussion and put forward a hypothesis for general dimensions.


\section{Framework}\label{sec:framework}

\subsection{Main question}\label{sec:main question}

Let us start our investigation with a detailed description of the problem: We consider the process of correlating two previously uncorrelated quantum systems. More precisely, we study the controlled interaction of two quantum systems $A$ and $B$ with Hamiltonians $H\SA$ and $H\SB$, respectively, which are initially in a tensor product state $\varrho\SA\otimes\varrho\SB$, with the goal of increasing the correlations between them. As we want to obtain optimal bounds, we consider optimal coherently controlled systems, i.e. the ability to externally engineer and time any interaction Hamiltonian between the two quantum systems, resulting in unitary dynamics on the system and we thus study the global unitary orbit of product states. Furthermore, we are interested in the energy needed to establish correlations between these initially uncorrelated quantum systems.

One could generalise the question from perfect system control and thus unitaries on the system, to correlating maps generated by unitaries on the system and arbitrary auxiliary systems, inducing completely positive and trace preserving (CPTP) maps on the system. However, some transformations (e.g., cooling to the ground state) may only be achievable asymptotically (i.e., using infinite time or energy~\cite{ClivazSilvaHaackBohrBraskBrunnerHuber2019a, MasanesOppenheim2017, WilmingGallego2017, ClivazSilvaHaackBohrBraskBrunnerHuber2019b}). Moreover, the implementation of CPTP maps generally requires access to and control over auxiliary systems, resulting in an implicit energy cost for preparation and manipulation of the auxiliaries that is obscured by the use of the CPTP maps instead of the explicit description of the corresponding unitaries on the total Hilbert space. Consequently, it is of interest to understand the limits of unitary correlating processes, which allow for a complete account of all the energetic changes within the system. Moreover, to give a fair account of the energy that needs to be supplied to the system, it is important to consider initial states that do not implicitly supply free energy. That is, we assume that the initial states of the subsystems are thermal states at the same temperature $T$, i.e., $\varrho\SA=\tau\SA(\beta)$ and $\varrho\SB=\tau\SB(\beta)$ with $\beta=1/T$ (we use units where $k\subtiny{0}{0}{\mathrm{B}}=1$ throughout), where $\tau(\beta):=e^{-\beta H}/\mathcal{Z}$ and $\mathcal{Z}=\tr(e^{-\beta H})$. In this way, the initial state $\tau\SA(\beta)\otimes\tau\SB(\beta)$ minimises the local energies for given entropies and is also completely passive~\cite{PuszWoronowicz1978}, i.e., a state from which no energy can be extracted unitarily even when taking multiple copies.

Correlations can be quantified in different ways. Qualitatively, one may describe correlations between two subsystems as the feature that more information is available about properties of the joint system than about properties of the subsystems. If there is no preference as to the specific properties that are to be correlated, i.e., if one is interested in quantifying any correlations, it is most useful to consider entropic measures. The most paradigmatic among them is the mutual information $\mathcal{I}(\varrho\SAB)=S(\varrho\SA)+S(\varrho\SB)-S(\varrho\SAB)$, based on the von Neumann entropy $S(\varrho):=-\tr\bigl(\varrho\ln\varrho\bigr)$. Although we focus on the von Neumann entropy, in principle one can replace it with any other measure of local `mixedness'. And indeed, our methods are framed in the context of majorisation relations, and thus in principle also open an avenue towards studying other entropic measures.

When focusing on the mutual information, the problem of unitarily creating correlations can be reduced to the task of maximising the sum of the marginal entropies under local energy constraints. That is, the invested energy $\Delta E:=\text{Tr}\bigl[H\SAB\bigl( U\SAB\tau\SA(\beta)\otimes\tau\SB(\beta)U^{\dagger}\SAB-\tau\SA(\beta)\otimes\tau\SB(\beta)\bigr)\bigr]$, where $H\SAB:=\,H\SA\,+\,H\SB,$ only depends on the local marginals, while the global entropy is invariant under unitary operations. The change in mutual information thus reduces to $\Delta \mathcal{I}=\Delta S\SA+\Delta S\SB$. We thus arrive at the following question:

\begin{Questions}{Maximal mutual information under energy constraint}{1}
For a given pair of local Hamiltonians $H\SA$ and $H\SB$ and initial temperature $T=1/\beta$, what is the maximum value of
\begin{align}
   \Delta \mathcal{I}=\Delta S\SA+\Delta S\SB
    \label{Ques:DeltaMuUuuuuuh}
\end{align}
for an invested energy of at most $\Delta E$?
\end{Questions}

The fact that thermal states maximise the local entropies under local energy constraints suggests that
the maximal value of $\Delta \mathcal{I}$ is always achieved when the final state marginals are thermal at a higher temperature $T\pr\geq T$ (or, equivalently, for $\beta\pr\leq \beta$), i.e., $\tilde{\varrho}\SA=\tau\SA(\beta')$ and $\tilde{\varrho}\SB=\tau\SB(\beta')$, provided that the corresponding STUs exist. This is indeed the case as shown in Appendix \ref{app:upperbound} and is what leads us to the question of the existence of STUs. While when $H\SA=H\SB$, one might still hope for STUs to exist for all dimensions, Hamiltonians and temperatures, such a statement cannot be made when $H\SA\neq H\SB$. Before introducing our framework for constructing STUs, let us therefore briefly discuss general bounds on the achievable local entropies.


\subsubsection{Asymmetric case}\label{sec:asymmetric case}

Let us consider Question~\ref{question:1} in the case $H\SA\neq H\SB$. Although the problem is in general complex, note that nontrivial constraints on marginal spectra for given global spectra exist in the form of entropy inequalities. In particular, both the von Neumann entropy $S(\varrho)$ and the R{\'e}nyi $0$-entropy $S_{0}(\varrho):=-\log_2(\text{rank}(\varrho))$ satisfy subadditivity~\cite{LiebRuskai1973, CadneyHuberLindenWinter2014}, while no other entropy satisfies (dimension independent) linear inequalities~\cite{LindenMosonyiWinter2013}. Since we are interested in thermodynamic questions, where the rank of the involved states is typically full, only subadditivity of the von Neumann entropy, which we can write as
\begin{align}
    S(\varrho\SAB) &\geq |S(\varrho\SA)-S(\varrho\SB)|,
    \label{triangle ineq}
\end{align}
provides a nontrivial constraint. Also observe that, given an initial state $\tau\SA(\beta)\otimes \tau\SB (\beta)$ subject to a unitary evolution, the entropy of the total system is fixed to $S(\varrho\SAB) = S(\tau\SA(\beta))+S(\tau\SB (\beta))$. If we assume that the final marginals are thermal states with equal higher temperature i.e., $\beta'\leq \beta$, inequality~(\ref{triangle ineq}) can be rewritten as
\begin{align}
  |S(\tau\SA(\beta^{\prime}))-S(\tau\SB (\beta^{\prime}))| &\leq\, S(\tau\SA(\beta))+ S(\tau\SB (\beta)).
\end{align}
This already implies a nontrivial constraint, and also provides a counterargument to the naive assumption that two thermal marginals at the same temperature are reachable in general.

While this makes the structure complicated to deal with in general, one can still solve the optimisation in some special cases. In particular, for a system initialised in the ground state, i.e., a product of two pure states $\tau\SAB (\beta \to \infty)= \ket{0\SA\,0\SB}\!\!\bra{0\SA\,0\SB}$, the total state remains pure during the unitary evolution and hence has a Schmidt decomposition. The final state can therefore be written as
\begin{align}
    \ket{\tilde{\psi}}  &= \sum_{i=0}^{d} \sqrt{p_i}\ket{\varphi\suptiny{0}{0}{A}_{i},  \varphi\suptiny{0}{0}{B}_{i}},
    \label{eq:schmidt decomp}
\end{align}
for some probabilities $p_i$, with $d= \mathrm{min}\{d\SA,d\SB\}$, and $d\SA$ and $d\SB$ the dimensions of the respective Hilbert spaces $\mathcal{H}\SA$ and $\mathcal{H}\SB$. And the final marginals are
\begin{equation}
\tilde{\varrho}\SA
=\sum_{i=0}^{d-1} p_i \ket{\varphi\suptiny{0}{0}{A}_{i}}\!\!\bra{\varphi\suptiny{0}{0}{A}_{i}},\ \ \
\tilde{\varrho}\SB=
\sum_{i=0}^{d-1} p_i \ket{\varphi\suptiny{0}{0}{B}_{i}}\!\!\bra{\varphi\suptiny{0}{0}{B}_{i}}.
\label{substateA}
\end{equation}
This already means that the marginals (since they have equal rank) cannot both be thermal states if $d\SA\neq d\SB$.

However, in this case it is still possible to maximise the amount of correlation subject to a given maximal amount of energy consumption. In other words, one can find optimal points of the following optimisation problem:
\begin{equation}
\label{eq:puremaxS}
\begin{aligned}
&\max_{\tilde{\varrho}\SA, \tilde{\varrho}\SB} S(\tilde{\varrho}\SA)+S(\tilde{\varrho}\SB)\\
&\text{subject to}\ \tr(\tilde{\varrho}\SA H\SA)+ \tr(\tilde{\varrho}\SB H\SB) \leq c\\
& \hphantom{\text{subject to}}\ \tilde{\varrho}\SA\ \text{and}\ \tilde{\varrho}\SB\ \text{as in Eq.~(\ref{substateA}).}
\end{aligned}
\end{equation}
To do so one first solves the (loosely speaking) inverse problem of minimising the energy consumption for a given amount of correlation. This problem can be turned into two instances of the well-known passivity problem~\cite{PuszWoronowicz1978}, for which passive states are the solutions. Using this fact, the problem reduces to a simple problem that can be solved using Lagrange multipliers. One then shows that the solution found for that problem is strictly monotonically increasing in the constraint, allowing us to reverse the problem and show that it is a solution of the problem in (\ref{eq:puremaxS}). For more details see Appendix~\ref{app:asymmetric pure state}. See also Ref.~\cite{PiccioneMilitelloNapoliBellomo2019} for an alternative approach to the energy minimisation problem. The solutions obtained for (\ref{eq:puremaxS}) are thermal states at inverse temperature $\beta(c)$ (uniquely defined by the constant $c$) of the Hamiltonians $\tilde{H}\SA$ and $\tilde{H}\SB$, respectively, defined as
\vspace*{-1mm}
\begin{equation}
\begin{aligned}
    \tilde{H}\SA&= \sum_{i=0}^{d-1} (E_{i}\suptiny{0}{0}{A}+E_{i}\suptiny{0}{0}{B}) \ket{i}\!\!\bra{i}\SA\\
    \tilde{H}\SB&= \sum_{i=0}^{d-1} (E_{i}\suptiny{0}{0}{A}+E_{i}\suptiny{0}{0}{B}) \ket{i}\!\!\bra{i}\SB,
\end{aligned}
\end{equation}
where
\vspace*{-1.5mm}
\begin{equation}
\begin{aligned}
    H\SA&= \sum_{i=0}^{d\SA-1} E_{i}\suptiny{0}{0}{A} \ket{i}\!\!\bra{i}\SA, \quad E_{i}\suptiny{0}{0}{A} \leq E_{i+1}\suptiny{0}{0}{A},\\
    H\SB&= \sum_{i=0}^{d\SB-1} E_{i}\suptiny{0}{0}{B} \ket{i}\!\!\bra{i}\SB, \quad E_{i}\suptiny{0}{0}{B} \leq E_{i+1}\suptiny{0}{0}{B}.\\
\end{aligned}
\end{equation}
The solutions hence take the form
\begin{equation}
\begin{aligned}
    \tilde{\varrho}_{\text{opt},\tiny{A}} (\beta(c)) &= \frac{\tilde{\Pi}\SA e^{-\beta(c) \tilde{H}\SA}\tilde{\Pi}\SA}{\tr(\tilde{\Pi}\SA e^{-\beta(c) \tilde{H}\SA})},\\
        \tilde{\varrho}_{\text{opt},\tiny{B}} (\beta(c)) &= \frac{\tilde{\Pi}\SB e^{-\beta(c) \tilde{H}\SB}\tilde{\Pi}\SB}{\tr(\tilde{\Pi}\SB e^{-\beta(c) \tilde{H}\SB})},
    \end{aligned}
\end{equation}
where $\tilde{\Pi}\subtiny{0}{0}{A/B}:= \sum_{i=0}^{d-1} \ket{i}\!\!\bra{i}\subtiny{0}{0}{A/B}$.
The maximal amount of correlation achievable given a maximal amount of energy $c$ is then
\begin{align}
    2 S(\tilde{\varrho}\SA) &= 2 \beta(c) c+ 2 \ln \left[ \tr(e^{-\beta(c) \tilde{H}\SA})\right].
    \label{finmut1}
\end{align}
\vspace*{-2mm}


\subsubsection{Symmetric case}\label{sec:preliminaries}
\vspace*{-2mm}

Let us now consider the symmetric case $H\SA=H\SB$. Following our considerations in the previous section and Appendix~\ref{app:upperbound}, it is clear that an upper bound for $\Delta\mathcal{I}$ is in this case achieved when $\Delta S\SA=\Delta S\SB$, and since the states with maximal entropy given a fixed energy are thermal states, Question~\ref{question:1} can be substantially simplified to that of the existence of symmetrically thermalizing unitaries.\\

\begin{Questions}{Existence of STUs}{2}
Does there exist a unitary $U\SAB$ on $\mathcal{H}\SAB$ such that
\begin{align}
   \tilde{\varrho}\SA &=\tr\SB\bigl(U\SAB\tau\SAB(\beta) U\SAB^{\dagger}\bigr)=\tau\SA(\beta\pr),\nonumber\\
    \tilde{\varrho}\SB &=\tr\SA\bigl(U\SAB\tau\SAB(\beta) U\SAB^{\dagger}\bigr)=\tau\SB(\beta\pr),
    \label{Ques:Uandmarginals}
\end{align}
for every pair of local Hamiltonians $H\SA=H\SB$, for all final temperatures $T\pr=1/\beta\pr$ and all initial temperatures $T=1/\beta\leq T\pr$?
\end{Questions}

If such STUs exist, then they are automatically the optimally correlating unitaries for a given system. It is already known that Question~\ref{question:1} can be answered affirmatively when the subsystem Hamiltonians 
are equally spaced, i.e., $H\SA=H\SB=\sum_{n=0}^{d-1}E_{n}\ket{n}\!\!\bra{n}$ with $E_{n+1}-E_{n}=\omega ~\forall n\in\{0,1,\ldots,d-1\}$ for some constant $\omega$ (with appropriate units)~\cite{HuberPerarnauHovhannisyanSkrzypczykKloecklBrunnerAcin2015} (see also the more recent formulation of the derivation in Ref.~\cite{McKayRodriguezEduardo2018}). In particular, this implies that such optimally correlating unitaries always exist for two qubits, i.e., when $d=2$. However, there is (yet) no answer for the general situation of arbitrary Hamiltonians and dimensions. To fill this gap, we present a framework that generalises previous approaches~\cite{HuberPerarnauHovhannisyanSkrzypczykKloecklBrunnerAcin2015} and allows investigating this general question in the symmetric cases for any dimension. This approach is based on unitary operations in LCSs, as we will explain in Sec.~\ref{subsec:ent subspace}, and enables us to show that STUs exist for the simplest nontrivial cases of two qutrits ($d=3$) and two ququarts ($d=4$).


\subsection{Unitary operations on locally classical subspaces}\label{subsec:ent subspace}


\begin{WideBoxes}{Unitary operations on locally classical subspaces for two qutrits}{Boxdim3}
For any tensor product of two identical states $\tau\SAB=\sum_{i,j=0}^{2}p_{ij}\,\ketbra{ij}{ij}$, we can separate the local diagonal elements (i.e., the eigenvalues, since the matrix is diagonal) through an economic notation where we denote the diagonal entries of the marginals as entries in vectors
\begin{equation}
    \vec{p}\SA:= \vec{\tau}\SA=
    \begin{pmatrix}
        \textcolor{blue!50!black}{p_{00}}+\textcolor{green!50!black}{p_{01}}+\textcolor{red!50!black}{p_{02}}\\
        \textcolor{blue!50!black}{p_{11}}+\textcolor{green!50!black}{p_{12}}+\textcolor{red!50!black}{p_{10}}\\
        \textcolor{blue!50!black}{p_{22}}+\textcolor{green!50!black}{p_{20}}+\textcolor{red!50!black}{p_{21}}\\
    \end{pmatrix}=\textcolor{blue!50!black}{\vec{q}}+\textcolor{green!50!black}{\vec{r}_1}+\textcolor{red!50!black}{\vec{r}_2}, \quad
    \vec{p}\SB:=\vec{\tau}\SB=
    \begin{pmatrix}
        \textcolor{blue!50!black}{p_{00}}+\textcolor{red!50!black}{p_{10}}+\textcolor{green!50!black}{p_{20}}\\
        \textcolor{blue!50!black}{p_{11}}+\textcolor{red!50!black}{p_{21}}+\textcolor{green!50!black}{p_{01}}\\
        \textcolor{blue!50!black}{p_{22}}+\textcolor{red!50!black}{p_{02}}+\textcolor{green!50!black}{p_{12}}\\
    \end{pmatrix}=\textcolor{blue!50!black}{\vec{q}}+\textcolor{red!50!black}{\Pi^2\,\vec{r}_2}+\textcolor{green!50!black}{\Pi\,\vec{r}_1},
\end{equation}
where $\Pi=(\Pi_{ij})$ is a cyclic permutation matrix with components $\Pi_{ij}=\,\delta_{i,j+1\operatorname{mod}3}$. Employing the unitary
$U\SAB=\textcolor{blue!50!black}{U_q}\oplus\textcolor{green!50!black}{U_{r_1}}\oplus \textcolor{red!50!black}{U_{r_2}}$, where $U_q$, $U_{r_{1}}$, and $U_{r_{2}}$ act unitarily on the subspaces
\begin{equation}
    \textcolor{blue!50!black}{\mathcal{H}_{q}:=\mathrm{span}\{ \ket{00},\ket{11},\ket{22}\}},\quad
    \textcolor{green!50!black}{\mathcal{H}_{r_1}:=\mathrm{span}\{\ket{01},\ket{12},\ket{20}\}},\quad
    \textcolor{red!50!black}{\mathcal{H}_{r_2}:=\mathrm{span}\{\ket{02}, \ket{10},\ket{21}\}},
\end{equation}
respectively, the transformation of the marginals can be described by unistochastic matrices
\begin{equation}
    \vec{\tilde{p}}\SA=\textcolor{blue!50!black}{M_q\,\vec{q}}+\textcolor{green!50!black}{M_{r_1}\vec{r}_1}+\textcolor{red!50!black}{M_{r_2}\vec{r}_2}, \quad
    \vec{\tilde{p}}\SB=\textcolor{blue!50!black}{M_{q}\vec{q}}+\textcolor{red!50!black}{\Pi^2\,M_{r_2}\vec{r}_2}+\textcolor{green!50!black}{\Pi \,M_{r_1}\vec{r}_1}.
\end{equation}
The components of the unistochastic matrices are determined by the moduli squared of corresponding unitary matrices, $(M_{\alpha})_{kl}=|(U_{\alpha})_{kl}|^{2}$ for $\alpha=q,r_{1},r_{2}$. Due to the symmetry $p_{ij}=p_{ji}$, we can further identify ${\vec{r}_2}=\Pi {\vec{r}_1}$ to obtain
\begin{equation}
    \vec{\tilde{p}}\SA=\textcolor{blue!50!black}{M_q\,\vec{q}}+\textcolor{green!50!black}{M_{r_1}\vec{r}_1}+\textcolor{red!50!black}{M_{r_2}\Pi \vec{r}_1}, \quad
    \vec{\tilde{p}}\SB=\textcolor{blue!50!black}{M_{q}\vec{q}}+\textcolor{red!50!black}{\Pi^2\,M_{r_2}\Pi\,\vec{r}_1}+\textcolor{green!50!black}{\Pi\,M_{r_1}\vec{r}_1}.
\end{equation}
\end{WideBoxes}

Let us now discuss our general framework for marginal transformations. The central idea of this approach is to decompose the diagonal elements of the marginals into elements originating from different subspaces, with the property that any unitary that leaves the division into these subspaces invariant never creates local off-diagonal elements.

More precisely, we consider a pair of $d$-dimensional systems $A$ and $B$ with matching local Hamiltonians $H\SA=H\SB=\sum_{i=0}^{d-1}E_{i}\ket{i}\!\!\bra{i}$, where we set $E_{0}=0$ without loss of generality. Moreover, let the energy eigenvalues, sorted in non-decreasing order (with $E_{i+1}\geq E_{i}\ \forall\,i$), be measured in units of $E_{1}$. The initial thermal states of $A$ and $B$ are 
\begin{align}
    \tau\SA(\beta) &=\,\tau\SB(\beta)\,
    =\,\sum_{i=0}^{d-1}\,p_{i}(\beta)\,\ket{i}\!\!\bra{i},
    \label{eq:thermal state}
\end{align}
where $p_{i}(\beta)=e^{-\beta E_i}/\mathcal{Z}(\beta)$ and $\mathcal{Z}(\beta)=\sum_{i}e^{-\beta E_{i}}$. For convenience, we use the shorthand $p_{i}\equiv p_{i}(\beta)$. The joint initial state $\tau\SAB=\tau\SA(\beta)\otimes\tau\SB (\beta)$ is also diagonal with respect to the energy eigenbasis and its diagonal entries are products of the diagonal entries of the reduced states, i.e., we write $p_{ij}:=p_{i}p_{j}$ such that $\tau\SAB=\sum_{i,j=0}^{d-1}p_{ij}\,\ketbra{ij}{ij}$. Since the unitaries that we use throughout the manuscript do not create coherence in the marginals, it is convenient to introduce a vectorised notation for the diagonal entries of the reduced states. That is, for arbitrary diagonal joint states $\vec{p}\SAB=\diag\{p_{ij}\}_{i,j=0}^{d-1}$ where the $p_{ij}$ need not factorise with respect to $i$ and $j$, the reduced states $\varrho\SA=\tr\SB(\varrho\SAB)$ and $\varrho\SB=\tr\SA(\varrho\SAB)$ are diagonal, and the diagonal entries can be collected into vectors $\vec{p}\SA,\vec{p}\SB\in\mathbb{R}^{d}$ with nonnegative components and unit $1$-norm, $\dv \vec{p}\SA\dv=\tr(\varrho\SA)=1=\dv \vec{p}\SB\dv$.


\subsubsection{Locally classical subspaces}\label{sec:LCSs}

Having expressed the diagonals of the marginals in this vectorised form, we further choose particular subspace vector decompositions, as we illustrate for $d=3$ in Box~\ref{boxbox:Boxdim3}. For general local dimension $d\geq3$, we write $\vec{p}\SA$ and $\vec{p}\SB$ as sums of $d$ vectors according to
\begin{align}
    \vec{p}\SA \,=\,\sum_{i=0}^{d-1}\vec{r}\suptiny{0}{0}{A}_{i}
    \,=\,\sum_{i=0}^{d-1}\vec{r}_{i},
    \ \ \
    \vec{p}\SB \,=\,\sum_{i=0}^{d-1}\vec{r}\suptiny{0}{0}{B}_{i}=\sum_{i=0}^{d-1}\Pi^i\,\vec{r}_{i},
    \label{deco. marginal}
\end{align}
where $\vec{r}\suptiny{0}{0}{A}_{i} =\vec{r}_{i}= \sum_{j=0}^{d-1} p_{j\, j+i}\,\vec{e}_{j}$ and $\vec{r}\suptiny{0}{0}{B}_{i}=\sum_{j=0}^{d-1} p_{j-i\, j}\,\vec{e}_{j}$. Here, $\{\vec{e}_{j}\}_{j=0}^{d-1}$ denotes the chosen orthonormal basis of $\mathbb{R}^{d}$ and all indices are to be taken modulo $d$. In the second equality, we have further used the fact that the $i$-th vector $\vec{r}\suptiny{0}{0}{B}_{i}$ in the decomposition of $\vec{p}\SB$ can be related to the $i$-th vector $\vec{r}\suptiny{0}{0}{A}_{i}$ in the decomposition of $\vec{p}\SA$ via the $i$-th power of the cyclic permutation matrix
$\Pi=(\Pi_{ij})$ with components $\Pi_{ij}=\,\delta_{i,j+1\operatorname{mod}d}$.

One further observes that the decomposition of $\vec{p}\SA$ and $\vec{p}\SB$ into these vectors corresponds to the selection of a total of $d$ subspaces $\mathcal{H}_{q}$ and $\{\mathcal{H}_{r_i}\}_{i=1}^{d-1}$ of the joint Hilbert space $\mathcal{H}\SAB$,
\begin{align}
    \mathcal{H}_{q} := \mathcal{H}_{r_0}=\,\operatorname{span}\{\ket{j\,j}\}_{j=0}^{d-1},\ \ \
    \mathcal{H}_{r_i} =\,\operatorname{span}\{\ket{j\,j+i}\}_{j=0}^{d-1},
    \label{eq:subspace dec}
\end{align}
with $\mathcal{H}\SAB=\mathcal{H}_{q}\oplus_{i=1}^{d-1}\mathcal{H}_{r_i}$, such that arbitrary unitaries $U_{q}$ and $U_{r_i}$ applied in either of the $d$ subspaces cannot lead to nonzero off-diagonal elements in the reduced states $\varrho\SA$ or $\varrho\SB$, provided that none are present to begin with. In other words, any state on $\mathcal{H}\SAB$ that has support on only one of these subspaces, or which has a block-diagonal structure with respect to this subspace decomposition, is locally classical, i.e., has diagonal marginals with respect to the local bases $\{\ket{j}\}_{j=0}^{d-1}$. We hence call these subspaces \emph{locally classical}. However, note that the choice of LCSs we make here is not unique, as discussed below in Sec.~\ref{sec:general LCSs}.

A unitary transformation $\varrho\SAB\mapsto\tilde{\varrho}\SAB=U\varrho\SAB U^{\dagger}$ with $U=U_{q}\oplus_{i=1}^{d-1} U_{r_i}$ hence does not preclude off-diagonal elements from appearing in the joint state $\tilde{\varrho}\SAB$ but leaves the reduced states diagonal, implying that we can directly read off the spectra of the marginals. In such a case, it is still useful to describe the marginals by $d$-component vectors $\vec{\tilde{p}}\SA$ and $\vec{\tilde{p}}\SB$ collecting their diagonal elements, and the transformation of these vectors can be represented as
\begin{align}
    \vec{p}\SA &\mapsto\,\tilde{\vec{p}}\SA\,=\,M_{q}\vec{q}+\sum_{i=1}^{d-1}M_{r_i}\vec{r}_{i},\label{eq: gen evo. marginal A }\\
    \vec{p}\SB &\mapsto\,\tilde{\vec{p}}\SB\,=\,M_{q}\vec{q}+\sum_{i=1}^{d-1}\Pi^i M_{r_i}\vec{r}_{i},
    \label{gen. evolution marginal}
\end{align}
where $\vec{q}=\vec{r}\suptiny{0}{0}{A}_{0}=\vec{r}\suptiny{0}{0}{B}_{0}$, and $M_{q}$ and $M_{r_i}$ are unistochastic $d\times d$ matrices, i.e., doubly stochastic (entries in rows and columns sum to $1$) square matrices $M_{\alpha}$ (with $\alpha\in\{q,r_i\}$) whose components can be understood as the moduli squared of unitary matrices, $(M_{\alpha})_{kl}=|(U_{\alpha})_{kl}|^{2}$.

As a technical remark, note that the Schur-Horn theorem implies a weak convexity property for unistochastic matrices, namely that for any $\vec{v}\in\mathbb{R}^{d}$, the set of vectors obtained by applying the set of unistochastic $d\times d$ matrices to $\vec{v}$ is equivalent to the set of vectors obtained by applying the set of doubly stochastic $d\times d$ matrices to $\vec{v}$, see Ref.~\cite{BengtssonEricssonKusTadejZyczkowski2005}. This means that for any doubly stochastic matrix $M$ and vector $\vec{v}$, there exists a unistochastic matrix $M_{\vec{v}}$ such that $M\,\vec{v}=M_{\vec{v}}\vec{v}$. Here, this property permits us to conclude that the vectorised marginals of any state reachable by unitaries of the form $U=U_{q}\oplus_{i=1}^{d-1} U_{r_i}$ can be written in terms of the action of doubly stochastic matrices $M_{q}$ and $M_{r_{i}}$ on $\vec{q}$ and $\vec{r}_{i}$, respectively.

In general, Eq.~(\ref{eq: gen evo. marginal A }) can be rewritten as
\begin{align}
    \tilde{\vec{p}}\SA  &=
    \begin{cases}
        M_{q}\vec{q} + \sum\limits_{i=1}^{k} (M_{r_{i}}\vec{r}_{i} + M_{r_{d-i}}\vec{r}_{d-i})
        & d\ \text{odd}\\[2mm]
        M_q\vec{q} + \sum\limits_{i=1}^{k-1} (M_{r_{i}}\vec{r}_{i} + M_{r_{d-i}}\vec{r}_{d-i})+M_{r_{d/2}} \vec{r}_{d/2}
        & d\ \text{even}
    \end{cases},
    \label{gen evol marginal A}
\end{align}
where $k=\tfrac{d-1}{2}$ if $d$ is odd and $k=\tfrac{d}{2}$ if $d$ is even. Further taking into account the symmetry $p_{ij} = p_{ji}$, which implies $\vec{r}_{d-i}= \sum_{j=0}^{d-1} p_{j\, j+d-i} \Pi^{i}\vec{e}_{j-i}=\Pi^{i} \vec{r}_{i}$, the vectorised form of the marginal $A$ can be written in a compact way as
\begin{align}
    \tilde{\vec{p}}\SA &=\,M_{q}\vec{q} + \sum\limits_{i=1}^{k} (\lfloor\tfrac{2i}{d}\rfloor+1)^{-1}(M_{r_i} + M_{r_{d-i}}\Pi^i)\vec{r}_{i},
    \label{eq:gen evo. marg. A 1}
\end{align}
where $\lfloor x \rfloor$ denotes the floor function of $x$, and the prefactor $(\lfloor\tfrac{2i}{d}\rfloor+1)^{-1}$ is equal to $1$ unless $d$ is even and $i=k$, in which case it is $\tfrac{1}{2}$. Using  Eq.~(\ref{gen. evolution marginal}), one may obtain the transformed marginal $B$ as
\begin{align}
    \tilde{\vec{p}}\SB & = M_q\vec{q} + \sum\limits_{i=1}^{k} (\lfloor\tfrac{2i}{d}\rfloor+1)^{-1} (\Pi^i M_{r_i} + \Pi^{-i}M_{r_{d-i}}\Pi^i)\, \vec{r}_{i},
\end{align}
where we have used the property $\Pi^{-i}=\Pi^{d-i}$ of the $d$-dimensional cyclic permutation. To satisfy the conditions of Question~\ref{question:2}, we need to further restrict ourselves to a subgroup of transformations which generate the same marginals, $\tilde{\vec{p}}\SA=\tilde{\vec{p}}\SB$. This requirement results in the condition
\begin{align}
    M_{r_{d-i}}=\Pi^{i} M_{r_i} \Pi^{-i}\ \ \ \ \text{for}\ \ \ i=1,\dots,k,
\end{align}
for the doubly stochastic matrices mentioned in Eq.~(\ref{eq:gen evo. marg. A 1}).
With this, we arrive at the form
\begin{align}
    \vec{\tilde{p}}=\vec{\tilde{p}}\SA=\vec{\tilde{p}}\SB
    &=\, M_q \vec{q} + \sum_{i=1}^{k} (\lfloor\tfrac{2i}{d}\rfloor+1)^{-1} (\mathds{1}+ \Pi^i)M_{r_i} \vec{r}_i
    \label{eq:EqualEvoMarginal C}
\end{align}
for both vectorised marginals. Now that we have ensured that both marginals transform in the same way, we may concentrate on one of them, say $\tilde{\vec{p}}\SA$ (dropping the subscript $A$ for ease of notation) and use the established framework to investigate the existence of STUs as specified in Question~\ref{question:2}.


\subsubsection{General locally classical subspaces}\label{sec:general LCSs}

While we focus here on the decomposition into the specific LCSs from Eq.~(\ref{eq:subspace dec}), we note that this is by far not the only option. Indeed, let $\{\mathcal{P}_{i}\}_{i=0,\dots,d-1}$ be a set of permutations of $d$ elements $j=0,\dots,d-1$. Then a sufficient condition for obtaining an LCS decomposition $\mathcal{H}\SAB=\bigoplus_{i=0}^{d-1}\tilde{\mathcal{H}}_{\tilde{r}_i}$ into LCSs of the form $\tilde{\mathcal{H}}_{\tilde{r}_i}=\operatorname{span}\{\ket{j\,\mathcal{P}_{i}(j)}\}_{j=0}^{d-1}$ is that the $d\times d$ matrix $\Gamma$ with components $\Gamma_{ij}=\mathcal{P}_{i}(j)$ is a Latin square, i.e., every entry $j=0,\dots,d-1$ appears exactly once in each row and in each column. For every such choice of LCS decomposition, one may then apply unitaries $\tilde{U}=\bigoplus_{i=0}^{d-1}U_{\tilde{r}_i}$ that leave the LCSs invariant, i.e., where $U_{\tilde{r}_i}$ acts nontrivially only on the subspace $\tilde{\mathcal{H}}_{\tilde{r}_i}$. Denoting the corresponding vector decomposition of the first vectorised marginal as $\vec{p}\SA=\sum_{i=0}^{d-1}\tilde{\vec{r}}_{i}$, with $(\tilde{\vec{r}}_{i})_{j}=p_{j\mathcal{P}_{i}(j)}$, we have $\vec{p}\SB=\sum_{i=0}^{d-1}\mathcal{P}_{i}^{-1}\tilde{\vec{r}}_{i}$ since $(\vec{r}_{i}\suptiny{0}{0}{B})_{j}=p_{\mathcal{P}^{-1}_{i}(j)\nr j}=(\mathcal{P}_{i}^{-1}\tilde{\vec{r}}_{i})_{j}$. In this case, unitaries that leave the LCSs invariant transform the vectorised marginals according to
\vspace*{-2mm}
\begin{align}
    \vec{p}\SA &\mapsto\,\vec{\tilde{p}}\SA \,=\,\sum_{i=0}^{d-1}M_{\tilde{r}_i}\tilde{\vec{r}}_{i},
    \label{eq: gen evo. marginal A general}\\
    \vec{p}\SB &\mapsto\,\vec{\tilde{p}}\SB \,=\,\sum_{i=0}^{d-1}\mathcal{P}^{-1}_{i} M_{\tilde{r}_i}\tilde{\vec{r}}_{i},
\label{gen. evolution marginal general}
\end{align}
where the matrices $M_{\tilde{r}_i}$ are the unistochastic matrices corresponding to the unitaries $U_{\tilde{r}_i}$. This implies the existence of symmetric marginal transformations, e.g., for unistochastic matrices $M$ that commute with all the $\mathcal{P}_{i}$, such that the marginals take the form $M\vec{p}\SA =M\vec{p}\SB=\vec{\tilde{p}}$. For the remainder of this work we focus again on the specific special case we consider in Eq.~(\ref{eq:subspace dec}), where $\mathcal{P}_{i}=(\Pi^{-1})^{i}$.


\subsubsection{Asymptotic case}\label{subsubsec: asymcase}

Before we go further into detail on trying to answer Question~\ref{question:2}, let us briefly showcase that this is indeed a problem connected to the finite size of the system. More specifically, we can consider a scenario where one wishes to correlate multiple copies of the initial thermal states $\tau\SA(\beta)$ and $\tau\SB(\beta)$ via a joint unitary, such that the final state of $n$ copies is
$U\tau\SAB(\beta)^{\otimes n}U^{\dagger}$, with $\tau\SAB(\beta)^{\otimes n}=\tau\SA(\beta)^{\otimes n}\otimes\tau\SB(\beta)^{\otimes n}$. As we will show now, STUs such that $\tilde{\varrho}\SA=\tilde{\varrho}\SB=\tau(\beta\pr)$ exist for all $\beta\pr$, $\beta$ such that $\beta\pr\leq\beta$, and for all local Hamiltonians in the limit of infinitely many copies, $n\rightarrow\infty$.

To see this, note that for any $n$ we can find a unitary such that the marginals $\varrho{\ensuremath{_{\hspace{-1pt}\protect\raisebox{0pt}{\tiny{$A/B$}}}}}= \text{Tr}{\ensuremath{_{\hspace{-1pt}\protect\raisebox{0pt}{\tiny{$B/A$}}}}}( U\tau\SA(\beta)^{\otimes n}\otimes\tau\SB(\beta)^{\otimes n}U^{\dagger} )$ are passive states whose entropy equals that of the thermal state with the target temperature, i.e., $S(\varrho{\ensuremath{_{\hspace{-1pt}\protect\raisebox{0pt}{\tiny{$A/B$}}}}})=S(\tau{\ensuremath{_{\hspace{-1pt}\protect\raisebox{0pt}{\tiny{$A/B$}}}}}(\beta\pr))$. This is a consequence of Eq.~\eqref{eq:EqualEvoMarginal C} and the continuity of the von~Neumann entropy. More specifically, note that a trivial way of obtaining marginals with the same final spectrum is to select $M_{q}$ and all $M_{r_{i}}$ in Eq.~\eqref{eq:EqualEvoMarginal C} to be circulant, i.e., convex combinations of cyclic permutations. Since these matrices commute with $\Pi^{i}$, one may reach any marginal whose vectorised marginal is any cyclic permutation of the original marginal vector, or indeed, any convex combination of cyclic permutations of the original marginal vector. In particular, the equally weighted convex combination of all cyclic permutations yields maximally mixed marginals, $\tilde{\vec{p}}\SA=\tilde{\vec{p}}\SB=\tfrac{1}{d}\sum_{i=0}^{d-1}\vec{e}_{i}$, corresponding to maximal local entropy $\beta\pr\rightarrow0$. The convex structure of the set of vectors reachable from a given vector via application of unistochastic matrices then suggests that the points corresponding to $\vec{p}\SA$ and $\tilde{\vec{p}}\SA$ are connected by a continuous line of states reachable from $\vec{p}\SA$ via circulant unistochastic matrices. The entropy must vary continuously along this line from the initial value $S\bigl(\tau\SA(\beta)\bigr)$ to $S\bigl(\varrho\SA\bigr)=S\bigl(\tau\SA(\beta\pr)\bigr)$. Thus, one may obtain marginals $\varrho\SA=\varrho\SB$ that are passive states with the desired entropy, but which might have a different spectra than $\tau\SA(\beta\pr)=\tau\SB(\beta\pr)$.

Given this fact, it is sufficient to note that for $n\rightarrow \infty$ we can convert passive states with a given entropy $S(\varrho)$ into thermal states with the same entropy only by using local operations~\cite{PuszWoronowicz1978}. Because the application of local operations to the subsystems leaves the mutual information $\mathcal{I}(\varrho\SAB)$ invariant, this consequently proves the existence of STUs for the asymptotic case $n\rightarrow \infty$. Further discussion of the case of finitely many copies is presented in Appendix~\ref{passive}.


\section{Optimally correlating unitaries for bipartite states with matching Hamiltonians}\label{sec:two-qutrit case}

In this section, we present three approaches to construct STUs for bipartite systems in an (uncorrelated) initial thermal state at temperature $T=1/\beta$ of $H=H\SA+H\SB$ with matching Hamiltonians $H\SA=H\SB$. We focus on bipartite $3\times 3$ and $4 \times 4$-dimensional systems, and show explicit constructions for such STUs, thereby proving that Question~\ref{question:2} can be answered affirmatively for $d=3$ and $d=4$. In the $d=3$ case, we show the existence of STUs via all three alternative approaches as a basis for generalisations to higher dimensions. We then turn to the particular case of dimension $d=4$, highlighting the challenges and strengths encountered for each of these methods, before proving the $d=4$ case using a geometric argument.

Concretely, referring to Sec.~\ref{subsec:ent subspace} as our framework, we start from Eq.~(\ref{eq:EqualEvoMarginal C}), which provides a transformations of the reduced states that leaves both marginals equal and diagonal with respect to the chosen basis. In general, $M_q$ and all the $M_{r_i}$ are arbitrary doubly stochastic matrices, and therefore one can try to make use of majorisation conditions to prove the existence of STUs via the Hardy-Littlewood-P{\'o}lya (HLP) theorem~\cite[p.~75]{HardyLittlewoodPolya1952} or~\cite[p.~33]{MarshallOlkinArnold2011}. The HLP theorem states that a necessary and sufficient condition that $y \succ x$ is that there exist a doubly stochastic matrix $M$ such that $x = My$. This, in fact, is the first argument that we use below for $d=3$, where it allows proving a statement that is actually even stronger than what is required for STUs. However, the theorem that we are going to prove holds only for the $d=3$ case.


\subsection{Majorised marginals approach}\label{sec:commutingM}

In the particular case of $d=3$ the final expression for the (equal) marginals from Eq.~(\ref{eq:EqualEvoMarginal C}) is
\begin{align}
    \vec{\tilde{p}}=\vec{\tilde{p}}\SB\,=\vec{\tilde{p}}\SA\,=\,M_{q}\vec{q}+(\openone+\Pi)M_{r}\vec{r},
    \label{eq:gen. trans. 3d}
\end{align}
where, as in Eq.~(\ref{gen. evolution marginal}), we have introduced the vectors $\vec{q}$ and $\vec{r}$, with components
\begin{align}
    \vec{q} &=\,\begin{pmatrix} p_{00},\ p_{11},\ p_{22} \end{pmatrix}^T,\ \
    \vec{r}=\,\begin{pmatrix} p_{01},\ p_{12},\ p_{20} \end{pmatrix}^T.
    \label{vecqrs}
\end{align}
Based on the HLP theorem and Eq.~(\ref{eq:gen. trans. 3d}), we are going to prove the existence of STUs in $d=3$ via the following lemma:

\begin{lemma}\label{lemma:swappingM}
For any $3\times 3$ doubly stochastic matrix $M$, there exists a $3\times 3$ doubly stochastic matrix $\tilde{M}$ such that
\begin{equation}\label{eq:commutingM}
    M (\mathds{1}+\Pi)=(\mathds{1}+\Pi) \tilde{M}.
\end{equation}
\end{lemma}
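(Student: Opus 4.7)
The plan is to exploit the invertibility of $\mathds{1}+\Pi$ to turn Eq.~(\ref{eq:commutingM}) into an explicit formula for $\tilde{M}$, and then use Birkhoff's theorem to reduce non-negativity of its entries to a finite check. First I would observe that, because $\Pi^{3}=\mathds{1}$, the identity $(\mathds{1}+\Pi)(\mathds{1}-\Pi+\Pi^{2})=\mathds{1}+\Pi^{3}=2\mathds{1}$ gives $(\mathds{1}+\Pi)^{-1}=\tfrac{1}{2}(\mathds{1}-\Pi+\Pi^{2})$. Consequently Eq.~(\ref{eq:commutingM}) is equivalent to the unique relation $\tilde{M}=(\mathds{1}+\Pi)^{-1}M(\mathds{1}+\Pi)$, and the whole task reduces to showing that this specific matrix is doubly stochastic.

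The row- and column-sum conditions fall out immediately, since the all-ones vector $\vec{1}$ is an eigenvector of $\mathds{1}+\Pi$ with eigenvalue $2$ from both sides. Double stochasticity of $M$ then propagates through the similarity: $\tilde{M}\vec{1}=(\mathds{1}+\Pi)^{-1}(2M\vec{1})=\vec{1}$, and analogously $\vec{1}^{T}\tilde{M}=\vec{1}^{T}$.

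The non-trivial step is non-negativity of the entries of $\tilde{M}$: since $(\mathds{1}+\Pi)^{-1}$ itself has negative entries, expanding the product gives nothing that is manifestly non-negative. To handle this, I would invoke Birkhoff's theorem and write $M=\sum_{i}\lambda_{i}P_{i}$ as a convex combination of permutation matrices $P_{i}\in S_{3}$. By linearity, $\tilde{M}=\sum_{i}\lambda_{i}\tilde{P}_{i}$ with $\tilde{P}_{i}=(\mathds{1}+\Pi)^{-1}P_{i}(\mathds{1}+\Pi)$, so it suffices to verify that each $\tilde{P}_{i}$ is doubly stochastic. The six elements of $S_{3}$ split into two classes: the three powers of $\Pi$ commute with $\mathds{1}+\Pi$ and therefore satisfy $\tilde{P}=P$ trivially; for the three transpositions $P_{01},P_{02},P_{12}$, a short direct calculation (reading off $P\Pi$ as a permutation composition) shows that $\tilde{P}$ is again a transposition, namely $\tilde{P}_{01}=P_{12}$, $\tilde{P}_{12}=P_{02}$, and $\tilde{P}_{02}=P_{01}$. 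Hence each $\tilde{P}_{i}$ is itself a permutation matrix, and $\tilde{M}$ is a convex combination of such, i.e., doubly stochastic, as required.

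The main obstacle is really the transposition check: that conjugation by $\mathds{1}+\Pi$ preserves the set of $3\times 3$ permutation matrices is an accident of the small dimension, consistent with the authors' remark that Lemma~\ref{lemma:swappingM} applies only in the $d=3$ case. Had the image of a transposition failed to land back in the permutation set, one would instead have to argue non-negativity through a more delicate analysis of the full matrix entries of $\tfrac{1}{2}(\mathds{1}-\Pi+\Pi^{2})M(\mathds{1}+\Pi)$, which has no obviously positive structure and which presumably fails for $d\geq 4$, making the subsequent switch to geometric arguments in Sec.~\ref{sec:two-qutrit case} unavoidable.
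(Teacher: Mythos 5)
Your proof is correct and follows essentially the same route as the paper's: both reduce to permutation matrices via Birkhoff's theorem, note that the cyclic permutations commute with $\mathds{1}+\Pi$, and verify that the three transpositions intertwine with $\mathds{1}+\Pi$ by being mapped to one another (your relations $\tilde{P}_{01}=P_{12}$, $\tilde{P}_{12}=P_{02}$, $\tilde{P}_{02}=P_{01}$ are exactly the paper's Eq.~(\ref{eq:comm. relation 3})). Your additional observation that $\mathds{1}+\Pi$ is invertible in $d=3$, so that $\tilde{M}=(\mathds{1}+\Pi)^{-1}M(\mathds{1}+\Pi)$ is uniquely determined, is a nice refinement but does not change the substance of the argument.
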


The proof of Lemma~\ref{lemma:swappingM} is presented in Appendix~\ref{App:Proof:lemma:swappingM}. We are now ready to state the first theorem for $d=3$.\\

\begin{Theorems}{Majorised marginals in $d=3$}{2}
For every pair of states $\varrho$ and $\tilde{\varrho}$ in a $3$-dimensional Hilbert space which satisfy the condition $\vec{\lambda}(\tilde{\varrho})\prec \vec{\lambda}({\varrho})$, where $\vec{\lambda}(\varrho)$ is the vector of eigenvalues of $\varrho$, there exists a unitary $U\SAB$ on $\mathcal{H}\SAB$ such that
\begin{equation}
\begin{aligned}
   \tilde{\varrho}_{\hspace{-1pt}\protect\raisebox{0pt}{\tiny{$A$}}} &=\tr_{\hspace{-1pt}\protect\raisebox{0pt}{\tiny{$B$}}}\bigl(U\SAB\varrho \otimes \varrho U\SAB^{\dagger}\bigr)= \tilde{\varrho},\nonumber\\[1mm]
     \tilde{\varrho}_{\hspace{-1pt}\protect\raisebox{0pt}{\tiny{$B$}}}&=\tr_{\hspace{-1pt}\protect\raisebox{0pt}{\tiny{$A$}}}\bigl(U\SAB\varrho \otimes \varrho U\SAB^{\dagger}\bigr)=\tilde{\varrho}\nonumber.
\end{aligned}
\end{equation}
\end{Theorems}

\textbf{Proof.}\ The unitary is given by $U\SAB=U_{\rm loc}U_{\rm ent}$, i.e., by a product of an entangling unitary $U_{\rm ent}$ which acts on the global system preserving the equal marginals as discussed in Sec.~\ref{subsec:ent subspace}, and a local unitary of the form $U_{\rm loc}=U\otimes U$, such that the marginals are kept equal. From Lemma~\ref{lemma:swappingM} we then see that, given any doubly stochastic matrix $M_q$, one can find $M_r$ such that $M_q (\mathds{1} + \Pi)= (\mathds{1}+\Pi) M_r$ and consequently a mapping to each probability vector $\vec{\tilde{p}}$ such that
\begin{align}
  \vec{\tilde{p}}&= M_q \vec{q}+ (\mathds{1}+\Pi) M_r \vec{r}
  \,=\,M_q\, [ \vec{q}+ (\mathds{1}+\Pi)\vec{r}]=M_q \, \vec{p}.
\end{align}
From the HLP theorem, we thus know that it is possible to produce all the vectors $\vec{\tilde{p}}$ that are majorised by the vector $\vec{p}$ corresponding to the initial state. Thus, we proved that through the entangling unitary, all the marginals ${\varrho^\prime}=\sum_{i=0}^{2}\tilde{p}_i \ket{i}\!\!\bra{i}$ such that $\vec{\tilde{p}}\prec \vec{ p}$ can be reached from the initial state. Then, if we can also arbitrarily change their eigenstates in a symmetric way, Theorem~\ref{theorem:2} is proven. For that, we use a local unitary $U \otimes U$ with $U =\sum_{i=0}^2 \ket{\phi_i}\!\!\bra{i}$ to change the eigenstates of the marginal to an arbitrary set of eigenstates $\{ \ket{\phi_i}\}_{i=0}^{2}$, such that finally the marginals become $\tilde{\varrho}=\sum_{i=0}^{2}\tilde{p}_i \ket{\phi_i}\!\!\bra{\phi_i}$ in which the set $\{\ket{\phi_i}\}$ can be any orthonormal basis in $d=3$.\qed

As a corollary of Theorem~\ref{theorem:2}, the existence of STUs is proven for the two-qutrit case for initial and final thermal states with inverse temperatures $\beta$ and $\beta'\leq \beta$, respectively, since $\vec{p}(\beta\pr)\prec \vec{ p}(\beta)$ holds whenever $\beta' \leq\beta$.

Unfortunately, generalisations of Lemma~\ref{lemma:swappingM} to higher dimensions fail, as we explain in more detail in Appendix~\ref{sec:generalisations}. Despite the general statement of Theorem~\ref{theorem:2}, any attempts at generalisations must hence be based on a different approach, which does not require the existence of a doubly stochastic matrix $\tilde M$ as in Eq.~(\ref{eq:commutingM}) for all doubly stochastic matrices $M$.


\subsection{Alternative approach: ``passing on the norm''}\label{sec:3_passing}

Here, we explore another possible approach that makes use of the HLP theorem and tries to overcome the difficulties encountered for generalising the previous approach. Let us first discuss the method for two $d$-dimensional systems, before turning to the special cases $d=3$ and $d=4$.

To begin, we once again employ the transformation presented in Eq.~(\ref{eq:EqualEvoMarginal C}) to generate equal marginals. Recalling that the marginals of our final target state are thermal states at inverse temperature $\beta'<\beta$, we decompose the final state vector into the form
\begin{align}
    \vec{\tilde{p}} &= \vec{a} + \sum_{i=1}^{k} (\lfloor\tfrac{2i}{d}\rfloor+1)^{-1} (\mathds{1}+ \Pi^i) \vec{b}_{i},
    \label{eq:EqualEvoMarginal Cprime}
\end{align}
with $\vec{a}:= \vec{q}(\beta^\prime)= \sum_{j=0}^{d-1} p_{j\nr j}(\beta^\prime)\,\vec{e}_{j}$ and $\vec{b}_i:=\, \vec{r}_i(\beta^\prime)= \sum_{j=0}^{d-1} p_{j\nr j+i}(\beta^\prime)\,\vec{e}_{j}$. Further, recall that the initial states have the same decomposition with inverse temperature $\beta$. One way of transforming $\vec{p}$ into $\vec{\tilde{p}}$ is to transform $\vec{q}$ into $\vec{a}$ and $\vec{r}_i$ into $\vec{b}_i$ for all $i=1,\dots,k$. Looking at Eq.~(\ref{eq:EqualEvoMarginal C}), one realises that this is possible if
\begin{align}
    M_q \vec{q} &=\vec{a},\label{eq:intuitivePassNorm}\\[1mm]
    M_{r_i} \vec{r}_i   &=\vec{b}_i, \quad i=1,\dots,k. \label{eq:intuitivePassNorm2}
\end{align}
Unfortunately, for any $\beta'<\beta$ we have $\lVert \vec{q} \rVert > \lVert \vec{a} \rVert$, where $\lVert \cdot \rVert$ denotes the $1$-norm, see Appendix~\ref{app:PassNorm}, and since doubly stochastic transformations conserve the norm of vectors, Eq.~(\ref{eq:intuitivePassNorm}) cannot hold true. The norms of the $\vec{r}_i$ also generically differ from those of the $\vec{b}_i$, meaning Eq.~(\ref{eq:intuitivePassNorm2}) does not hold either. However, it is not clear whether $\lVert \vec{r}_i(\beta) \rVert $ is a monotonic function of $\beta$, and so $\lVert \vec{r}_i \rVert$ may in principle be smaller or larger than $\lVert \vec{b}_i \rVert$, depending on $\beta$, $\beta'$ and $i$. Loosely speaking, the vector $\vec{q}$ has `too much norm'. Intuitively, one may thus make use of the excessive norm of $\vec{q}$ by transforming the required `amount' of $\vec{q}$ into $\vec{a}$, and redistributing its excessive norm to the rest of $\tilde{\vec{p}}$, or in other words by passing the excessive norm of $\vec{q}$ on to the rest of $\tilde{\vec{p}}$, giving rise to the name of the approach. Formally, one achieves this by splitting $M_q$ into a convex combination of two doubly stochastic matrices, i.e.,
\begin{equation}
    M_q= \alpha_0 M_{q \rightarrow a}+ \alpha_1 M_{q \rightarrow b},
\end{equation}
where $\alpha_{0}, \alpha_{1} \geq 0$, $\alpha_0 +\alpha_1 =1$, and $M_{q \rightarrow a}$ and $M_{q \rightarrow b}$ are doubly stochastic matrices such that
\begin{align}
   M_{q \rightarrow a} \frac{\vec{q}}{\lVert \vec{q} \rVert}&= \frac{\vec{a}}{\lVert \vec{a} \rVert},\\
   M_{q \rightarrow b} \frac{\vec{q}}{\lVert \vec{q
   }\rVert}&= \vec{f}(\vec{b}_{1}, \dots, \vec{b}_{k}),
\end{align}
where $\vec{f}$ depends on the vectors $\vec{b}_i$. The main idea of this approach is that if one chooses $\vec{f}$ and $M_{q \rightarrow b}$ in an appropriate way, the $M_{r_i}$ may potentially be chosen such that
\begin{equation}
    M_{r_i} \frac{\vec{r}_i}{\lVert \vec{r}_i \rVert} = \frac{\vec{b}_i}{\lVert \vec{b}_i \rVert}, \quad i=1,\dots,k,
\end{equation}
in order for $\vec{p}$ to transform into $\vec{\tilde{p}}$ as desired. One way of `passing on the norm' of $\vec{q}$ in this way, i.e. choosing $\vec{f}$ and $M_{q \rightarrow b}$, is to pass it to each $\vec{b}_i$ individually
\begin{align}
    M_{q \rightarrow b}&= \sum_{i=1}^{k} \tilde{\alpha}_i (\mathds{1}+\Pi^i) M_{q \rightarrow b_i},\\
     \text{with}\ \ M_{q \rightarrow b_i} \frac{\vec{q}}{\lVert \vec{q} \rVert}&= \frac{\vec{b}_i}{\lVert \vec{b}_i \rVert}, \quad i=1,\dots,k,
\end{align}
which implies $\vec{f}(\vec{b}_{1}, \dots, \vec{b}_{k})=\sum_{i=1}^{k} \tilde{\alpha}_i \frac{\vec{b}_i}{\lVert \vec{b}_i \rVert}$, where the $\{\tilde{\alpha}_i\}_{i}$ constitutes a probability distribution. However, this is not always possible as the condition $\vec{q} / \lVert \vec{q} \rVert \succ \vec{b}_i / \lVert \vec{b}_i \rVert$, necessary for the existence of $M_{q \rightarrow b_{i}}$, cannot be satisfied in general. See Appendix~\ref{app:claim:qmajri} for details. One is therefore forced to pass the norm of $\vec{q}$ in a more clever way. A good candidate is
\begin{align}
    M_{q \rightarrow b}&= \sum_{i=1}^{k} \tilde{\alpha}_i M_{q \rightarrow 2b_i},\\
     \text{with}\ \ M_{q \rightarrow 2b_i} \frac{\vec{q}}{\lVert \vec{q} \rVert}&= \frac{(\mathds{1}+\Pi^i)\vec{b}_i}{2 \lVert \vec{b}_i \rVert}, \quad i=1,\dots,k.
\end{align}
However, this separation of $M_q$ into one part ``passing on the norm'' from $\vec{q}$ to $\vec{a}$ and another part from $\vec{q}$ to the $\vec{b}_i$ is not strictly needed. One may instead only require the existence of a doubly stochastic matrix $M_q$ such that
\begin{equation}\label{eq:generalMq}
    M_q \vec{q}= \vec{a} + \tilde{\vec{f}}(\vec{r}_1,\vec{b}_1,\dots,\vec{r}_k,\vec{b}_k),
\end{equation}
with
\begin{equation}
    \tilde{\vec{f}}(\vec{r}_1,\vec{b}_1,\dots,\vec{r}_k,\vec{b}_k)=\sum_{i=1}^{k} (\lfloor\tfrac{2i}{d}\rfloor+1)^{-1} (1-\frac{\lVert \vec{r}_i \rVert }{\lVert \vec{b}_i \rVert }) (\mathds{1}+\Pi^i) \vec{b}_i.
\end{equation}
Unfortunately, Eq.~(\ref{eq:generalMq}) is challenging to check, e.g., even confirming whether $\vec{a} + \tilde{\vec{f}}(\vec{r}_1,\vec{b}_1,\dots,\vec{r}_k,\vec{b}_k)$ is a vector of nonnegative components is already complicated. In summary, the approach requires checking the two following conditions:
\begin{enumerate}[(i)]
    \item \label{passing con i} For any pair of vectors $(\vec{r}_i,\vec{b}_i)$, there exists a doubly stochastic matrix $M_{r_i}$ such that
    \begin{equation}
       \frac{\vec{b}_i}{\lVert \vec{b}_i\rVert} =M_{r_i} \frac{\vec{r}_i}{\lVert \vec{r}_i \rVert},\; i=0,\dots,k,
    \end{equation}
    while at the same time $\lVert \vec{q} \rVert \geq \lVert \vec{ a} \rVert$, where $\vec{a}:=\vec{q}(\beta')$.
    \item \label{passing con ii} There exists a doubly stochastic matrix $M_q$ such that
\begin{equation}
    M_q \vec{q}= \vec{a} + \sum_{i=1}^{k} (\lfloor\tfrac{2i}{d}\rfloor+1)^{-1} (1-\frac{\lVert \vec{r}_i \rVert }{\lVert \vec{b}_i \rVert }) (\mathds{1}+\Pi^i) \vec{b}_i,
\end{equation}
or the stronger version: $ \lVert \vec{r}_i\rVert \leq \lVert \vec{b}_i\rVert$ for all $i=1,\dots,k$ and there exist doubly stochastic matrices $M_{q \rightarrow 2 b_i},\; i=1,\dots,k$ such that
\begin{equation}\label{eq:passnormcondII2}
     M_{q \rightarrow 2b_i} \frac{\vec{q}}{\lVert \vec{q} \rVert}= \frac{(\mathds{1}+\Pi^i)\vec{b_i}}{2 \lVert \vec{ b_i} \rVert} ~~~~\, i=1,\dots,k.
\end{equation}
\end{enumerate}
Condition~(\ref{passing con i}) indeed holds, and we thus state it below as the following lemma, the proof of which is presented in Appendix~\ref{app:PassNorm}.

\begin{lemma}\label{lemma:GenPassNormD}
For any $d$-dimensional system with arbitrary Hamiltonian, for any $\beta'< \beta$, and for all $i=0,\dots, d-1$ we have\\
\vspace*{-6mm}
\begin{align}
    \lVert \vec{q} \rVert &\geq\, \lVert \vec{ a} \rVert\,,
    \label{lemma:rismallerbi}\\
    \frac{\vec{r}_i}{\lVert \vec{r}_i \rVert} &\succ\,  \frac{\vec{b}_i}{\lVert \vec{b}_i \rVert}\,.
    \label{lemma:rimajbi}
\end{align}
\end{lemma}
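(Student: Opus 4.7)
The plan is to reduce both inequalities of the lemma to a single auxiliary statement about Gibbs distributions, and then peel off each claim using at most Schur convexity. The key observation is that a thermal distribution at higher inverse temperature majorises the same distribution at lower inverse temperature, regardless of the underlying (nondecreasing) energy sequence.

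Step one is to prove the following core fact: for any fixed sequence $\{\tilde E_j\}_{j=0}^{d-1}$ of real energies, the Gibbs distribution $\tilde p_j(\beta) \propto e^{-\beta \tilde E_j}$ satisfies $\tilde{\vec p}(\beta) \succ \tilde{\vec p}(\beta')$ whenever $\beta \geq \beta'$. Since $E \mapsto e^{-\beta E}$ is strictly monotone decreasing for every $\beta > 0$, the permutation that sorts $\tilde p_j(\beta)$ in non-increasing order is independent of $\beta$ (any ties being broken arbitrarily but consistently). Let $S_k$ denote the resulting index set of the $k$ largest components, and set $\Sigma_k(\beta) := \sum_{j \in S_k}\tilde p_j(\beta)$. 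A direct computation yields
\begin{equation}
    \frac{d\Sigma_k}{d\beta}(\beta) = \Sigma_k(\beta)\bigl(\langle\tilde E\rangle_\beta - \langle\tilde E\rangle_{S_k,\beta}\bigr),
\end{equation}
where $\langle\tilde E\rangle_{S_k,\beta}$ is the conditional mean of $\tilde E$ given membership in $S_k$. Because $S_k$ collects the $k$ smallest energies, the conditional mean cannot exceed the unconditional one, so the derivative is non-negative for every $k$. Integrating gives $\Sigma_k(\beta)\geq\Sigma_k(\beta')$, which is precisely $\tilde{\vec p}(\beta)\succ\tilde{\vec p}(\beta')$.

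With this core in hand, both inequalities follow quickly. For the majorisation inequality, I observe that $(\vec r_i)_j = p_j(\beta)p_{j+i}(\beta) \propto e^{-\beta(E_j + E_{j+i})}$, so $\vec r_i/\lVert \vec r_i\rVert$ is itself a Gibbs distribution at inverse temperature $\beta$ with effective energies $\tilde E_j^{(i)} := E_j + E_{j+i}$, and $\vec b_i/\lVert\vec b_i\rVert$ is the same Gibbs distribution at $\beta'$. Applying the core to $\{\tilde E_j^{(i)}\}$ yields the claim at once, the underlying ordering of the $\tilde E_j^{(i)}$ being immaterial since majorisation is permutation-invariant. For the norm inequality, I rewrite $\lVert\vec q(\beta)\rVert = \sum_j p_j(\beta)^2$, which is the purity of $\tau(\beta)$ and hence a Schur-convex function of $\vec p(\beta)$. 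The core applied to the original energies $\{E_j\}$ gives $\vec p(\beta)\succ\vec p(\beta')$, and Schur convexity of the purity then gives $\lVert\vec q(\beta)\rVert \geq \lVert\vec q(\beta')\rVert = \lVert\vec a\rVert$.

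I do not expect a serious conceptual obstacle; the entire argument boils down to a single derivative sign together with Schur convexity of the $\ell^2$-norm. The only place requiring any care is the assertion that the sort order of the Gibbs weights is $\beta$-independent, which is handled by the strict monotonicity of $E\mapsto e^{-\beta E}$ for $\beta > 0$ together with a fixed tie-breaking rule for degenerate levels, ensuring that the same index set $S_k$ can be used at $\beta$ and $\beta'$ simultaneously.
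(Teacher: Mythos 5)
Your proof is correct. For the majorisation claim~(\ref{lemma:rimajbi}) you follow essentially the same route as the paper: you recognise $\vec{r}_i/\lVert\vec{r}_i\rVert$ as a Gibbs distribution at inverse temperature $\beta$ for the effective energies $\tilde E_j^{(i)}=E_j+E_{j+i}$ (the paper's $A_j$) and invoke the fact that a Gibbs distribution at larger $\beta$ majorises the one at smaller $\beta$; the difference is that the paper simply asserts this standard fact, whereas you prove it from scratch via the sign of $\tfrac{d\Sigma_k}{d\beta}=\Sigma_k\bigl(\langle\tilde E\rangle_\beta-\langle\tilde E\rangle_{S_k,\beta}\bigr)$ together with the $\beta$-independence of the sorting order, which is a clean and correct argument. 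For the norm claim~(\ref{lemma:rismallerbi}) your route genuinely differs: the paper differentiates $\lVert\vec{q}\rVert=\sum_i e^{-2\beta E_i}/\mathcal{Z}(\beta)^2$ directly and shows nonnegativity by pairing the terms $(m,i)$, while you deduce it from the same core majorisation statement applied to $\{E_j\}$ plus Schur convexity of the purity $\sum_j p_j^2$. Your version buys economy and uniformity (a single monotonicity lemma drives both parts, and it would extend to any Schur-convex functional of the marginal), at the price of invoking Schur convexity; the paper's computation is more pedestrian but entirely self-contained. Both are valid, and your treatment of degenerate levels via a fixed tie-breaking rule closes the only delicate point.
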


Given the above lemma and assuming that condition~(\ref{passing con ii}) also holds, one can prove the existence of STUs for all dimensions by using the following majorisation relations. The majorisation relation (\ref{lemma:rimajbi}) from Lemma~\ref{lemma:GenPassNormD} ensures the existence of doubly stochastic matrices $M_{r_i}$ which satisfy $M_{r_i}\, \vec{r}_i / \lVert \vec{r}_i\rVert=\vec{b}_i / \lVert \vec{b}_i\rVert$. To reach any thermal state with higher temperature, i.e., $\vec{\tilde{p}}=\, \vec{p}(\beta\pr)$, we then need the existence of $M_q$ such that
\begin{equation}
 \vec{p}(\beta^{\prime})\,=\,M_{q}\,\vec{q}+\sum_{i=1}^{k} (\lfloor\tfrac{2i}{d}\rfloor+1)^{-1}\frac{\lVert \vec{r}_i \rVert }{\lVert \vec{b}_i \rVert } (\mathds{1}+\Pi^i)\, \vec{b}_i,
\end{equation}
therefore compensating for the required norm of the vectors associated to the subspace $\{\mathcal{H}_{r_i}\}_{i=1}^{d-1}$. This is precisely what is ensured by condition~(\ref{passing con ii}).

However, since condition~(\ref{passing con ii}) is in general cumbersome to prove, one can check whether Eq.~(\ref{eq:passnormcondII2}) holds instead. See Appendix~\ref{app:strong condition II} for more details about how to prove the existence of STUs in general in this way. In the particular case of $d=3$, the strong version of condition~(\ref{passing con ii}) can be reduced to a (``norm passing'') requirement that we formulate in the following lemma:

\begin{lemma}\label{lemma:PassNorm3d}
In $d=3$, for every choice of $E_{1}$ and $E_{2}$ with $E_{2}\geq E_{1}$ and for any $\beta\pr\leq\beta$, the following majorisation relation holds:\\
\vspace*{-4mm}
\begin{align}
    \frac{\vec{q}}{\lVert \vec{q} \rVert} \succ \frac{(\mathds{1} + \Pi)\vec{b}}{2 \lVert \vec{ b} \rVert}:=\frac{\vec{c}}{\lVert \vec{c} \rVert}.
\end{align}
\end{lemma}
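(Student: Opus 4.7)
The plan is to identify both normalised vectors with familiar objects and then chain two standard majorisation relations via transitivity, rather than attempting to verify the partial-sum inequalities directly in the original variables. The two identifications I rely on are that $\vec{q}/\lVert\vec{q}\rVert$ is itself a Gibbs distribution and that $\vec{c}/\lVert\vec{c}\rVert$ collapses to a single-probability expression. Specifically, since $q_{i} = p_{i}(\beta)^{2} = e^{-2\beta E_{i}}/\mathcal{Z}(\beta)^{2}$ and $\lVert\vec{q}\rVert = \mathcal{Z}(2\beta)/\mathcal{Z}(\beta)^{2}$, we have $\vec{q}/\lVert\vec{q}\rVert = \vec{p}(2\beta)$, the thermal distribution of the same Hamiltonian at doubled inverse temperature.

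For the right-hand side, writing $p_{i}' \equiv p_{i}(\beta')$ and $b_{j} = p_{j}'\,p_{j+1}'$ with indices modulo $3$, a direct computation gives $\bigl((\mathds{1}+\Pi)\vec{b}\bigr)_{i} = b_{i} + b_{i-1} = p_{i}'(p_{i-1}' + p_{i+1}') = p_{i}'(1-p_{i}')$, so that the $i$-th component of $\vec{c}/\lVert\vec{c}\rVert$ simplifies to $p_{i}'(1-p_{i}')/(1-\sigma)$ with $\sigma := \sum_{j}(p_{j}')^{2}$. Because $p_{0}'\geq p_{1}'\geq p_{2}'$ and $p_{i}'+p_{i+1}'\leq 1$, the identity $p_{i}'(1-p_{i}') - p_{i+1}'(1-p_{i+1}') = (p_{i}'-p_{i+1}')(1-p_{i}'-p_{i+1}') \geq 0$ shows $\vec{c}/\lVert\vec{c}\rVert$ is already sorted decreasingly in the energy eigenbasis, matching the order of $\vec{p}(\beta')$. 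I would then check directly that $\vec{p}(\beta') \succ \vec{c}/\lVert\vec{c}\rVert$: the $k=1$ partial-sum condition reduces (after clearing the positive factor $1-\sigma$) to $p_{0}' \geq \sigma$, which holds via $\sigma \leq p_{0}'\sum_{j}p_{j}' = p_{0}'$, and the complementary $k=2$ condition reduces to $p_{2}' \leq \sigma$, which holds via $p_{2}' = p_{2}'\sum_{j}p_{j}' \leq \sum_{j}(p_{j}')^{2}$.

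Finally, I would invoke the standard monotonicity of Gibbs distributions in inverse temperature---obtainable by differentiating $\sum_{i<k}p_{i}(\beta)$ in $\beta$ and noting that the conditional average energy over the $k$ lowest levels never exceeds the global mean---to conclude $\vec{p}(2\beta) \succ \vec{p}(\beta')$ whenever $2\beta \geq \beta'$, which is implied by the hypothesis $\beta' \leq \beta$ (and $\beta,\beta'\geq 0$). Transitivity of majorisation then yields $\vec{p}(2\beta) \succ \vec{c}/\lVert\vec{c}\rVert$, which is precisely the claim. I do not expect any real obstacle within $d=3$; the single step requiring genuine care is the closed-form evaluation of $\bigl((\mathds{1}+\Pi)\vec{b}\bigr)_{i}$, since this telescoping is what makes both the sorting observation and the one-line partial-sum checks go through. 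The deeper obstruction visible from here is external to the lemma: for $d>3$, the analogous components of $(\mathds{1}+\Pi^{i})\vec{b}_{i}$ no longer collapse into a single-probability expression of the form $p_{i}'(1-p_{i}')$, so this clean identification breaks down and additional techniques are needed to prove an analogous statement.
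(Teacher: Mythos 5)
Your proof is correct, but it takes a genuinely different route from the paper's. The paper also argues by transitivity, but through the intermediate vector $(\mathds{1}+\Pi)\vec{r}/(2\lVert\vec{r}\rVert)$ evaluated at the \emph{initial} temperature: it first verifies, via explicit inequalities between Boltzmann weights, that $\vec{q}/\lVert\vec{q}\rVert$ majorises that vector at equal $\beta$, and then shows that the largest (smallest) entry of $(\mathds{1}+\Pi)\vec{r}(\beta)/(2\lVert\vec{r}(\beta)\rVert)$ is monotonically increasing (decreasing) in $\beta$ in order to pass from $\beta$ to $\beta'$. You instead route through the thermal vector $\vec{p}(\beta')$, using the identification $\vec{q}/\lVert\vec{q}\rVert=\vec{p}(2\beta)$, the telescoping identity $\bigl((\mathds{1}+\Pi)\vec{b}\bigr)_i=p_i'(1-p_i')$ with $2\lVert\vec{b}\rVert=1-\sigma$ (an identity the paper exploits only later, in the geometric appendix for qutrits), the elementary partial-sum checks $p_0'\geq\sigma\geq p_2'$, and the standard $\beta$-monotonicity of Gibbs distributions under majorisation, which the paper itself invokes in the proof of Lemma~\ref{lemma:GenPassNormD}. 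Your version buys cleaner algebra---no exponential case analysis---and in fact a marginally stronger statement, since the chain only requires $\beta'\leq 2\beta$ rather than $\beta'\leq\beta$; it also neatly isolates a temperature-comparison-free relation, namely that $\vec{p}(\beta')$ majorises the normalised $(\mathds{1}+\Pi)\vec{b}$ at the same temperature. What the paper's two-step structure buys is that it is the template that carries over (under the gap condition $\delta_{i+1}\leq\delta_i$) to the $d=4$ statement of Lemma~\ref{lemma:PassNorm4d}, whereas, as you correctly observe, your collapse of $(\mathds{1}+\Pi^i)\vec{b}_i$ into a single-probability expression is special to $d=3$.
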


The proof of Lemma~\ref{lemma:PassNorm3d} is presented in Appendix~\ref{App:proof:passnorminD}. Together, Lemmas~\ref{lemma:GenPassNormD} and~\ref{lemma:PassNorm3d} then confirm the existence of STUs for the $d=3$ case.

Now, let us examine the case $d=4$. Consider a bipartite system with equal local Hamiltonians $H=\sum_{i=0}^{3}E_i\ket{i}\!\!\bra{i}$ in which the energy eigenvalues are ordered in increasing order and $E_{0}=0$. Following Eq.~(\ref{deco. marginal}), the vectorised form of the marginals for the initial uncorrelated thermal state is given by $\vec{p}= \vec{q} +\vec{r}_1+ \vec{r}_2+\vec{r}_3$ with
\begin{equation}
    \vec{q}=\begin{pmatrix}p_{00}\\ p_{11}\\ p_{22} \\ p_{33} \end{pmatrix}, \; \vec{r}_1=\begin{pmatrix}p_{01}\\ p_{12}\\ p_{23} \\ p_{30} \end{pmatrix},\;\vec{r}_2=\begin{pmatrix}p_{02}\\ p_{13}\\ p_{20} \\ p_{31} \end{pmatrix},\; \vec{r}_3=\begin{pmatrix}p_{03}\\ p_{10}\\ p_{21} \\ p_{32} \end{pmatrix}=\Pi \vec{r}_1,
    \label{eq:d4 vectors}
\end{equation}
where $\vec{r}_i$ is a shorthand for $\vec{r}_i(\beta)$. Furthermore, we again denote the vector decomposition of any thermal state with higher temperature $\beta\pr \leq \beta$ as $\vec{p}(\beta\pr)=\vec{a}+\sum_{i}\vec{b}_{i}$ with $\vec{a}:=\vec{q}(\beta')$
and $\vec{b}_i:=\vec{r}_i(\beta\pr)$. Unitaries on the LCSs that generate the same marginals, according to Eq.~(\ref{eq:EqualEvoMarginal C}), lead to the transformation
\begin{equation}
 \vec{p}\mapsto\vec{\tilde{p}}\,=\,M_{q}\,\vec{q}+(\openone+\Pi)\,M_{r_1}\vec{r}_1+\frac{1}{2}(\openone+\Pi^2)M_{r_2}\,\vec{r}_2,
 \label{eq:gen. trans. 4d}
\end{equation}
where the $M_{r_i}$ are arbitrary doubly stochastic matrices. To achieve a thermal marginal with higher temperature, one can transform each $\vec{r}_i$ to $\vec{b}_i$ as prescribed by condition~\ref{passing con i} and pass the extra norm of the vector $\vec{q}$ as dictated by the stronger version of condition~(\ref{passing con ii}). This is possible at least under some restrictions on the energy level spacings $\delta_i:=E_{i+1}-E_i$. Let us phrase this statement more precisely in the following Lemma~\ref{lemma:PassNorm4d}, a detailed proof of which is presented in Appendix~\ref{App:proof:passnorminD}.

\begin{lemma}\label{lemma:PassNorm4d}
In $d=4$, for every choice of the set $\{E_{i}\}_{i=0}^{3}$ with $E_{i+1}\geq E_i$ and $\delta_{i+1} \leq \delta_{i}$ and for any $\beta\pr\leq\beta$, the following relations hold for $i=1,2$:
\begin{align}
    \frac{\vec{q}}{\lVert \vec{q} \rVert} &\succ\, \frac{\mathds{1}+\Pi^i}{2} \frac{\vec{b}_i}{\lVert \vec{b}_i \rVert},
    \label{lemma:qmaj2ri4d}\\[1mm]
    \lVert \vec{r}_i \rVert &\leq\, \lVert \vec{b}_i \rVert.
    \label{lemma:rismallerbi4d}
\end{align}
\end{lemma}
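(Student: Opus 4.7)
I would prove the two inequalities \eqref{lemma:qmaj2ri4d} and \eqref{lemma:rismallerbi4d} independently and work throughout in $d=4$; the non-increasing gap hypothesis $\delta_{i+1}\leq\delta_{i}$ is invoked at the key steps.

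\emph{Norm inequality \eqref{lemma:rismallerbi4d}.} Since $\vec{r}_{i}=\vec{r}_{i}(\beta)$ and $\vec{b}_{i}=\vec{r}_{i}(\beta')$ with $\beta'\leq\beta$, the inequality follows from the monotonicity of $\beta\mapsto\lVert\vec{r}_{i}(\beta)\rVert$. Writing $\pi_{j}=e^{-\beta E_{j}}$ and $Z(\beta)=\sum_{j}\pi_{j}$, a direct differentiation together with the reindexing identity $\sum_{j}\pi_{j}\pi_{j+i}=\sum_{j}\pi_{j}\pi_{j-i}$ yields
\begin{equation*}
    Z(\beta)\,\partial_{\beta}\lVert\vec{r}_{i}(\beta)\rVert
    \,=\, -\,\mathrm{Cov}_{\pi/Z}\bigl(E_{j},\,\pi_{j+i}+\pi_{j-i}\bigr),
\end{equation*}
so it suffices to show that this covariance is non-negative. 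For $d=4$ and $i=2$ one has $\pi_{j+2}+\pi_{j-2}=2\pi_{j+2}$, and the covariance collapses to a two-term expression in $\pi_{0}\pi_{2}$ and $\pi_{1}\pi_{3}$ whose sign is fixed by $E_{0}<E_{1}<E_{2}<E_{3}$. For $i=1$, a similar finite expansion produces a covariance whose positivity follows once the non-increasing gap hypothesis has been used to pin down the ordering of the quantities $\pi_{j+1}+\pi_{j-1}$ as functions of $j$.

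\emph{Majorisation \eqref{lemma:qmaj2ri4d}.} The pivotal observation is that $(\mathds{1}+\Pi^{i})/2$ is itself a doubly stochastic matrix, being a convex combination of two permutations. Hence
\begin{equation*}
    \frac{(\mathds{1}+\Pi^{i})\vec{b}_{i}}{2\lVert\vec{b}_{i}\rVert}
    \,\prec\, \frac{\vec{b}_{i}}{\lVert\vec{b}_{i}\rVert},
\end{equation*}
and by transitivity of majorisation it suffices to establish $\vec{q}/\lVert\vec{q}\rVert \succ \vec{b}_{i}/\lVert\vec{b}_{i}\rVert$. Now $\vec{q}/\lVert\vec{q}\rVert$ is the Gibbs distribution of $H$ at inverse temperature $2\beta$, with components proportional to $e^{-2\beta E_{j}}$ sorted decreasingly in $j$, while $\vec{b}_{i}/\lVert\vec{b}_{i}\rVert$ has components proportional to $e^{-\beta'(E_{j}+E_{j+i})}$. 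For $i=2$ these values collapse into two doubly-degenerate levels; for $i=1$ the ordering of $\{E_{j}+E_{j+1}\}_{j=0}^{3}$ is pinned down by $\delta_{2}\leq\delta_{0}$, which is implied by the hypothesis. In both cases the majorisation reduces to a short list of explicit partial-sum inequalities, each comparing a single exponential at inverse temperature $2\beta$ against a ratio of exponentials at inverse temperature $\beta'\leq\beta$; these are then verified by direct manipulation using $\beta'\leq\beta$ and $\delta_{i+1}\leq\delta_{i}$.

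\emph{Main obstacle.} The delicate step is the middle partial-sum inequality in the majorisation, whose validity hinges on the interplay between the ratio $\beta'/\beta$ and the gap structure. Without $\delta_{i+1}\leq\delta_{i}$ one can construct spectra for which the sorted order of the components of $\vec{b}_{1}/\lVert\vec{b}_{1}\rVert$ permutes and the majorisation fails at an intermediate partial sum; controlling this case uniformly over $\beta'\leq\beta$ is precisely where the hypothesis enters. The covariance sign in the norm inequality, by contrast, is less subtle once the identity above is in hand.
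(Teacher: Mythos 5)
Your treatment of the majorisation \eqref{lemma:qmaj2ri4d} has a genuine gap: by using that $(\mathds{1}+\Pi^{i})/2$ is doubly stochastic and invoking transitivity, you replace the target by the strictly stronger relation $\vec{q}/\lVert\vec{q}\rVert \succ \vec{b}_{i}/\lVert\vec{b}_{i}\rVert$, i.e.\ you throw away exactly the averaging that makes the statement tractable. The paper shows (Counterexample to Claim~\ref{claim:qmajr}) that this unaveraged relation fails in general, and its appendix also notes that already $\vec{q}/\lVert\vec{q}\rVert \succ \vec{r}_{1}/\lVert\vec{r}_{1}\rVert$ can fail; those counterexamples violate the decreasing-gap hypothesis, so your stronger claim is not outright refuted under the hypothesis of the lemma, but it is nowhere proven in your proposal --- the ``short list of explicit partial-sum inequalities'' you defer to ``direct manipulation'' is the entire content of the lemma, and for your route they must be established for the harder, unaveraged vector. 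The paper instead proves $\vec{q}/\lVert\vec{q}\rVert \succ \tfrac{1}{2}(\mathds{1}+\Pi^{i})\vec{r}_{i}/\lVert\vec{r}_{i}\rVert$ by three explicit inequalities per $i$ (using $\delta_{i+1}\leq\delta_{i}$ to fix the component ordering), and then $\tfrac{1}{2}(\mathds{1}+\Pi^{i})\vec{r}_{i}/\lVert\vec{r}_{i}\rVert \succ \tfrac{1}{2}(\mathds{1}+\Pi^{i})\vec{b}_{i}/\lVert\vec{b}_{i}\rVert$ by showing the large sorted components of the \emph{averaged} vector increase with $\beta$ while the small ones decrease; note this second step is not of the form ``apply a doubly stochastic matrix to both sides of a majorisation'', so it cannot be absorbed into your transitivity trick, and your plan would need a new argument in its place.

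For the norm inequality \eqref{lemma:rismallerbi4d} your covariance identity is correct and equivalent to the paper's derivative computation, but your case analysis is inverted. It is the $i=1$ case that holds for arbitrary spectra in $d=4$ (no gap condition needed), whereas for $i=2$ the sign of $\mathrm{Cov}_{\pi/Z}(E_{j},\pi_{j+2})$ is \emph{not} fixed by $E_{0}<E_{1}<E_{2}<E_{3}$ alone: e.g.\ for $E=(0,1,1.5,100)$ one finds numerically $\partial_{\beta}\lVert\vec{r}_{2}\rVert>0$ at $\beta=0.1$, and indeed $\lVert\vec{r}_{2}(\beta=0.1)\rVert\approx0.2251>0.2244\approx\lVert\vec{r}_{2}(\beta'=0.05)\rVert$, so \eqref{lemma:rismallerbi4d} fails for that (increasing-gap) spectrum; the decreasing-gap hypothesis is used by the paper precisely to control the $i=2$ covariance, not the $i=1$ one. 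So your claim that ``the covariance sign \ldots is less subtle'' and that $i=2$ collapses to a sign fixed by the ordering is wrong as stated, and the sketched justification for $i=1$ (ordering of $\pi_{j+1}+\pi_{j-1}$ via the gap hypothesis) is not carried out. In short, the skeleton (monotonicity in $\beta$ plus majorisation via HLP) matches the paper, but the two decisive verifications are either missing or mis-assigned, so the proposal does not yet constitute a proof.
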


With Lemma~\ref{lemma:PassNorm4d} at hand, we are ready to state the result achieved for the $d=4$ with this method.\\

\begin{Theorems}{Existence of STUs in $d=4$}{PassNorm4d}
In any $4$-dimensional system, for every set of energy eigenvalues $\{E_{i}\}_{i=0 }^3$ with $E_{i+1}\geq E_i$ and $\delta_{i+1} \leq \delta_{i}$, and for any $\beta\pr\leq\beta$, there exists a set of doubly stochastic matrices $\{M_{r_i}\}_{i=0}^2$ such that
\begin{equation}
\vec{p}(\beta\pr)=\,M_{q}\,\vec{q}+(\openone+\Pi)\,M_{r_1}\vec{r}_1+\frac{1}{2}(\openone+\Pi^2)M_{r_2}\,\vec{r}_2,
\end{equation}
which implies the existence of STUs in $d=4$ for symmetric Hamiltonians with decreasing energy gaps.
\end{Theorems}

\textbf{Proof}. The statement follows from Lemmas~\ref{lemma:GenPassNormD} and~\ref{lemma:PassNorm4d}. \qed

We have thus seen that the approach discussed in this section does, at least partially, generalise to local dimension $4$. However, the proof of Theorem~\ref{theorem:PassNorm4d}, which exploits the stronger version of condition~(\ref{passing con ii}) fails for $E_3 \gg 1$. Note that this failure is not an artifact of even dimensions, but rather continues to persist in   subsequent higher dimensions. Still, the possibility remains that above results can be generalised by proving the weaker version of condition~(\ref{passing con ii}).


\subsection{Geometric approach}\label{sec:two-qutrit case_geometric}

Our attempts to generalise the approaches of Secs.~\ref{sec:commutingM} and~\ref{sec:3_passing} to higher dimensions have shown that the problem can be recast in terms of different sets of conditions. However, checking these conditions has proven to be increasingly complex with growing dimension, and has thus only provided partial results even for the case $d=4$. We therefore now turn to a third approach which at least provides a complete proof for $d=4$. This approach is centred around the geometric structure generated by doubly stochastic matrices. More specifically, recall that the Schur-Horn theorem implies that for any $v\in\mathbb{R}^{n}$, the set of vectors obtained by applying the set of unistochastic $n\times n$ matrices to $v$ is a convex polytope given by the convex hull of all permutations of the entries of $v$, see, e.g., Ref.~\cite{BengtssonEricssonKusTadejZyczkowski2005}. Here, we can apply this idea to the vectors $\vec{q}$ and $\vec{r}_{i}$ and the matrices $M_{q}$ and $M_{r_{i}}$, respectively, to generate matching diagonal marginals according to Eq.~(\ref{eq:EqualEvoMarginal C}). In other words, the set of all possible symmetric marginal vectors reachable by unitaries that are block-diagonal with respect to the chosen LCS decomposition is the polytope with vertices given by the set of points
\begin{align}
 \mathbf{\tilde{p}}\suptiny{0}{0}{(i_{0},i_{1},\ldots,i_{k})}
    &= \Pi\suptiny{0}{0}{(i_{0})} \mathbf{q} +
    \sum_{n=1}^{k}
    (\lfloor\tfrac{2n}{d}\rfloor+1)^{-1}
    (\mathds{1}+ \Pi^{n})\,\Pi\suptiny{0}{0}{(i_{n})} \mathbf{r}_{n},
\label{eq:EqualEvoMarginal}
\end{align}
with $i_{j}\in\{1,2,\ldots,d!\}$ for all $j=0,\ldots,k$ with $k$ as in Eq.~(\ref{gen evol marginal A}). Here, $\Pi\suptiny{0}{0}{(i)}$ for $i=1,\ldots,d!$ are the possible permutations of $d$ elements. The question about the existence of STUs is then equivalent to asking if the curve defined by the set of points $\{\vec p(\beta\pr)|\beta\geq\beta\pr\geq0\}$ is enclosed within the polytope corresponding to the convex hull of the $(d!)^{k+1}$ points $\mathbf{\tilde{p}}\suptiny{0}{0}{(i_{0},i_{1},\ldots,i_{k})}$.

To answer this question, we then proceed in the following way. First, we note that the $d$-component vectors $\mathbf{p}=(p_{i})$ only have $d-1$ independent entries due to the normalisation condition $\sum_{i=0}^{d-1}p_{i}=1$. The problem can thus be reduced to $d-1$ dimensions by choosing coordinates $\{x_{i}\}_{i=0,1,\ldots,d-2}$ with
\begin{align}
    x_{n}   &=\,(n+1)p_{n+1}-\sum_{i=0}^{n}p_{i},
    \label{eq:new coords}
\end{align}
for $n=0,1,\ldots,d-2$, while the additional last coordinate $x_{d-1}=-\sum_{i=0}^{d-1}p_{i}=-1$ is fixed by normalisation and can thus be disregarded. In these coordinates, the point obtained for infinite temperature ($\beta\rightarrow0$) is the origin, $p(\beta\rightarrow0)=(0,0,\ldots,0,-1)$, the point for vanishing temperature is $p(\beta\rightarrow\infty)=(-1,-1,\ldots,-1)$ and all thermal states lie on a continuous curve connecting these points that is strictly confined to negative coordinate regions, $x_{i} \,\leq\, 0~\forall i$. However, note that there are points corresponding to reachable marginals that have positive values for some of the new coordinates.

For any given initial temperature $1/\beta$ and dimension $d$, a sufficient set of conditions for the inclusion of the curve defined by the points $\{\vec p(\beta\pr)|\,\beta\geq\beta\pr\geq0\}$ in the reachable polytope is as follows:
\begin{enumerate}[(I)]
\item{\label{item i}Inclusion of the vertices
\begin{align}
    \vec{v}_0&:=\vec{p}(\beta)=(x_{0}(\beta),x_{1}(\beta),\ldots,x_{d-3}(\beta),x_{d-2}(\beta),-1) , \nonumber\\
    \vec{v}_1&:=(0,x_{1}(\beta),\ldots,x_{d-3}(\beta),x_{d-2}(\beta),-1) ,  \nonumber\\
    \vec{v}_2&:=(0,0,x_2(\beta)\ldots,x_{d-3}(\beta),x_{d-2}(\beta),-1) ,  \nonumber\\
    &\ \ \vdots  \nonumber\\
    \vec{v}_{d-2}&:=(0,0,\ldots,0,x_{d-2}(\beta),-1)  ,  \nonumber\\
    \vec{v}_{d-1}&:=\vec{p}(\beta\rightarrow 0) = (0,\dots,0,-1) ,
    \label{eq:polyvertgeom}
\end{align}
in the polytope of achievable marginals.
}
\item{\label{item ii}Inclusion of all points $\vec p(\beta\pr)$ with $\beta\pr\leq \beta$ in the simplex corresponding to convex hull of the set of vertices $\{\vec{v}_{i}|\,i=0,\ldots,d-1\}$.
}
\end{enumerate}

\begin{figure*}
(a)\includegraphics[width=0.5\textwidth,trim={0cm 0mm 0cm 0mm}]{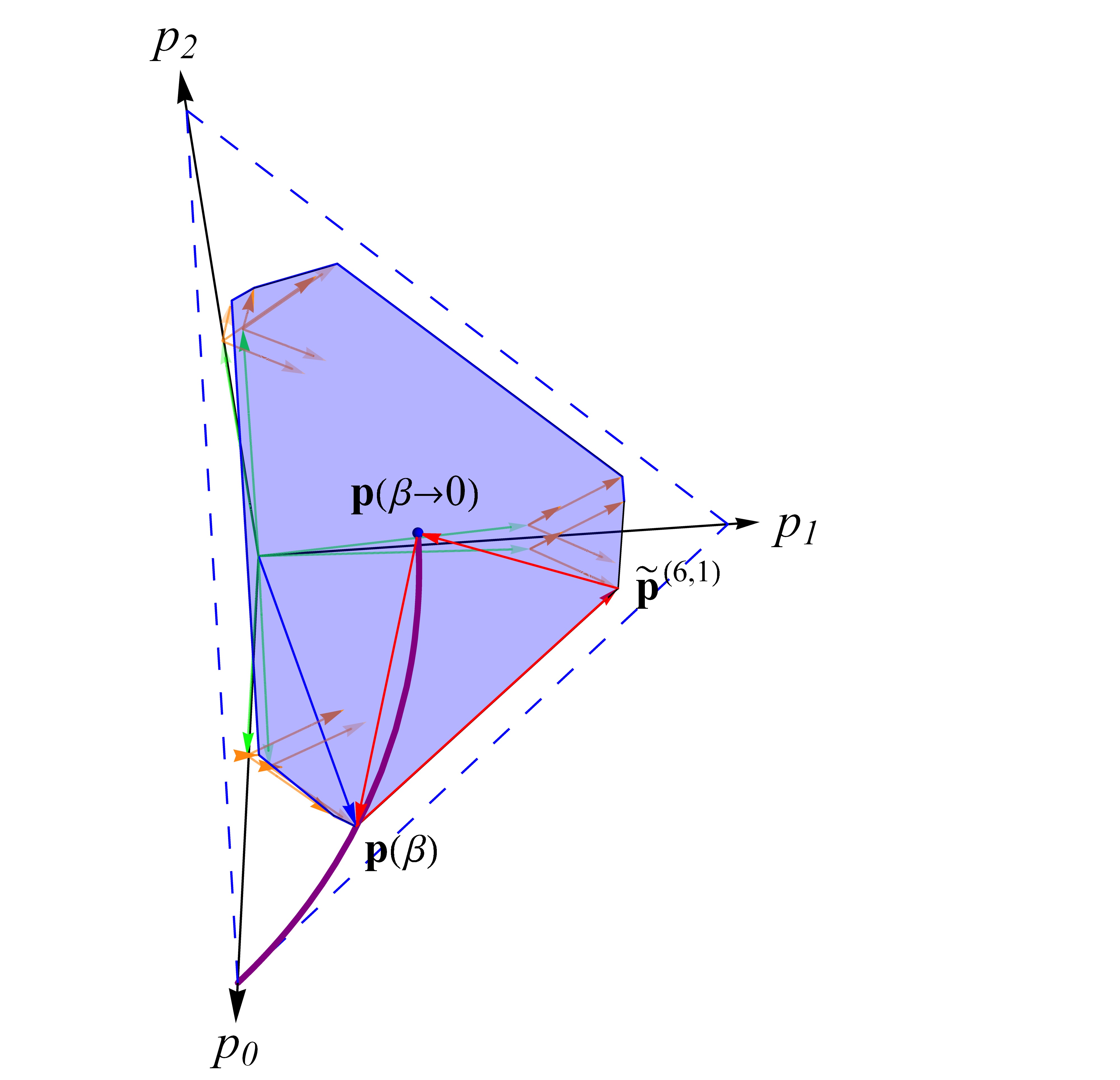}
(b)\ \includegraphics[width=0.35\textwidth,trim={0cm 0mm 0cm 0mm}]{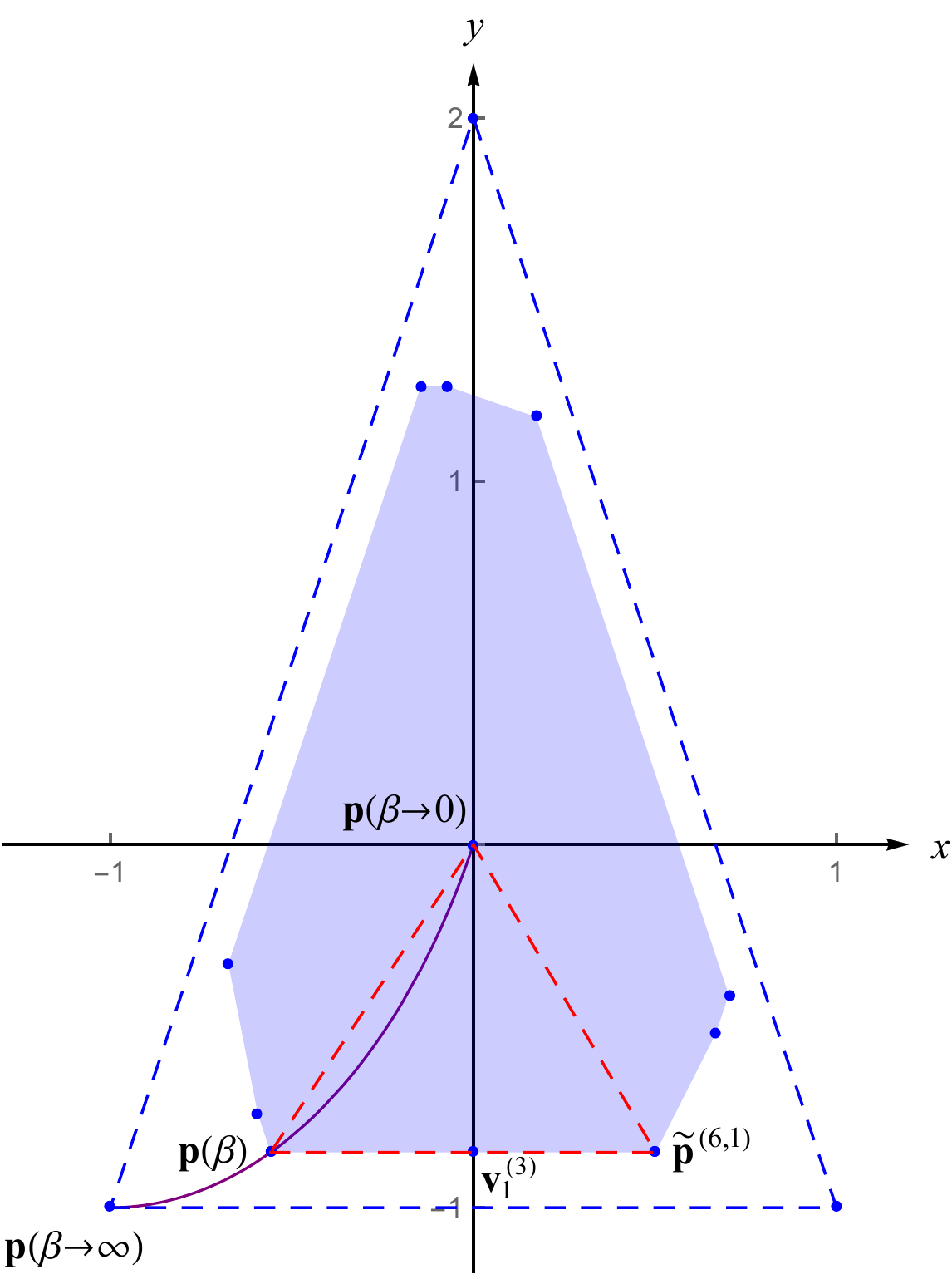}
\caption{\textbf{Polytope of reachable symmetric diagonal qutrit marginals}. (a) The axes show the components $p_{i}$ (with $i=0,1,2$) with respect to the original basis $\{\vec{e}_{i}\}_{i=0,1,2}$ and the simplex of all possible states (not necessarily reachable from a given state) is indicated by the dashed blue triangle. The parameter values chosen for the illustration are $\beta=1.35 E_{1}$ and $E_{2}=2 E_{1}$. For these values, the family of thermal states is shown as a solid purple curve from $\mathbf{p}(\beta\rightarrow\infty)=(1,0,0)^{T}$ to $\mathbf{p}(\beta\rightarrow0)=(\tfrac{1}{3},\tfrac{1}{3},\tfrac{1}{3})^{T}$. The initial state of the marginals is represented by the point $\mathbf{p}(\beta)$ indicated by a blue arrow. The shaded blue area shows the polytope of diagonal reduced states with $\tilde{\mathbf{p}}\SA=\tilde{\mathbf{p}}\SB$ that is reachable by the application of a unistochastic matrix $M$ on $\mathbf{q}$ in combination with a circulant unistochastic matrix $\tilde{M}$ applied to $\mathbf{r}$. This polytope is the convex hull of the points obtained from combining any (out of 6 possible) permutations of $q$ (green arrows) with cyclic permutations (out of 3 possible) of $\mathbf{r}$ (orange arrows). We have chosen to restrict to cyclic permutations on $\mathbf{r}$ here to illustrate that this is enough for $d=3$, whereas this is no longer the case when $d=4$. The red arrows between $\mathbf{p}(\beta)$, $\tilde{\mathbf{p}}\suptiny{0}{0}{(6,1)}$, and $\mathbf{p}(\beta\rightarrow0)$ delineate a triangle, which we show contains all points $\mathbf{p}(\beta\pr)$ that correspond to thermal states with temperatures higher than the original temperature, $\beta>\beta\pr$. (b) The polytope is shown in terms of the independent coordinates $x\equiv x_{0}=-p_{0}+p_{1}$ and $y\equiv x_{1}=-p_{0}-p_{1}+2p_{2}$ from Eq.~(\protect\ref{eq:new coords}). The vertex $\vec{v}\suptiny{0}{0}{(3)}_{1}$ is located at the intersection of the $y$ axis with the line connecting $\mathbf{p}(\beta)$ and $\tilde{\mathbf{p}}\suptiny{0}{0}{(6,1)}$.}
\label{fig:polytope}
\end{figure*}

In general, confirming condition~(\ref{item ii}) is relatively straightforward, either by proving the positivity of the partial derivatives
$\tfrac{\partial x_{m}}{\partial x_{n}}   =\bigl(\tfrac{\partial x_{n}}{\partial \beta}\bigr)^{-1}\tfrac{\partial x_{m}}{\partial \beta}\geq0$ and $\tfrac{\partial^{2} x_{m}}{\partial x_{n}^{2}}   =\tfrac{\partial^{2} x_{m}}{\partial \beta^{2}}- \bigl(\tfrac{\partial x_{m}}{\partial \beta}\bigr)\bigl(\tfrac{\partial x_{n}}{\partial \beta}\bigr)^{-1} \tfrac{\partial^{2} x_{n}}{\partial \beta^{2}}\geq0$ (which we show explicitly for $d=4$ in Appendix~\ref{app:CEforConditionii}), or by showing that all points $\vec p(\beta\pr)$ can be written as a convex combination of the vertices (\ref{eq:polyvertgeom}), which we prove in general for dimension $d$ in Appendix~\ref{app:CEforConditionii}. Let us therefore state condition~(\ref{item ii}) as the following lemma:
\begin{lemma}\label{lemma:condIgeom}
In the coordinates defined by Eq.~(\ref{eq:new coords}), the curve of thermal states at inverse temperature $\beta\pr\leq\beta$ satisfies condition~(\ref{item ii}) for all $\beta$ in all dimensions $d$.
\end{lemma}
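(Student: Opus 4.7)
The plan is to prove Lemma~\ref{lemma:condIgeom} by exhibiting an explicit nonnegative convex combination of the vertices $\vec{v}_0,\ldots,\vec{v}_{d-1}$ that reproduces $\vec{p}(\beta')$. The structural observation driving the construction is that, in the coordinates of Eq.~(\ref{eq:new coords}), $(\vec{v}_j)_k = x_k(\beta)$ for $k \geq j$ and $(\vec{v}_j)_k = 0$ for $k < j$, while the last coordinate is always $-1$. For any candidate weights $\lambda_0,\ldots,\lambda_{d-1}\geq 0$ with $\sum_i \lambda_i = 1$, the $k$-th coordinate of $\sum_i \lambda_i \vec{v}_i$ therefore equals $x_k(\beta)\,S_k$ with $S_k:=\sum_{i=0}^{k}\lambda_i$. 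Matching this against $x_k(\beta')$ for $k=0,\ldots,d-2$ uniquely fixes $S_k = r_k(\beta') := x_k(\beta')/x_k(\beta)$, and hence $\lambda_k = r_k(\beta') - r_{k-1}(\beta')$ (with the conventions $r_{-1}(\beta'):=0$ and $r_{d-1}(\beta'):=1$ enforcing normalisation). The lemma therefore reduces to showing that (a)~$r_k(\beta') \in [0,1]$ for every $k$, and (b)~$r_k(\beta')$ is nondecreasing in $k$.

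Property~(a) is equivalent to $|x_k(\beta)|$ being nondecreasing in $\beta$, which I would establish by rewriting the derivative in a manifestly nonnegative form,
\begin{equation*}
    \frac{d|x_k(\beta)|}{d\beta} = (1-|x_k|)\sum_{i=0}^{k}(E_{k+1}-E_i)p_i + |x_k|\sum_{j>k+1}(E_j-E_{k+1})p_j,
\end{equation*}
whose nonnegativity follows from the ordering of the energies together with the identity $1 - |x_k| = (k+2)p_{k+1} + \sum_{j>k+1} p_j \geq 0$. For property~(b), the key idea is to employ the telescoping identity $|x_k(\beta)| = \sum_{j=0}^{k} (j+1)\Delta_j(\beta)$ with $\Delta_j := p_j - p_{j+1} \geq 0$, so that $|x_{k+1}| = |x_k| + (k+2)\Delta_{k+1}$. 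Property~(b) is then equivalent to the ratio $|x_{k+1}(\beta)|/|x_k(\beta)|$ being nonincreasing in $\beta$, for which in turn it suffices that each $\Delta_{k+1}(\beta)/\Delta_j(\beta)$ with $j \leq k$ be nonincreasing in $\beta$.

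Computing the logarithmic derivative of $\Delta_{k+1}/\Delta_j$ and exploiting the inequality $E_{k+1}-E_j \geq E_{j+1}-E_j$ (valid whenever $j\leq k$), this final claim reduces to the elementary one-variable inequality
\begin{equation*}
    \frac{u}{1-e^{-u}} \,\geq\, \frac{v}{e^v-1}\qquad \text{for all } u,v \geq 0,
\end{equation*}
with $u=\beta(E_{j+1}-E_j)$ and $v=\beta(E_{k+2}-E_{k+1})$. This is immediate: the left-hand side is a nondecreasing function of $u$ bounded below by its $u\to 0^+$ limit of $1$, while the right-hand side is a nonincreasing function of $v$ bounded above by its $v\to 0^+$ limit of $1$. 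The main obstacle of the argument is the algebraic bookkeeping needed to reach this clean one-variable inequality---specifically, identifying the partial-sum decomposition $|x_k|=\sum_{j=0}^{k}(j+1)\Delta_j$ as the right reformulation and tracking which energy differences control which ratio. Once this reduction is in place, the argument goes through uniformly in $d$ and for arbitrary ordered Hamiltonians $E_0\leq E_1\leq\cdots\leq E_{d-1}$.
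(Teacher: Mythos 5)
Your proposal is correct, and its skeleton is the same as the paper's: the appendix also expands $\vec{p}(\beta')$ over the vertices with exactly your weights, $a_i = x_i(\beta')/x_i(\beta) - x_{i-1}(\beta')/x_{i-1}(\beta)$ (your $\lambda_k = r_k - r_{k-1}$), so that everything hinges on nonnegativity of these coefficients, which in turn is reduced to monotonicity in $\beta$ of ratios of consecutive coordinates. Where you genuinely differ is in how that monotonicity is proved. The paper differentiates $x_m/x_{m-1}$ directly in terms of the Boltzmann weights, reorganises the resulting double sum, and lands on the inequality $(E_{m+1}-E_i)\bigl(e^{\beta(E_m-E_i)}-1\bigr) \le (E_m-E_i)\bigl(e^{\beta(E_{m+1}-E_i)}-1\bigr)$, i.e.\ monotonicity of $h(y)=y/(e^y-1)$. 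You instead telescope $|x_k|=\sum_{j\le k}(j+1)\Delta_j$ with $\Delta_j=p_j-p_{j+1}$ and show each pairwise ratio $\Delta_{k+1}/\Delta_j$ is nonincreasing in $\beta$, reducing to the same elementary fact $u/(1-e^{-u})\ge 1\ge v/(e^v-1)$; I checked both your derivative identity for $|x_k|$ and the logarithmic-derivative reduction, and they are correct. Your route is more modular and buys one concrete extra: your property (a), that $|x_k(\beta)|$ is nondecreasing in $\beta$, explicitly gives nonnegativity of the last weight $\lambda_{d-1}=1-r_{d-2}$, whereas the paper's computation only covers ratios $x_m/x_{m-1}$ with $m\le d-2$ (it needs $E_{m+1}$ to exist) and leaves that endpoint coefficient implicit (it is easy, since $|x_{d-2}|=1-d\,p_{d-1}$ and $p_{d-1}$ decreases with $\beta$, but it is not spelled out). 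The only caveat, shared with the paper, is the degenerate case $x_k(\beta)=0$ or $\Delta_j=0$ (coinciding energy levels, or $\beta=0$), where the ratios are ill-defined; there the corresponding coordinate or gap vanishes identically for all temperatures, so the matching condition is vacuous and your argument goes through after discarding those indices — worth a sentence if you write this up.
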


Besides satisfying condition~(\ref{item ii}), it is also easy to see that the vertices $\vec{v}_{0}$ and $\vec{v}_{d-1}$ from condition~(\ref{item i}), which represent the initial thermal state and maximally mixed state in the new coordinates, are reachable for all dimensions. However, showing the inclusion of the rest of the points in~(\ref{item i}) is increasingly difficult, due to the rapidly growing number of possible polytope vertices and the difficulty of visualizing the $(d-1)$-dimensional polytope beyond $d=4$. In the following, we prove that condition~(\ref{item i}) holds (at least) in the particular cases of $d=3$ and $d=4$. See also Fig.~\ref{fig:polytope} for an illustration in dimension $3$ and Appendices~\ref{sec:two qutrit case} and~\ref{sec:two ququart case} for further details.

\begin{Theorems}{Geometric approach in $d=3$ and $d=4$}{dim3result}
In $d=3$ and $d=4$ systems, for every choice of Hamiltonians and initial inverse temperature $\beta$, the set of thermal states
with $\beta\pr \leq \beta$ is contained within the polytope with vertices defined in Eq.~(\ref{eq:polyvertgeom}), which proves the existence of STUs in the symmetric two-qutrit and two-quqart cases.
\end{Theorems}

\textbf{Proof.}\ For $d=3$ we have to prove that the point $\vec{v}\suptiny{0}{0}{(3)}_{1}=(0,x_1(\beta),-1)$ can be reached with transformations of the type $\vec{p}\mapsto\tilde{\vec{p}}=M_{q}\vec{q}+(\openone +\Pi)M_r\vec{r}$ for some doubly stochastic $3\times3$ matrices $M_{q}$ and $M_{r}$, where $\vec{q}$ and $\vec{r}$ are as in Eq.~(\ref{vecqrs}). This is indeed the case, e.g., for $M_r=\openone$ being the identity matrix and
\begin{align}
    M_{q}   &=
    \begin{pmatrix} m & 1-m & 0 \\ 1-m & m & 0 \\ 0 & 0 & 1 \end{pmatrix},
\end{align}
with $m=1-1/[2(p_{0}+p_{1})]$, which is a doubly stochastic matrix since $p_{0}+p_{1}\geq 1/2$ for $d=3$. Note that, geometrically, this choice of $M_{q}$ and $M_{r}$ represents a convex combination of the points $\vec{p}(\beta)$ and $\tilde{\vec{p}}^{(6,1)}$ shown in Fig.~(\ref{fig:polytope}). For this transformation, we have $\tilde{\vec{p}}=\sum_{i=0}^{2}\tilde{p}_{i}\vec{e}_{i}$, where the components with respect to the original basis $\{\vec{e}_{j}\}_{j=0,1,2}$ are $\tilde{p}_{0}=\tilde{p}_{1}=(p_{0}+p_{1})/2$ and $\tilde{p}_{2}=p_{2}$, which means that the components of $\tilde{\vec{p}}$ with respect to to the new coordinates of Eq.~(\ref{eq:new coords}) are $\tilde{\vec{p}}=(0,2-3(p_{0}+p_{1}),-1)^{T}=(0,x_{1}(\beta),-1)^{T}=\vec{v}\suptiny{0}{0}{(3)}_{1}$. This concludes the proof for $d=3$.

For $d=4$, we have to show that the two points $\vec{v}\suptiny{0}{0}{(4)}_{1}=(0,x_1(\beta),x_2(\beta),-1)$ and
$\vec{v}\suptiny{0}{0}{(4)}_{2}=(0,0,x_2(\beta),-1)$ can be reached with transformations of the type $\vec{p}\mapsto\tilde{\vec{p}}=M_q\vec{q}+(\openone+\Pi)M_{r_1}\vec{r}_1+\tfrac{1}{2} (\openone+\Pi^2)M_{r_2}\vec{r}_2$ for some doubly stochastic $4\times4$ matrices $M_{q}$, $M_{r_{1}}$, and $M_{r_{2}}$, where $\vec{q}$ and $\vec{r}_{1}$ are as in Eq.~(\ref{eq:d4 vectors}).

The point $\vec{v}\suptiny{0}{0}{(4)}_{1}$ can be reached with the equivalent of the transformation used to reach $\vec{v}\suptiny{0}{0}{(3)}_{1}$ above, that is, using $M_{r_{1}}=M_{r_{2}}=\openone$ and
\begin{align}
    M_{q}   &=
    \begin{pmatrix}
    m & 1-m & 0 & 0 \\
    1-m & m & 0 & 0 \\
    0 & 0 & 1 & 0 \\
    0 & 0 & 0 & 1 \end{pmatrix}  ,
\end{align}
with $m=1-1/[2(p_{0}+p_{1})]$, which is again doubly stochastic since $p_{0}+p_{1}\geq 1/2$ also holds for $d=4$.

To prove that $\vec{v}\suptiny{0}{0}{(4)}_{2}$ can be reached, one can see that it can be obtained as a convex combination of (at most) $5$ of the vertices in Eq.~(\ref{eq:EqualEvoMarginal}), as we show in detail in Appendix~\ref{sec:two ququart case}. \qed

For higher dimensions, as we already mentioned, the problem becomes more and more complex, but one can try to build a recursive approach based on the above lower dimensional proofs and borrowing some ideas from the ``passing on the norm'' approach. In particular, we outline a possible route for such an approach for the case $d=5$ in Appendix~\ref{app:higherdgeom}, where we can show the existence of STUs for $d=5$ for a subset of all possible Hamiltonians.


\section{Conclusions}\label{sec:conclusion}

We have investigated the generation of correlations in initially thermal, uncorrelated systems. For two identical $d$-dimensional systems, the conversion of energy into correlations as measured by the mutual information is optimal when the final state can be reached unitarily and both marginals of the final state are thermal at the same effective temperature. For any given system, the possibility of such an optimal conversion for all input energies (or desired amout of correlations) hence hinges upon the existence of symmetrically thermalizing unitaries for all initial temperatures and effective local final temperatures. This gives rise to the central question: \emph{Is it possible to find unitaries (STUs) transforming thermal marginals to other thermal marginals with higher temperature for any local Hamiltonian?}

In asymmetric cases, where the two local Hamiltonians are different, this is generally not possible, as we have shown via constraints on the subsystem entropies. For the symmetric case (equal local Hamiltonians), we have provided a framework based on locally classical subspaces in $d\times d$-dimensional systems to address this question beyond previous partial results for equally gapped Hamiltonians~\cite{HuberPerarnauHovhannisyanSkrzypczykKloecklBrunnerAcin2015}. In particular, we have shown that STUs exist for all (locally matching) Hamiltonians in local dimensions $d=3$ and $d=4$, and we conjecture that STUs exist in all local dimensions.

To showcase the complexity and interesting features of the problem, as well as to provide further guidance for proving (or disproving) our conjecture, we have discussed three approaches operating within our framework. Using the ``majorised marginals" approach we showed for two qutrits ($d=3$) that, not only do STUs generically exist for any local Hamiltonian at any temperature, but also it is indeed possible to symmetrically reach any marginal that is majorised by the initial marginals. However, since this approach fails to be generalised to higher dimensions, we introduced two alternative approaches that we call ``passing on the norm" and ``geometric approach", respectively. Both allow proving the existence of STUs in the two-qutrit case. Using the ``passings on the norm" approach, we were further able to show that STUs exist for $d=4$ when the local Hamiltonians satisfy specific conditions on their energy gaps, i.e., $\delta_{i+1}\leq \delta_i$. Finally, we have used the ``geometric" method to prove the existence of STUs in local dimension $d=4$ for all symmetric Hamiltonians, and we formulate a set of conditions to extend this approach to higher dimensions.

Our work addresses a fundamental question in quantum thermodynamics, whether correlations can always be created energetically optimally, or not. Besides addressing a question about the conversion between thermodynamic and information-theoretic resources, the problem at hand can be considered a part of the quantum marginal problem. What kind of marginals can be unitarily reached from (are compatible with) a particular global state? The framework we put forward in terms of locally classical subspaces is more general than symmetric marginal transformations and also goes beyond mere majorisation relations of marginal eigenvalues. As such, it may also be relevant for other variants of this question, such as addressing the catalytic entropy conjecture of~\cite{BoesEisertGallegoMuellerWilming2019}. Just using one of the many possible such subspaces, we managed to resolve our main question for dimensions $d=3$ and $d=4$ and it would be interesting to know if all marginal eigenvalue distributions could potentially be reached by operating in locally classical subspaces only. Finally, a significant challenge lies in specializing from the creation of arbitrary correlations to generating entanglement~\cite{HuberPerarnauHovhannisyanSkrzypczykKloecklBrunnerAcin2015, BruschiPerarnauLlobetFriisHovhannisyanHuber2015, PiccioneMilitelloNapoliBellomo2019, GuhaAlimuddinParashar2019}.


\begin{acknowledgments}
We are grateful to Paul Boes, Mirdit Doda, Christian Gogolin, Claude Kl{\"o}ckl, and Alex Monras for fruitful discussions. We acknowledge support by the Austrian Science Fund (FWF) through the START project Y879-N27, the Lise-Meitner project M 2462-N27, the project P 31339-N27, the Zukunftskolleg ZK03 and the joint Czech-Austrian project MultiQUEST (I 3053-N27 and GF17-33780L). F.B. acknowledges support by the Ministry of Science, Research, and Technology of Iran (through funding for graduate research visits) and Sharif University of Technology's Office of Vice President for Research and Technology through Grant QA960512. F.C. acknowledges funding from the Swiss National Science Foundation (SNF) through the AMBIZIONE grant PZ00P2$\_$161351 and Grant No. 200021$\_$169002.
\end{acknowledgments}


\bibliography{bibfile}

\begin{thebibliography}{35}%
\makeatletter
\providecommand \@ifxundefined [1]{%
 \@ifx{#1\undefined}
}%
\providecommand \@ifnum [1]{%
 \ifnum #1\expandafter \@firstoftwo
 \else \expandafter \@secondoftwo
 \fi
}%
\providecommand \@ifx [1]{%
 \ifx #1\expandafter \@firstoftwo
 \else \expandafter \@secondoftwo
 \fi
}%
\providecommand \natexlab [1]{#1}%
\providecommand \enquote  [1]{``#1''}%
\providecommand \bibnamefont  [1]{#1}%
\providecommand \bibfnamefont [1]{#1}%
\providecommand \citenamefont [1]{#1}%
\providecommand \href@noop [0]{\@secondoftwo}%
\providecommand \href [0]{\begingroup \@sanitize@url \@href}%
\providecommand \@href[1]{\@@startlink{#1}\@@href}%
\providecommand \@@href[1]{\endgroup#1\@@endlink}%
\providecommand \@sanitize@url [0]{\catcode `\\12\catcode `\$12\catcode
  `\&12\catcode `\#12\catcode `\^12\catcode `\_12\catcode `\%12\relax}%
\providecommand \@@startlink[1]{}%
\providecommand \@@endlink[0]{}%
\providecommand \url  [0]{\begingroup\@sanitize@url \@url }%
\providecommand \@url [1]{\endgroup\@href {#1}{\urlprefix }}%
\providecommand \urlprefix  [0]{URL }%
\providecommand \Eprint [0]{\href }%
\providecommand \doibase [0]{http://dx.doi.org/}%
\providecommand \selectlanguage [0]{\@gobble}%
\providecommand \bibinfo  [0]{\@secondoftwo}%
\providecommand \bibfield  [0]{\@secondoftwo}%
\providecommand \translation [1]{[#1]}%
\providecommand \BibitemOpen [0]{}%
\providecommand \bibitemStop [0]{}%
\providecommand \bibitemNoStop [0]{.\EOS\space}%
\providecommand \EOS [0]{\spacefactor3000\relax}%
\providecommand \BibitemShut  [1]{\csname bibitem#1\endcsname}%
\let\auto@bib@innerbib\@empty
\bibitem [{\citenamefont {Guryanova}\ \emph {et~al.}(2018)\citenamefont
  {Guryanova}, \citenamefont {Friis},\ and\ \citenamefont
  {Huber}}]{GuryanovaFriisHuber2018}%
  \BibitemOpen
  \bibfield  {author} {\bibinfo {author} {\bibfnamefont {Yelena}\ \bibnamefont
  {Guryanova}}, \bibinfo {author} {\bibfnamefont {Nicolai}\ \bibnamefont
  {Friis}}, \ and\ \bibinfo {author} {\bibfnamefont {Marcus}\ \bibnamefont
  {Huber}},\ }\href {https://arxiv.org/abs/1805.11899} {\enquote {\bibinfo
  {title} {Ideal projective measurements have infinite resource costs},}\ }
  (\bibinfo {year} {2018}),\ \Eprint {http://arxiv.org/abs/arXiv:1805.11899}
  {arXiv:1805.11899} \BibitemShut {NoStop}%
\bibitem [{\citenamefont {Friis}\ \emph {et~al.}(2016)\citenamefont {Friis},
  \citenamefont {Huber},\ and\ \citenamefont
  {Perarnau-Llobet}}]{FriisHuberPerarnauLlobet2016}%
  \BibitemOpen
  \bibfield  {author} {\bibinfo {author} {\bibfnamefont {Nicolai}\ \bibnamefont
  {Friis}}, \bibinfo {author} {\bibfnamefont {Marcus}\ \bibnamefont {Huber}}, \
  and\ \bibinfo {author} {\bibfnamefont {Mart{\'i}}\ \bibnamefont
  {Perarnau-Llobet}},\ }\bibfield  {title} {\enquote {\bibinfo {title}
  {Energetics of correlations in interacting systems},}\ }\href {\doibase
  10.1103/PhysRevE.93.042135} {\bibfield  {journal} {\bibinfo  {journal} {Phys.
  Rev. E}\ }\textbf {\bibinfo {volume} {93}},\ \bibinfo {pages} {042135}
  (\bibinfo {year} {2016})},\ \Eprint {http://arxiv.org/abs/arXiv:1511.08654}
  {arXiv:1511.08654} \BibitemShut {NoStop}%
\bibitem [{\citenamefont {Perarnau-Llobet}\ \emph {et~al.}(2015)\citenamefont
  {Perarnau-Llobet}, \citenamefont {Hovhannisyan}, \citenamefont {Huber},
  \citenamefont {Skrzypczyk}, \citenamefont {Brunner},\ and\ \citenamefont
  {Ac$\acute{\i}$n}}]{PerarnauLlobetHovhannisyanHuberSkrzypczykBrunnerAcin2015}%
  \BibitemOpen
  \bibfield  {author} {\bibinfo {author} {\bibfnamefont {Mart{\'i}}\
  \bibnamefont {Perarnau-Llobet}}, \bibinfo {author} {\bibfnamefont {Karen~V.}\
  \bibnamefont {Hovhannisyan}}, \bibinfo {author} {\bibfnamefont {Marcus}\
  \bibnamefont {Huber}}, \bibinfo {author} {\bibfnamefont {Paul}\ \bibnamefont
  {Skrzypczyk}}, \bibinfo {author} {\bibfnamefont {Nicolas}\ \bibnamefont
  {Brunner}}, \ and\ \bibinfo {author} {\bibfnamefont {Antonio}\ \bibnamefont
  {Ac$\acute{\i}$n}},\ }\bibfield  {title} {\enquote {\bibinfo {title}
  {Extractable work from correlations},}\ }\href {\doibase
  10.1103/PhysRevX.5.041011} {\bibfield  {journal} {\bibinfo  {journal} {Phys.
  Rev. X}\ }\textbf {\bibinfo {volume} {5}},\ \bibinfo {pages} {041011}
  (\bibinfo {year} {2015})},\ \Eprint {http://arxiv.org/abs/arXiv:1407.7765}
  {arXiv:1407.7765} \BibitemShut {NoStop}%
\bibitem [{\citenamefont {Goold}\ \emph {et~al.}(2016)\citenamefont {Goold},
  \citenamefont {Huber}, \citenamefont {Riera}, \citenamefont {del Rio},\ and\
  \citenamefont {Skrzypczyk}}]{GooldHuberRieraDelRioSkrzypczyk2016}%
  \BibitemOpen
  \bibfield  {author} {\bibinfo {author} {\bibfnamefont {John}\ \bibnamefont
  {Goold}}, \bibinfo {author} {\bibfnamefont {Marcus}\ \bibnamefont {Huber}},
  \bibinfo {author} {\bibfnamefont {Arnau}\ \bibnamefont {Riera}}, \bibinfo
  {author} {\bibfnamefont {Lidia}\ \bibnamefont {del Rio}}, \ and\ \bibinfo
  {author} {\bibfnamefont {Paul}\ \bibnamefont {Skrzypczyk}},\ }\bibfield
  {title} {\enquote {\bibinfo {title} {The role of quantum information in
  thermodynamics \textemdash\ a topical review},}\ }\href {\doibase
  10.1088/1751-8113/49/14/143001} {\bibfield  {journal} {\bibinfo  {journal}
  {J. Phys. A: Math. Theor.}\ }\textbf {\bibinfo {volume} {49}},\ \bibinfo
  {pages} {143001} (\bibinfo {year} {2016})},\ \Eprint
  {http://arxiv.org/abs/arXiv:1505.07835} {arXiv:1505.07835} \BibitemShut
  {NoStop}%
\bibitem [{\citenamefont {Vinjanampathy}\ and\ \citenamefont
  {Anders}(2016)}]{VinjanampathyAnders2016}%
  \BibitemOpen
  \bibfield  {author} {\bibinfo {author} {\bibfnamefont {Sai}\ \bibnamefont
  {Vinjanampathy}}\ and\ \bibinfo {author} {\bibfnamefont {Janet}\ \bibnamefont
  {Anders}},\ }\bibfield  {title} {\enquote {\bibinfo {title} {{Quantum
  Thermodynamics}},}\ }\href {\doibase 10.1080/00107514.2016.1201896}
  {\bibfield  {journal} {\bibinfo  {journal} {Contemp. Phys.}\ }\textbf
  {\bibinfo {volume} {57}},\ \bibinfo {pages} {545--579} (\bibinfo {year}
  {2016})},\ \Eprint {http://arxiv.org/abs/arXiv:1508.06099} {arXiv:1508.06099}
  \BibitemShut {NoStop}%
\bibitem [{\citenamefont {Millen}\ and\ \citenamefont
  {Xuereb}(2016)}]{MillenXuereb2016}%
  \BibitemOpen
  \bibfield  {author} {\bibinfo {author} {\bibfnamefont {James}\ \bibnamefont
  {Millen}}\ and\ \bibinfo {author} {\bibfnamefont {Andr{\'e}}\ \bibnamefont
  {Xuereb}},\ }\bibfield  {title} {\enquote {\bibinfo {title} {Perspective on
  quantum thermodynamics},}\ }\href {\doibase 10.1088/1367-2630/18/1/011002}
  {\bibfield  {journal} {\bibinfo  {journal} {New J. Phys.}\ }\textbf {\bibinfo
  {volume} {18}},\ \bibinfo {pages} {011002} (\bibinfo {year} {2016})},\
  \Eprint {http://arxiv.org/abs/arXiv:1509.01086} {arXiv:1509.01086}
  \BibitemShut {NoStop}%
\bibitem [{\citenamefont {Vitagliano}\ \emph {et~al.}(2019)\citenamefont
  {Vitagliano}, \citenamefont {Kl{\"o}ckl}, \citenamefont {Huber},\ and\
  \citenamefont {Friis}}]{VitaglianoKloecklHuberFriis2019}%
  \BibitemOpen
  \bibfield  {author} {\bibinfo {author} {\bibfnamefont {Giuseppe}\
  \bibnamefont {Vitagliano}}, \bibinfo {author} {\bibfnamefont {Claude}\
  \bibnamefont {Kl{\"o}ckl}}, \bibinfo {author} {\bibfnamefont {Marcus}\
  \bibnamefont {Huber}}, \ and\ \bibinfo {author} {\bibfnamefont {Nicolai}\
  \bibnamefont {Friis}},\ }\bibfield  {title} {\enquote {\bibinfo {title}
  {Trade-off between work and correlations in quantum thermodynamics},}\ }in\
  \href {\doibase 10.1007/978-3-319-99046-0_30} {\emph {\bibinfo {booktitle}
  {Thermodynamics in the Quantum Regime}}},\ \bibinfo {editor} {edited by\
  \bibinfo {editor} {\bibfnamefont {Felix}\ \bibnamefont {Binder}}, \bibinfo
  {editor} {\bibfnamefont {Luis~A.}\ \bibnamefont {Correa}}, \bibinfo {editor}
  {\bibfnamefont {Christian}\ \bibnamefont {Gogolin}}, \bibinfo {editor}
  {\bibfnamefont {Janet}\ \bibnamefont {Anders}}, \ and\ \bibinfo {editor}
  {\bibfnamefont {Gerardo}\ \bibnamefont {Adesso}}}\ (\bibinfo  {publisher}
  {Springer},\ \bibinfo {year} {2019})\ Chap.~\bibinfo {chapter} {30}, pp.\
  \bibinfo {pages} {731--750},\ \Eprint {http://arxiv.org/abs/arXiv:1803.06884}
  {arXiv:1803.06884} \BibitemShut {NoStop}%
\bibitem [{\citenamefont {Bennett}(1982)}]{Bennett1982}%
  \BibitemOpen
  \bibfield  {author} {\bibinfo {author} {\bibfnamefont {Charles~H.}\
  \bibnamefont {Bennett}},\ }\bibfield  {title} {\enquote {\bibinfo {title}
  {The thermodynamics of computation -- a review},}\ }\href {\doibase
  10.1007/BF02084158} {\bibfield  {journal} {\bibinfo  {journal} {Int. J.
  Theor. Phys.}\ }\textbf {\bibinfo {volume} {21}},\ \bibinfo {pages}
  {905--940} (\bibinfo {year} {1982})}\BibitemShut {NoStop}%
\bibitem [{\citenamefont {Leff}\ and\ \citenamefont {Rex}(2003)}]{LeffRex2003}%
  \BibitemOpen
  \bibinfo {editor} {\bibfnamefont {Harvey}\ \bibnamefont {Leff}}\ and\
  \bibinfo {editor} {\bibfnamefont {Andrew~F.}\ \bibnamefont {Rex}},\ eds.,\
  \href@noop {} {\emph {\bibinfo {title} {Maxwell Demon 2: Entropy, Classical
  and Quantum Information, Computing}}}\ (\bibinfo  {publisher} {Institute of
  Physics},\ \bibinfo {address} {Bristol},\ \bibinfo {year} {2003})\BibitemShut
  {NoStop}%
\bibitem [{\citenamefont {Mayurama}\ \emph {et~al.}(2009)\citenamefont
  {Mayurama}, \citenamefont {Nori},\ and\ \citenamefont
  {Vedral}}]{MayuramaNoriVedral2009}%
  \BibitemOpen
  \bibfield  {author} {\bibinfo {author} {\bibfnamefont {Koji}\ \bibnamefont
  {Mayurama}}, \bibinfo {author} {\bibfnamefont {Franco}\ \bibnamefont {Nori}},
  \ and\ \bibinfo {author} {\bibfnamefont {Vlatko}\ \bibnamefont {Vedral}},\
  }\bibfield  {title} {\enquote {\bibinfo {title} {{Colloquium: The physics of
  Maxwell’s demon and information}},}\ }\href {\doibase
  10.1103/RevModPhys.81.1} {\bibfield  {journal} {\bibinfo  {journal} {Rev.
  Mod. Phys.}\ }\textbf {\bibinfo {volume} {81}},\ \bibinfo {pages} {1--23}
  (\bibinfo {year} {2009})},\ \Eprint {http://arxiv.org/abs/arXiv:0707.3400}
  {arXiv:0707.3400} \BibitemShut {NoStop}%
\bibitem [{\citenamefont {Huber}\ \emph {et~al.}(2015)\citenamefont {Huber},
  \citenamefont {Perarnau-Llobet}, \citenamefont {Hovhannisyan}, \citenamefont
  {Skrzypczyk}, \citenamefont {Kl{\"o}ckl}, \citenamefont {Brunner},\ and\
  \citenamefont
  {Ac$\acute{\i}$n}}]{HuberPerarnauHovhannisyanSkrzypczykKloecklBrunnerAcin2015}%
  \BibitemOpen
  \bibfield  {author} {\bibinfo {author} {\bibfnamefont {Marcus}\ \bibnamefont
  {Huber}}, \bibinfo {author} {\bibfnamefont {Mart{\'i}}\ \bibnamefont
  {Perarnau-Llobet}}, \bibinfo {author} {\bibfnamefont {K.~V.}\ \bibnamefont
  {Hovhannisyan}}, \bibinfo {author} {\bibfnamefont {Paul}\ \bibnamefont
  {Skrzypczyk}}, \bibinfo {author} {\bibfnamefont {Claude}\ \bibnamefont
  {Kl{\"o}ckl}}, \bibinfo {author} {\bibfnamefont {Nicolas}\ \bibnamefont
  {Brunner}}, \ and\ \bibinfo {author} {\bibfnamefont {Antonio}\ \bibnamefont
  {Ac$\acute{\i}$n}},\ }\bibfield  {title} {\enquote {\bibinfo {title}
  {Thermodynamic cost of creating correlations},}\ }\href {\doibase
  10.1088/1367-2630/17/6/065008} {\bibfield  {journal} {\bibinfo  {journal}
  {New J. Phys.}\ }\textbf {\bibinfo {volume} {17}},\ \bibinfo {pages} {065008}
  (\bibinfo {year} {2015})},\ \Eprint {http://arxiv.org/abs/arXiv:1404.2169}
  {arXiv:1404.2169} \BibitemShut {NoStop}%
\bibitem [{\citenamefont {Bruschi}\ \emph {et~al.}(2015)\citenamefont
  {Bruschi}, \citenamefont {Perarnau-Llobet}, \citenamefont {Friis},
  \citenamefont {Hovhannisyan},\ and\ \citenamefont
  {Huber}}]{BruschiPerarnauLlobetFriisHovhannisyanHuber2015}%
  \BibitemOpen
  \bibfield  {author} {\bibinfo {author} {\bibfnamefont {David~Edward}\
  \bibnamefont {Bruschi}}, \bibinfo {author} {\bibfnamefont {Mart{\'i}}\
  \bibnamefont {Perarnau-Llobet}}, \bibinfo {author} {\bibfnamefont {Nicolai}\
  \bibnamefont {Friis}}, \bibinfo {author} {\bibfnamefont {Karen~V.}\
  \bibnamefont {Hovhannisyan}}, \ and\ \bibinfo {author} {\bibfnamefont
  {Marcus}\ \bibnamefont {Huber}},\ }\bibfield  {title} {\enquote {\bibinfo
  {title} {The thermodynamics of creating correlations: Limitations and optimal
  protocols},}\ }\href {\doibase 10.1103/PhysRevE.91.032118} {\bibfield
  {journal} {\bibinfo  {journal} {Phys. Rev. E}\ }\textbf {\bibinfo {volume}
  {91}},\ \bibinfo {pages} {032118} (\bibinfo {year} {2015})},\ \Eprint
  {http://arxiv.org/abs/arXiv:1409.4647} {arXiv:1409.4647} \BibitemShut
  {NoStop}%
\bibitem [{\citenamefont {Binder}\ \emph {et~al.}(2015)\citenamefont {Binder},
  \citenamefont {Vinjanampathy}, \citenamefont {Modi},\ and\ \citenamefont
  {Goold}}]{BinderVinjanampathyModiGoold2015}%
  \BibitemOpen
  \bibfield  {author} {\bibinfo {author} {\bibfnamefont {Felix~C.}\
  \bibnamefont {Binder}}, \bibinfo {author} {\bibfnamefont {Sai}\ \bibnamefont
  {Vinjanampathy}}, \bibinfo {author} {\bibfnamefont {Kavan}\ \bibnamefont
  {Modi}}, \ and\ \bibinfo {author} {\bibfnamefont {John}\ \bibnamefont
  {Goold}},\ }\bibfield  {title} {\enquote {\bibinfo {title} {{Quantacell:
  Powerful charging of quantum batteries}},}\ }\href {\doibase
  10.1088/1367-2630/17/7/075015} {\bibfield  {journal} {\bibinfo  {journal}
  {New J. Phys.}\ }\textbf {\bibinfo {volume} {17}},\ \bibinfo {pages} {075015}
  (\bibinfo {year} {2015})},\ \Eprint {http://arxiv.org/abs/arXiv:1503.07005}
  {arXiv:1503.07005} \BibitemShut {NoStop}%
\bibitem [{\citenamefont {Alipour}\ \emph {et~al.}(2016)\citenamefont
  {Alipour}, \citenamefont {Benatti}, \citenamefont {Bakhshinezhad},
  \citenamefont {Afsary}, \citenamefont {Marcantoni},\ and\ \citenamefont
  {Rezakhani}}]{AlipourBenattiBakhshinezhadAfsaryMarcantoniRezakhani2016}%
  \BibitemOpen
  \bibfield  {author} {\bibinfo {author} {\bibfnamefont {Sahar}\ \bibnamefont
  {Alipour}}, \bibinfo {author} {\bibfnamefont {Fabio}\ \bibnamefont
  {Benatti}}, \bibinfo {author} {\bibfnamefont {Faraj}\ \bibnamefont
  {Bakhshinezhad}}, \bibinfo {author} {\bibfnamefont {Maryam}\ \bibnamefont
  {Afsary}}, \bibinfo {author} {\bibfnamefont {Stefano}\ \bibnamefont
  {Marcantoni}}, \ and\ \bibinfo {author} {\bibfnamefont {Ali~T.}\ \bibnamefont
  {Rezakhani}},\ }\bibfield  {title} {\enquote {\bibinfo {title} {{Correlations
  in quantum thermodynamics: Heat, work, and entropy production}},}\ }\href
  {\doibase 10.1038/srep35568} {\bibfield  {journal} {\bibinfo  {journal} {Sci.
  Rep.}\ }\textbf {\bibinfo {volume} {6}},\ \bibinfo {pages} {35568} (\bibinfo
  {year} {2016})},\ \Eprint {http://arxiv.org/abs/arXiv:1606.08869}
  {arXiv:1606.08869} \BibitemShut {NoStop}%
\bibitem [{\citenamefont {Bera}\ \emph {et~al.}(2017)\citenamefont {Bera},
  \citenamefont {Riera}, \citenamefont {Lewenstein},\ and\ \citenamefont
  {Winter}}]{BeraRieraLewensteinWinter2017}%
  \BibitemOpen
  \bibfield  {author} {\bibinfo {author} {\bibfnamefont {Manabendra~Nath}\
  \bibnamefont {Bera}}, \bibinfo {author} {\bibfnamefont {Arnau}\ \bibnamefont
  {Riera}}, \bibinfo {author} {\bibfnamefont {Maciej}\ \bibnamefont
  {Lewenstein}}, \ and\ \bibinfo {author} {\bibfnamefont {Andreas}\
  \bibnamefont {Winter}},\ }\bibfield  {title} {\enquote {\bibinfo {title}
  {{Generalized Laws of Thermodynamics in the Presence of Correlations}},}\
  }\href {\doibase 10.1038/s41467-017-02370-x} {\bibfield  {journal} {\bibinfo
  {journal} {Nat. Commun.}\ }\textbf {\bibinfo {volume} {8}},\ \bibinfo {pages}
  {2180} (\bibinfo {year} {2017})},\ \Eprint
  {http://arxiv.org/abs/arXiv:1612.04779} {arXiv:1612.04779} \BibitemShut
  {NoStop}%
\bibitem [{\citenamefont {M{\"u}ller}(2018)}]{Mueller2018}%
  \BibitemOpen
  \bibfield  {author} {\bibinfo {author} {\bibfnamefont {Markus~P.}\
  \bibnamefont {M{\"u}ller}},\ }\bibfield  {title} {\enquote {\bibinfo {title}
  {{Correlating thermal machines and the second law at the nanoscale}},}\
  }\href {\doibase 10.1103/PhysRevX.8.041051} {\bibfield  {journal} {\bibinfo
  {journal} {Phys. Rev. X}\ }\textbf {\bibinfo {volume} {8}},\ \bibinfo {pages}
  {041051} (\bibinfo {year} {2018})},\ \Eprint
  {http://arxiv.org/abs/arXiv:1707.03451} {arXiv:1707.03451} \BibitemShut
  {NoStop}%
\bibitem [{\citenamefont {Bera}\ \emph {et~al.}(2019)\citenamefont {Bera},
  \citenamefont {Riera}, \citenamefont {Lewenstein}, \citenamefont
  {Baghali~Khanian},\ and\ \citenamefont
  {Winter}}]{BeraRieraLewensteinBaghaliWinter2019}%
  \BibitemOpen
  \bibfield  {author} {\bibinfo {author} {\bibfnamefont {Manabendra~Nath}\
  \bibnamefont {Bera}}, \bibinfo {author} {\bibfnamefont {Arnau}\ \bibnamefont
  {Riera}}, \bibinfo {author} {\bibfnamefont {Maciej}\ \bibnamefont
  {Lewenstein}}, \bibinfo {author} {\bibfnamefont {Zahra}\ \bibnamefont
  {Baghali~Khanian}}, \ and\ \bibinfo {author} {\bibfnamefont {Andreas}\
  \bibnamefont {Winter}},\ }\bibfield  {title} {\enquote {\bibinfo {title}
  {{Thermodynamics as a Consequence of Information Conservation}},}\ }\href
  {\doibase 10.22331/q-2019-02-14-121} {\bibfield  {journal} {\bibinfo
  {journal} {Quantum}\ }\textbf {\bibinfo {volume} {3}},\ \bibinfo {pages}
  {121} (\bibinfo {year} {2019})},\ \Eprint
  {http://arxiv.org/abs/arXiv:1707.01750} {arXiv:1707.01750} \BibitemShut
  {NoStop}%
\bibitem [{\citenamefont {Sapienza}\ \emph {et~al.}(2019)\citenamefont
  {Sapienza}, \citenamefont {Cerisola},\ and\ \citenamefont
  {Roncaglia}}]{SapienzaCerisolaRoncaglia2019}%
  \BibitemOpen
  \bibfield  {author} {\bibinfo {author} {\bibfnamefont {Facundo}\ \bibnamefont
  {Sapienza}}, \bibinfo {author} {\bibfnamefont {Federico}\ \bibnamefont
  {Cerisola}}, \ and\ \bibinfo {author} {\bibfnamefont {Augusto~J.}\
  \bibnamefont {Roncaglia}},\ }\bibfield  {title} {\enquote {\bibinfo {title}
  {Correlations as a resource in quantum thermodynamics},}\ }\href {\doibase
  10.1038/s41467-019-10572-8} {\bibfield  {journal} {\bibinfo  {journal} {Nat.
  Commun.}\ }\textbf {\bibinfo {volume} {10}},\ \bibinfo {pages} {2492}
  (\bibinfo {year} {2019})},\ \Eprint {http://arxiv.org/abs/arXiv:1810.01215}
  {arXiv:1810.01215} \BibitemShut {NoStop}%
\bibitem [{\citenamefont {Jevtic}\ \emph
  {et~al.}(2012{\natexlab{a}})\citenamefont {Jevtic}, \citenamefont
  {Jennings},\ and\ \citenamefont {Rudolph}}]{JevticJenningsRudolph2012}%
  \BibitemOpen
  \bibfield  {author} {\bibinfo {author} {\bibfnamefont {Sania}\ \bibnamefont
  {Jevtic}}, \bibinfo {author} {\bibfnamefont {David}\ \bibnamefont
  {Jennings}}, \ and\ \bibinfo {author} {\bibfnamefont {Terry}\ \bibnamefont
  {Rudolph}},\ }\bibfield  {title} {\enquote {\bibinfo {title} {Quantum mutual
  information along unitary orbits},}\ }\href {\doibase
  10.1103/PhysRevA.85.052121} {\bibfield  {journal} {\bibinfo  {journal} {Phys.
  Rev. A}\ }\textbf {\bibinfo {volume} {85}},\ \bibinfo {pages} {052121}
  (\bibinfo {year} {2012}{\natexlab{a}})},\ \Eprint
  {http://arxiv.org/abs/arxiv:1112.3372} {arxiv:1112.3372} \BibitemShut
  {NoStop}%
\bibitem [{\citenamefont {Jevtic}\ \emph
  {et~al.}(2012{\natexlab{b}})\citenamefont {Jevtic}, \citenamefont
  {Jennings},\ and\ \citenamefont {Rudolph}}]{JevticJenningsRudolph2012b}%
  \BibitemOpen
  \bibfield  {author} {\bibinfo {author} {\bibfnamefont {Sania}\ \bibnamefont
  {Jevtic}}, \bibinfo {author} {\bibfnamefont {David}\ \bibnamefont
  {Jennings}}, \ and\ \bibinfo {author} {\bibfnamefont {Terry}\ \bibnamefont
  {Rudolph}},\ }\bibfield  {title} {\enquote {\bibinfo {title} {{Maximally and
  Minimally Correlated States Attainable within a Closed Evolving System}},}\
  }\href {\doibase 10.1103/PhysRevLett.108.110403} {\bibfield  {journal}
  {\bibinfo  {journal} {Phys. Rev. Lett.}\ }\textbf {\bibinfo {volume} {108}},\
  \bibinfo {pages} {110403} (\bibinfo {year} {2012}{\natexlab{b}})},\ \Eprint
  {http://arxiv.org/abs/arxiv:1110.2371} {arxiv:1110.2371} \BibitemShut
  {NoStop}%
\bibitem [{\citenamefont {Clivaz}\ \emph
  {et~al.}(2019{\natexlab{a}})\citenamefont {Clivaz}, \citenamefont {Silva},
  \citenamefont {Haack}, \citenamefont {Bohr~Brask}, \citenamefont {Brunner},\
  and\ \citenamefont {Huber}}]{ClivazSilvaHaackBohrBraskBrunnerHuber2019a}%
  \BibitemOpen
  \bibfield  {author} {\bibinfo {author} {\bibfnamefont {F.}~\bibnamefont
  {Clivaz}}, \bibinfo {author} {\bibfnamefont {R.}~\bibnamefont {Silva}},
  \bibinfo {author} {\bibfnamefont {G.}~\bibnamefont {Haack}}, \bibinfo
  {author} {\bibfnamefont {J.}~\bibnamefont {Bohr~Brask}}, \bibinfo {author}
  {\bibfnamefont {N.}~\bibnamefont {Brunner}}, \ and\ \bibinfo {author}
  {\bibfnamefont {M.}~\bibnamefont {Huber}},\ }\bibfield  {title} {\enquote
  {\bibinfo {title} {Unifying paradigms of quantum refrigeration: fundamental
  limits of cooling and associated work costs},}\ }\href {\doibase
  10.1103/PhysRevE.100.042130} {\bibfield  {journal} {\bibinfo  {journal}
  {Phys. Rev. E}\ }\textbf {\bibinfo {volume} {100}},\ \bibinfo {pages}
  {042130} (\bibinfo {year} {2019}{\natexlab{a}})},\ \Eprint
  {http://arxiv.org/abs/arXiv:1710.11624} {arXiv:1710.11624} \BibitemShut
  {NoStop}%
\bibitem [{\citenamefont {Masanes}\ and\ \citenamefont
  {Oppenheim}(2017)}]{MasanesOppenheim2017}%
  \BibitemOpen
  \bibfield  {author} {\bibinfo {author} {\bibfnamefont {Lluis}\ \bibnamefont
  {Masanes}}\ and\ \bibinfo {author} {\bibfnamefont {Jonathan}\ \bibnamefont
  {Oppenheim}},\ }\bibfield  {title} {\enquote {\bibinfo {title} {{A general
  derivation and quantification of the third law of thermodynamics}},}\ }\href
  {\doibase 10.1038/ncomms14538} {\bibfield  {journal} {\bibinfo  {journal}
  {Nat. Commun.}\ }\textbf {\bibinfo {volume} {8}},\ \bibinfo {pages} {14538}
  (\bibinfo {year} {2017})},\ \Eprint {http://arxiv.org/abs/arXiv:1412.3828}
  {arXiv:1412.3828} \BibitemShut {NoStop}%
\bibitem [{\citenamefont {Wilming}\ and\ \citenamefont
  {Gallego}(2017)}]{WilmingGallego2017}%
  \BibitemOpen
  \bibfield  {author} {\bibinfo {author} {\bibfnamefont {Henrik}\ \bibnamefont
  {Wilming}}\ and\ \bibinfo {author} {\bibfnamefont {Rodrigo}\ \bibnamefont
  {Gallego}},\ }\bibfield  {title} {\enquote {\bibinfo {title} {{Third Law of
  Thermodynamics as a Single Inequality}},}\ }\href {\doibase
  10.1103/PhysRevX.7.041033} {\bibfield  {journal} {\bibinfo  {journal} {Phys.
  Rev. X}\ }\textbf {\bibinfo {volume} {7}},\ \bibinfo {pages} {041033}
  (\bibinfo {year} {2017})},\ \Eprint {http://arxiv.org/abs/arXiv:1701.07478}
  {arXiv:1701.07478} \BibitemShut {NoStop}%
\bibitem [{\citenamefont {Clivaz}\ \emph
  {et~al.}(2019{\natexlab{b}})\citenamefont {Clivaz}, \citenamefont {Silva},
  \citenamefont {Haack}, \citenamefont {Bohr~Brask}, \citenamefont {Brunner},\
  and\ \citenamefont {Huber}}]{ClivazSilvaHaackBohrBraskBrunnerHuber2019b}%
  \BibitemOpen
  \bibfield  {author} {\bibinfo {author} {\bibfnamefont {Fabien}\ \bibnamefont
  {Clivaz}}, \bibinfo {author} {\bibfnamefont {Ralph}\ \bibnamefont {Silva}},
  \bibinfo {author} {\bibfnamefont {G{\'e}raldine}\ \bibnamefont {Haack}},
  \bibinfo {author} {\bibfnamefont {Jonatan}\ \bibnamefont {Bohr~Brask}},
  \bibinfo {author} {\bibfnamefont {Nicolas}\ \bibnamefont {Brunner}}, \ and\
  \bibinfo {author} {\bibfnamefont {Marcus}\ \bibnamefont {Huber}},\ }\bibfield
   {title} {\enquote {\bibinfo {title} {{Unifying Paradigms of Quantum
  Refrigeration: A Universal and Attainable Bound on Cooling}},}\ }\href
  {\doibase 10.1103/PhysRevLett.123.170605} {\bibfield  {journal} {\bibinfo
  {journal} {Phys. Rev. Lett.}\ }\textbf {\bibinfo {volume} {123}},\ \bibinfo
  {pages} {170605} (\bibinfo {year} {2019}{\natexlab{b}})},\ \Eprint
  {http://arxiv.org/abs/arXiv:1903.04970} {arXiv:1903.04970} \BibitemShut
  {NoStop}%
\bibitem [{\citenamefont {Pusz}\ and\ \citenamefont
  {Woronowicz}(1978)}]{PuszWoronowicz1978}%
  \BibitemOpen
  \bibfield  {author} {\bibinfo {author} {\bibfnamefont {Wies{\l}aw}\
  \bibnamefont {Pusz}}\ and\ \bibinfo {author} {\bibfnamefont
  {Stanis{\l}aw~L.}\ \bibnamefont {Woronowicz}},\ }\bibfield  {title} {\enquote
  {\bibinfo {title} {{Passive states and KMS states for general quantum
  systems}},}\ }\href {\doibase 10.1007/BF01614224} {\bibfield  {journal}
  {\bibinfo  {journal} {Comm. Math. Phys.}\ }\textbf {\bibinfo {volume} {58}},\
  \bibinfo {pages} {273--290} (\bibinfo {year} {1978})},\ \bibinfo {note}
  {\href{https://projecteuclid.org/euclid.cmp/1103901491}{https://projecteuclid.org/euclid.cmp/1103901491}}\BibitemShut
  {NoStop}%
\bibitem [{\citenamefont {Lieb}\ and\ \citenamefont
  {Ruskai}(1973)}]{LiebRuskai1973}%
  \BibitemOpen
  \bibfield  {author} {\bibinfo {author} {\bibfnamefont {Elliott~H.}\
  \bibnamefont {Lieb}}\ and\ \bibinfo {author} {\bibfnamefont {Mary~Beth}\
  \bibnamefont {Ruskai}},\ }\bibfield  {title} {\enquote {\bibinfo {title}
  {Proof of the strong subadditivity of quantum‐mechanical entropy},}\ }\href
  {\doibase 10.1063/1.1666274} {\bibfield  {journal} {\bibinfo  {journal} {J.
  Math. Phys.}\ }\textbf {\bibinfo {volume} {14}},\ \bibinfo {pages}
  {1938--1941} (\bibinfo {year} {1973})}\BibitemShut {NoStop}%
\bibitem [{\citenamefont {Cadney}\ \emph {et~al.}(2014)\citenamefont {Cadney},
  \citenamefont {Huber}, \citenamefont {Linden},\ and\ \citenamefont
  {Winter}}]{CadneyHuberLindenWinter2014}%
  \BibitemOpen
  \bibfield  {author} {\bibinfo {author} {\bibfnamefont {Josh}\ \bibnamefont
  {Cadney}}, \bibinfo {author} {\bibfnamefont {Marcus}\ \bibnamefont {Huber}},
  \bibinfo {author} {\bibfnamefont {Noah}\ \bibnamefont {Linden}}, \ and\
  \bibinfo {author} {\bibfnamefont {Andreas}\ \bibnamefont {Winter}},\
  }\bibfield  {title} {\enquote {\bibinfo {title} {Inequalities for the ranks
  of multipartite quantum states},}\ }\href {\doibase
  10.1016/j.laa.2014.03.035} {\bibfield  {journal} {\bibinfo  {journal} {Linear
  Algebra Appl.}\ }\textbf {\bibinfo {volume} {452}},\ \bibinfo {pages}
  {153--17} (\bibinfo {year} {2014})},\ \Eprint
  {http://arxiv.org/abs/arXiv:1308.0539} {arXiv:1308.0539} \BibitemShut
  {NoStop}%
\bibitem [{\citenamefont {Linden}\ \emph {et~al.}(2013)\citenamefont {Linden},
  \citenamefont {Mosonyi},\ and\ \citenamefont
  {Winter}}]{LindenMosonyiWinter2013}%
  \BibitemOpen
  \bibfield  {author} {\bibinfo {author} {\bibfnamefont {Noah}\ \bibnamefont
  {Linden}}, \bibinfo {author} {\bibfnamefont {Mil{\'a}n}\ \bibnamefont
  {Mosonyi}}, \ and\ \bibinfo {author} {\bibfnamefont {Andreas}\ \bibnamefont
  {Winter}},\ }\bibfield  {title} {\enquote {\bibinfo {title} {The structure of
  {R}{\'e}nyi entropic inequalities},}\ }\href {\doibase
  10.1098/rspa.2012.0737} {\bibfield  {journal} {\bibinfo  {journal} {P. Roy.
  Soc. A-Math. Phy.}\ }\textbf {\bibinfo {volume} {469}},\ \bibinfo {pages}
  {20120737} (\bibinfo {year} {2013})},\ \Eprint
  {http://arxiv.org/abs/arXiv:1212.0248} {arXiv:1212.0248} \BibitemShut
  {NoStop}%
\bibitem [{\citenamefont {Piccione}\ \emph {et~al.}(2019)\citenamefont
  {Piccione}, \citenamefont {Militello}, \citenamefont {Napoli},\ and\
  \citenamefont {Bellomo}}]{PiccioneMilitelloNapoliBellomo2019}%
  \BibitemOpen
  \bibfield  {author} {\bibinfo {author} {\bibfnamefont {Nicol{\`o}}\
  \bibnamefont {Piccione}}, \bibinfo {author} {\bibfnamefont {Benedetto}\
  \bibnamefont {Militello}}, \bibinfo {author} {\bibfnamefont {Anna}\
  \bibnamefont {Napoli}}, \ and\ \bibinfo {author} {\bibfnamefont {Bruno}\
  \bibnamefont {Bellomo}},\ }\href@noop {} {\enquote {\bibinfo {title} {Energy
  bounds for entangled states},}\ } (\bibinfo {year} {2019}),\ \Eprint
  {http://arxiv.org/abs/arXiv:1904.02778} {arXiv:1904.02778} \BibitemShut
  {NoStop}%
\bibitem [{\citenamefont {McKay}\ \emph {et~al.}(2018)\citenamefont {McKay},
  \citenamefont {Rodr\'{\i}guez-Briones},\ and\ \citenamefont
  {Mart\'{\i}n-Mart\'{\i}nez}}]{McKayRodriguezEduardo2018}%
  \BibitemOpen
  \bibfield  {author} {\bibinfo {author} {\bibfnamefont {Emma}\ \bibnamefont
  {McKay}}, \bibinfo {author} {\bibfnamefont {Nayeli~A.}\ \bibnamefont
  {Rodr\'{\i}guez-Briones}}, \ and\ \bibinfo {author} {\bibfnamefont {Eduardo}\
  \bibnamefont {Mart\'{\i}n-Mart\'{\i}nez}},\ }\bibfield  {title} {\enquote
  {\bibinfo {title} {Fluctuations of work cost in optimal generation of
  correlations},}\ }\href {\doibase 10.1103/PhysRevE.98.032132} {\bibfield
  {journal} {\bibinfo  {journal} {Phys. Rev. E}\ }\textbf {\bibinfo {volume}
  {98}},\ \bibinfo {pages} {032132} (\bibinfo {year} {2018})},\ \Eprint
  {http://arxiv.org/abs/arXiv:1805.11106} {arXiv:1805.11106} \BibitemShut
  {NoStop}%
\bibitem [{\citenamefont {Bengtsson}\ \emph {et~al.}(2005)\citenamefont
  {Bengtsson}, \citenamefont {Ericsson}, \citenamefont {Ku{\'s}}, \citenamefont
  {Tadej},\ and\ \citenamefont
  {{\.Z}yczkowski}}]{BengtssonEricssonKusTadejZyczkowski2005}%
  \BibitemOpen
  \bibfield  {author} {\bibinfo {author} {\bibfnamefont {Ingemar}\ \bibnamefont
  {Bengtsson}}, \bibinfo {author} {\bibfnamefont {{\AA}sa}\ \bibnamefont
  {Ericsson}}, \bibinfo {author} {\bibfnamefont {Marek}\ \bibnamefont
  {Ku{\'s}}}, \bibinfo {author} {\bibfnamefont {Wojciech}\ \bibnamefont
  {Tadej}}, \ and\ \bibinfo {author} {\bibfnamefont {Karol}\ \bibnamefont
  {{\.Z}yczkowski}},\ }\bibfield  {title} {\enquote {\bibinfo {title}
  {{Birkhoff's Polytope and Unistochastic Matrices, $N = 3$ and $N = 4$}},}\
  }\href {\doibase 10.1007/s00220-005-1392-8} {\bibfield  {journal} {\bibinfo
  {journal} {Commun. Math. Phys.}\ }\textbf {\bibinfo {volume} {259}},\
  \bibinfo {pages} {307--324} (\bibinfo {year} {2005})},\ \Eprint
  {http://arxiv.org/abs/arXiv:math/0402325} {arXiv:math/0402325} \BibitemShut
  {NoStop}%
\bibitem [{\citenamefont {Hardy}\ \emph {et~al.}(1952)\citenamefont {Hardy},
  \citenamefont {Littlewood},\ and\ \citenamefont
  {P{\'o}lya}}]{HardyLittlewoodPolya1952}%
  \BibitemOpen
  \bibfield  {author} {\bibinfo {author} {\bibfnamefont {Godfrey~Harold}\
  \bibnamefont {Hardy}}, \bibinfo {author} {\bibfnamefont {John~Edensor}\
  \bibnamefont {Littlewood}}, \ and\ \bibinfo {author} {\bibfnamefont {George}\
  \bibnamefont {P{\'o}lya}},\ }\href@noop {} {\emph {\bibinfo {title}
  {Inequalities}}},\ \bibinfo {edition} {2nd}\ ed.\ (\bibinfo  {publisher}
  {Cambridge University Press},\ \bibinfo {address} {Cambridge, U.K.},\
  \bibinfo {year} {1952})\BibitemShut {NoStop}%
\bibitem [{\citenamefont {Marshall}\ \emph {et~al.}(2011)\citenamefont
  {Marshall}, \citenamefont {Olkin},\ and\ \citenamefont
  {Arnold}}]{MarshallOlkinArnold2011}%
  \BibitemOpen
  \bibfield  {author} {\bibinfo {author} {\bibfnamefont {Albert~W.}\
  \bibnamefont {Marshall}}, \bibinfo {author} {\bibfnamefont {Ingram}\
  \bibnamefont {Olkin}}, \ and\ \bibinfo {author} {\bibfnamefont {Barry~C.}\
  \bibnamefont {Arnold}},\ }\href {\doibase 10.1007/978-0-387-68276-1} {\emph
  {\bibinfo {title} {{Inequalities: Theory of majorisation and its
  Applications}}}},\ \bibinfo {edition} {2nd}\ ed.\ (\bibinfo  {publisher}
  {Springer},\ \bibinfo {address} {New York, NY},\ \bibinfo {year}
  {2011})\BibitemShut {NoStop}%
\bibitem [{\citenamefont {Boes}\ \emph {et~al.}(2019)\citenamefont {Boes},
  \citenamefont {Eisert}, \citenamefont {Gallego}, \citenamefont {M{\"u}ller},\
  and\ \citenamefont {Wilming}}]{BoesEisertGallegoMuellerWilming2019}%
  \BibitemOpen
  \bibfield  {author} {\bibinfo {author} {\bibfnamefont {Paul}\ \bibnamefont
  {Boes}}, \bibinfo {author} {\bibfnamefont {Jens}\ \bibnamefont {Eisert}},
  \bibinfo {author} {\bibfnamefont {Rodrigo}\ \bibnamefont {Gallego}}, \bibinfo
  {author} {\bibfnamefont {Markus~P.}\ \bibnamefont {M{\"u}ller}}, \ and\
  \bibinfo {author} {\bibfnamefont {Henrik}\ \bibnamefont {Wilming}},\
  }\bibfield  {title} {\enquote {\bibinfo {title} {{Von Neumann Entropy from
  Unitarity}},}\ }\href {\doibase 10.1103/PhysRevLett.122.210402} {\bibfield
  {journal} {\bibinfo  {journal} {Phys. Rev. Lett.}\ }\textbf {\bibinfo
  {volume} {122}},\ \bibinfo {pages} {210402} (\bibinfo {year} {2019})},\
  \Eprint {http://arxiv.org/abs/arXiv:1807.08773} {arXiv:1807.08773}
  \BibitemShut {NoStop}%
\bibitem [{\citenamefont {Guha}\ \emph {et~al.}(2019)\citenamefont {Guha},
  \citenamefont {Alimuddin},\ and\ \citenamefont
  {Parashar}}]{GuhaAlimuddinParashar2019}%
  \BibitemOpen
  \bibfield  {author} {\bibinfo {author} {\bibfnamefont {Tamal}\ \bibnamefont
  {Guha}}, \bibinfo {author} {\bibfnamefont {Mir}\ \bibnamefont {Alimuddin}}, \
  and\ \bibinfo {author} {\bibfnamefont {Preeti}\ \bibnamefont {Parashar}},\
  }\bibfield  {title} {\enquote {\bibinfo {title} {Allowed and forbidden
  bipartite correlations from thermal states},}\ }\href {\doibase
  10.1103/PhysRevE.100.012147} {\bibfield  {journal} {\bibinfo  {journal}
  {Phys. Rev. E}\ }\textbf {\bibinfo {volume} {100}},\ \bibinfo {pages}
  {012147} (\bibinfo {year} {2019})},\ \Eprint
  {http://arxiv.org/abs/arXiv:1904.07643} {arXiv:1904.07643} \BibitemShut
  {NoStop}%
\end{thebibliography}%


\hypertarget{sec:appendix}
\appendix
\renewcommand{\thesection}{}
\renewcommand\appendixname{}
\renewcommand{\thesection}{}
\renewcommand{\thesection}{A}
\renewcommand{\thesubsection}{A.\Roman{subsection}}
\renewcommand{\thesubsubsection}{A.\Roman{subsection}.\alph{subsubsection}}
\setcounter{equation}{0}
\numberwithin{equation}{section}


\section*{Appendices}

In these appendices, we provide detailed proofs of the lemmas supporting the main theorems, as well as additional detailed calculations and counterexamples mentioned in the main text. In Appendix~\ref{app:asymmetric pure state}, we investigate the maximal amount of correlations unitarily achievable for a fixed amount of energy that can be created between two arbitrary asymmetric systems initialised in a pure state. In Appendix~\ref{passive}, we propose a scheme to transform finitely many copies of bipartite thermal states with thermal marginals to states with symmetric thermal marginals at a higher temperature. In Appendix~\ref{App:Proof:lemma:swappingM}, we give a detailed proof of Lemma~\ref{lemma:swappingM}. We also discuss why Lemma~\ref{lemma:swappingM} cannot be generalised to higher dimensions in Appendix~\ref{sec:generalisations}. In Appendix~\ref{app:claim:qmajri}, we show via counterexample that it is in general not possible to map the normalised versions of the vectors $\vec{q}_i$ to the vectors $\vec{r}_i$ by doubly stochastic matrices. In Appendix~\ref{app:PassNorm}, we present a detailed proof of Lemma~\ref{lemma:GenPassNormD} that confirms that condition~(\ref{passing con i}), used in the ``passing on the norm'' approach, holds in general. In Appendix~\ref{app:strong condition II}, we discuss how one can show the existence of STUs via the stronger version of condition~(\ref{passing con ii}). In Appendix~\ref{App:proof:passnorminD}, by proving Lemma~\ref{lemma:PassNorm3d} and Lemma~\ref{lemma:PassNorm4d}, we complete the proof of the existence of STUs via this approach in $d=3$ and under specific constraints on the energy gaps in $d=4$. Then, we turn our attention to the geometric approach and show the monotonicity and convexity of the thermal curve in Appendix~\ref{app:CEforConditionii}. The detailed proofs of the existence of STUs using the geometric approach in dimensions $3$ and $4$ are presented in Appendix~\ref{sec:two qutrit case} and Appendix~\ref{sec:two ququart case}, respectively. Finally, we discuss the possibility of generalising the geometric method to higher dimensions in Appendix~\ref{app:higherdgeom}.


\subsection{Upper bound on correlation}\label{app:upperbound}

In this appendix, we show that if STUs exist in general, i.e., in particular for the desired temperature, they provide an upper bound for the amount of correlation that can be achieved unitarily. Using the same notation as in the main text, we are interested in solving the problem
\begin{equation}
    \max_U S(\tilde{\varrho}\SA) + S(\tilde{\varrho}\SB) \quad\ \text{s.t.}\ \ \tr(\tilde{\varrho}\SA H\SA) + \tr(\tilde{\varrho}\SB H\SB) \leq c.
\end{equation}
This can be rewritten as
\begin{align}
   & \max_U S(\tilde{\varrho}\SA \otimes \tilde{\varrho}\SB)\\
   &\text{s.t.}\ \ \tr(\tilde{\varrho}\SA \otimes \tilde{\varrho}\SB (H\SA \otimes \mathds{1} + \mathds{1} \otimes H\SB)) \leq c.
   \nonumber
\end{align}
According to Jaynes' principle, the maximum is obtained for
\begin{equation}
    \tau\SA(\bar{\beta}) \otimes \tau\SB(\bar{\beta}),
\end{equation}
where $\bar{\beta}$ is chosen such that
\begin{equation}
    \tr (\tau\SA(\bar{\beta}) \otimes \tau\SB(\bar{\beta}) (H\SA \otimes \mathds{1} + \mathds{1} \otimes H\SB)) = c.
\end{equation}
This solution can be found using Lagrange multipliers by considering the $n\times n$ matrix $\varrho\SAB$ as an vector with $n^{2}$ components. We therefore have as desired
\begin{align}
   &\max_U S(\tilde{\varrho}\SA) + S(\tilde{\varrho}\SB)\ \text{s.t.}\ \tr (\tilde{\varrho}\SA H\SA) + \tr (\tilde{\varrho}\SB H\SB) \leq c.
   \nonumber\\
  &\leq \, S(\tau\SA(\bar{\beta}))+S( \tau\SB(\bar{\beta})),
\end{align}
with $\tr (\tau\SA(\bar{\beta}) H\SA )+ \tr ( \tau\SB(\bar{\beta}) H\SB) = c$.


\subsection{Maximal amount of correlation for a pure state in the asymmetric case}\label{app:asymmetric pure state}

Here, we discuss and solve the problem of maximising the correlations under an energy constraint in the asymmetric case with an initial pure state. That is, we want to solve
\begin{align}
    \max_{\tilde{\varrho}\SA, \tilde{\varrho}\SB} S(\tilde{\varrho}\SA)+ S(\tilde{\varrho}\SB),
\end{align}
subject to the constraint $\tr(\tilde{\varrho}\SA H\SA) + \tr(\tilde{\varrho}\SB H\SB) \leq c$, where
\begin{align}
    \tilde{\varrho}\SA=\sum_{i=0}^{d-1} p_{i} \ket{\varphi\suptiny{0}{0}{A}_{i}}\!\!\bra{\varphi\suptiny{0}{0}{A}_{i}},\ \ \
    \tilde{\varrho}\SB=\sum_{i=0}^{d-1} p_{i} \ket{\varphi\suptiny{0}{0}{B}_{i}}\!\!\bra{\varphi\suptiny{0}{0}{B}_{i}},
\label{sub state A}
\end{align}
with $p_i \geq 0$ for $i=0, \dots, d-1$, $\sum_{i=0}^{d-1} p_{i}=1$, $d=\min\{d\SA,d\SB\}$, and where $\{\ket{\varphi\suptiny{0}{0}{A}_{i}}\}_{i=0}^{d\SA-1}$ and $\{\ket{\varphi\suptiny{0}{0}{B}_{i}}\}_{i=0}^{d\SB-1}$ are orthonormal bases of $\mathcal{H}\SA$ and $\mathcal{H}\SB$, respectively. Without loss of generality we then assume $d=d\SA$. Note that $S(\tilde{\varrho}\SA)=S(\tilde{\varrho}\SB)$ and $\tilde{\varrho}\SB = U \tilde{\varrho}\SA U^{\dagger}$, where we write $U= \sum_{i=0}^{d-1} \ket{\varphi\suptiny{0}{0}{B}_{i}}\!\!\bra{\varphi\suptiny{0}{0}{A}_{i}}$ (in a slight abuse of notation) such that the problem may be rewritten as
\begin{align}
\label{equ:maxS}
    \max_{\varrho, U} S(\varrho) \quad \text{s.t.}\ \ \ \tr[\varrho (H\SA+U^{\dagger} H\SB U)]\,\leq\, c,
\end{align}
where we have dropped the tilde and subscript $A$ on $\varrho$ for ease of notation. To solve this problem we consider the converse problem
\begin{align}
\label{equ:minEn}
    \min_{\varrho, U} \tr [ \varrho (H\SA+U^{\dagger} H\SB U)] \quad \text{s.t.}\ \ \ S(\varrho)= \kappa,
\end{align}
and show that (at least a family of) optimal points of (\ref{equ:minEn}) are optimal points of (\ref{equ:maxS}). To simplify the notation further let us write
\begin{align}
    H\SA &= \sum_{i=0}^{d-1} E_{i}\suptiny{0}{0}{A} \ket{i}\!\!\bra{i}\SA\\
    H\SB &= \sum_{i=0}^{d_B-1} E_{i}\suptiny{0}{0}{B} \ket{i}\!\!\bra{i}\SB,
\end{align}
with $E_{i}\suptiny{0}{0}{A} \leq E_{i+1}\suptiny{0}{0}{A}$ and $E_{i}\suptiny{0}{0}{B} \leq E_{i+1}\suptiny{0}{0}{B}$.

\begin{prop}
The pair $(\varrho_{\mathrm{opt}}(\kappa), U_{\mathrm{opt}})$, given by
\begin{align}
    U_{\mathrm{opt}}    &:= \sum_{i=0}^{d-1} \ket{i}\SB \bra{i}\SA,\\
    \varrho_{\mathrm{opt}}(\beta(\kappa))&:= \frac{e^{-\beta(\kappa) \tilde{H}}}{\tilde{Z}},
\end{align}
where $\tilde{Z}=\mathrm{Tr}(e^{-\beta(\kappa) \tilde{H}})$ and $\tilde{H}= \sum_{i=0}^{d-1} (E_{i}\suptiny{0}{0}{A}+E_{i}\suptiny{0}{0}{B}) \ket{i}\!\!\bra{i}\SA$, is a solution of the minimisation in~\emph{(\ref{equ:minEn})}.
\end{prop}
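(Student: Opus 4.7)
The plan is to decouple the joint minimisation over $\varrho$ and $U$ into two sequential optimisations: first, over the eigenbases of $\varrho$ (and equivalently of $U\varrho U^{\dagger}$) for any fixed spectrum $\vec{p}$; and second, over the spectrum $\vec{p}$ itself subject to the entropy constraint $S(\vec{p})=\kappa$. The first step reduces to a simultaneous application of the Pusz--Woronowicz passivity argument on both subsystems, while the second is a standard Lagrange-multiplier calculation whose solution is a Gibbs distribution.

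Concretely, I would rewrite the cost as $\tr[\varrho H\SA]+\tr[(U\varrho U^{\dagger}) H\SB]$, so that its two summands depend on two isospectral density operators. For any fixed spectrum $\vec{p}=(p_0,\ldots,p_{d-1})$ sorted in non-increasing order, passivity implies that $\tr[\sigma H\SA]$ is minimised by the unique passive choice $\sigma=\sum_{i=0}^{d-1}p_i\ket{i}\!\!\bra{i}\SA$ whose eigenvectors coincide with the $d$ lowest eigenvectors of $H\SA$, and analogously for $H\SB$. I would then observe that both passivity conditions can be enforced simultaneously by the ansatz $\varrho=\sum_i p_i\ket{i}\!\!\bra{i}\SA$ together with $U=\sum_i\ket{i}\SB\bra{i}\SA$, because the sequence $\tilde{E}_i:=E_i\suptiny{0}{0}{A}+E_i\suptiny{0}{0}{B}$ inherits non-decreasing order in $i$ from $E_i\suptiny{0}{0}{A}$ and $E_i\suptiny{0}{0}{B}$ separately. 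This reduces the joint optimisation to the scalar problem $\min_{\vec{p}}\sum_{i=0}^{d-1}p_i\tilde{E}_i$ subject to $\sum_i p_i=1$ and $-\sum_i p_i\ln p_i=\kappa$.

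For the second step, I would introduce Lagrange multipliers for the two equality constraints; the stationarity condition then yields $p_i\propto e^{-\beta\tilde{E}_i}$, i.e., a Gibbs distribution of $\tilde{H}$ at the unique inverse temperature $\beta=\beta(\kappa)$ that reproduces entropy $\kappa$. Packaging these eigenvalues back into $\varrho_{\mathrm{opt}}(\beta(\kappa))=e^{-\beta(\kappa)\tilde{H}}/\tilde{Z}$ on the span of $\{\ket{i}\SA\}_{i=0}^{d-1}$, together with $U_{\mathrm{opt}}=\sum_{i=0}^{d-1}\ket{i}\SB\bra{i}\SA$, recovers precisely the pair claimed in the proposition.

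The main obstacle I anticipate is not the construction itself but verifying that the Lagrange critical point is a \emph{global} minimum rather than merely stationary: this follows from strict convexity of the free-energy-like functional $\vec{p}\mapsto\sum_i p_i\tilde{E}_i+\beta^{-1}\sum_i p_i\ln p_i$ on the probability simplex (equivalently, strict concavity of the von~Neumann entropy together with linearity of the energy in $\vec{p}$), which forces the Gibbs solution to be unique up to the degeneracies of $\tilde{H}$. A minor subtlety is that degeneracies among the $d$ lowest eigenvalues of $H\SA$ or $H\SB$ render the optimal eigenbasis non-unique but do not alter the optimal value, and can be resolved by an arbitrary tie-breaking convention consistent with the ordering used to define $\tilde{H}$.
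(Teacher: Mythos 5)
Your proposal is correct and follows essentially the same route as the paper's proof: for a fixed spectrum one minimises the two energy terms via passivity and observes that a single ansatz (the diagonal state together with $U_{\mathrm{opt}}=\sum_i\ket{i}\SB\bra{i}\SA$) attains both minima simultaneously because $E_i\suptiny{0}{0}{A}$ and $E_i\suptiny{0}{0}{B}$ are sorted the same way, after which the remaining optimisation over the spectrum under the entropy constraint is solved by Lagrange multipliers, yielding the Gibbs state of $\tilde{H}$. The only points the paper spells out more explicitly are the case $d\SB>d\SA$ (handled there by padding the spectrum with zeros and extending $U$ to a unitary on $\mathcal{H}\SB$, which your ``analogously for $H\SB$'' covers), while your convexity argument for global optimality of the Lagrange critical point makes explicit a step the paper leaves implicit.
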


\begin{proof}
Denoting the spectrum of $\varrho$ as $\lambda_{\varrho}=(\lambda_0,\dots,\lambda_{d-1})$, we first show that $(\varrho_{\mathrm{opt}}(\beta(\kappa)), U_{\mathrm{opt}})$ is a solution of the following minimisation problem
\begin{align}
    \min_{\lambda} \left( \min_{\varrho\ \text{s.t.}\ \lambda_{\varrho}=\lambda} \tr (\varrho H\SA) +  \min_{U, \varrho\ \text{s.t.}\ \lambda_{\varrho}=\lambda} \tr (U \varrho U^{\dagger} H\SB) \right),
    \label{equ:minEdecomp}
\end{align}
subject to the constraint $H(\lambda)=\kappa$, where $H(\lambda)$ denotes the Shannon entropy of the probability distribution $\lambda$. Since all density matrices of a given spectrum are unitarily related we have
\begin{equation}
     \min_{\varrho\ \text{s.t.}\ \lambda_{\varrho}=\lambda} \tr (\varrho H\SA)= \min_{V} \tr(V \operatorname{diag}\{\lambda_\varrho\} V^{\dagger} H\SA),
\end{equation}
where the minimisation on the right-hand side is over all unitaries $V$. The passive state with spectrum $\lambda$ is well-known to solve this minimisation. Adopting the notation $v^{\downarrow}=(v^{\downarrow}_{i})$ to denote the vector obtained by arranging the components of the vector $v=(v_1,\dots,v_n)$ in decreasing order, i.e., such that $v_1^{\downarrow} \geq v_2^{\downarrow} \geq \dots \geq v_n^{\downarrow}$, we thus have
\begin{equation}
     \min_{\varrho\ \text{s.t.}\ \lambda_{\varrho}=\lambda} \tr (\varrho H\SA)= \sum_{i=0}^{d-1} \lambda_{i}^{\downarrow} E_{i}\suptiny{0}{0}{A}.
\end{equation}

For the second minimisation problem in (\ref{equ:minEdecomp}), note that since $\{\ket{\varphi_{i}\suptiny{0}{0}{A}}\}_{i=0}^{d\SA-1}$ and $\{\ket{\varphi_{i}\suptiny{0}{0}{B}}\}_{i=0}^{d\SB-1}$ are orthonormal bases, the matrix representation of $U$ can be extended to a unitary $d\SB \times d\SB$ matrix $V$. Similarly, the matrix representation of $\varrho$ can be extended to a positive $d\SB\times d\SB$ matrix $\bar{\varrho}$ by padding it with zeroes. With this we have
\begin{equation}
    \min_{U, \varrho\ \text{s.t.}\ \lambda_{\varrho}=\lambda} \tr (U \varrho U^{\dagger} H\SB)\geq \min_{V, \bar{\varrho}\ \text{s.t.}\  \lambda_{\bar{\varrho}}=(\lambda,0,\dots,0)} \tr (V \bar{\varrho} V^{\dagger} H\SB).
    \label{eq:fabien's proof 1}
\end{equation}
Again the passive state with spectrum $(\lambda,0,\dots,0)$ solves the right-hand side of~(\ref{eq:fabien's proof 1}) and we obtain
\begin{equation}
    \min_{V, \bar{\varrho}\ \text{s.t.}\ \lambda_{\bar{\varrho}}=(\lambda,0,\dots,0)} \tr (V \bar{\varrho} V^{\dagger} H\SB)
    = \sum_{i=0}^{d-1} \lambda_{i}^{\downarrow} E_{i}\suptiny{0}{0}{B}.
\end{equation}
This solution is in fact also an attainable solution of the left-hand side of~(\ref{eq:fabien's proof 1}). Hence
\begin{equation}
     \min_{U, \varrho\ \text{s.t.}\ \lambda_{\varrho}=\lambda} \tr (U \varrho U^{\dagger} H\SB)= \sum_{i=0}^{d-1} \lambda_{i}^{\downarrow} E_{i}\suptiny{0}{0}{B}.
\end{equation}

We have therefore reduced the minimisation problem of~(\ref{equ:minEdecomp}) to solving
\begin{equation}
     \min_{\lambda} \sum_{i=0}^{d-1} \lambda_{i}^{\downarrow} (E_{i}\suptiny{0}{0}{A}+ E_{i}\suptiny{0}{0}{B}),\ \text{s.t.}\
     H(\lambda)=\kappa,\
     \sum_{i} \lambda_{i}^{\downarrow}=1.
\end{equation}
This problem, in turn, can be solved by means of Lagrange multipliers, which yields
\begin{equation}
     \lambda_{i}^{\downarrow}= \frac{e^{-\beta (E_{i}\suptiny{0}{0}{A}+ E_{i}\suptiny{0}{0}{B})}}{\sum_{i=0}^{d-1} e^{-\beta (E_{i}\suptiny{0}{0}{A}+ E_{i}\suptiny{0}{0}{B})}},
\end{equation}
which is precisely what is delivered by the solution $(\varrho_{\mathrm{opt}}(\beta(\kappa)),U_{\mathrm{opt}})$.

We can further check that for every $\kappa \in [0, \ln(d)]$ there exists a unique $\beta \in [0,\infty]$ such that $H(\lambda_{\varrho_{\mathrm{opt}}})=S(\varrho_{\mathrm{opt}}(\beta))=\kappa$, i.e., that the notation $\beta(\kappa)$ is well defined and can be understood as a function. This can be seen from
\begin{align}
     S(\varrho_{\mathrm{opt}}(0)) &=\,\ln(d),\\
     S(\varrho_{\mathrm{opt}}(\infty))  &=\,0,\\
     \frac{d}{d\beta} S(\varrho_{\mathrm{opt}}(\beta))&=\, \beta (\tr(\varrho \tilde{H})^{2}- \tr(\varrho \tilde{H}^2)) < 0 , \;
     \forall \beta \in (0, \infty).
\end{align}
Strictly speaking, the last line is not valid when $\tilde{H} \propto \mathds{1}$, but in that case one can straightforwardly check that $(U_{\mathrm{opt}}, \frac{\mathds{1}}{d})$ solves our original problem (\ref{equ:maxS}) for any allowed $c$. We hence (tacitly) discard it from the start.

Having established this fact about $(\varrho_{\mathrm{opt}}(\beta(\kappa)),U_{\mathrm{opt}})$, let $(\varrho,U)$ be such that $S(\varrho)=\kappa$. Then $\varrho$ has some spectrum $\{\mu_0,\dots, \mu_{d-1}\}$. From the above we thus have
\begin{align}
    &\tr (\varrho H\SA) + \tr (U \varrho U^{\dagger} H\SB) \,\geq\, \sum_{i=0}^{d-1} \mu_i^{\downarrow} E_{i}\suptiny{0}{0}{A}+\sum_{i=0}^{d-1} \mu_i^{\downarrow} E_{i}\suptiny{0}{0}{B}\\
    &\geq\, \sum_{i=0}^{d-1} \frac{e^{-\beta(\kappa) (E_{i}\suptiny{0}{0}{A}+ E_{i}\suptiny{0}{0}{B})}}{\sum_{i=0}^{d-1} e^{-\beta(\kappa)(E_{i}\suptiny{0}{0}{A}+E_{i}\suptiny{0}{0}{B})}} (E_{i}\suptiny{0}{0}{A}+E_{i}\suptiny{0}{0}{B})
    \nonumber\\
    &=\,\tr(\varrho_{\mathrm{opt}}(\beta(\kappa)) H\SA) + \tr ( U_{\mathrm{opt}} \varrho_{\mathrm{opt}}(\beta(\kappa)) U_{\mathrm{opt}}^{\dagger} H\SB),
    \nonumber
\end{align}
which proves our claim, because $S(\varrho_{\mathrm{opt}}(\beta(\kappa)))=\kappa$.
\end{proof}

Now let us define the function
\begin{equation}
    f(\kappa)= \tr(\varrho_{\mathrm{opt}}(\beta(\kappa)) H\SA) + \tr ( U_{\mathrm{opt}} \varrho_{\mathrm{opt}}(\beta(\kappa)) U_{\mathrm{opt}}^{\dagger} H\SB).
\end{equation}
We can then establish the following proposition.

\begin{prop}
\label{claim:strictmon}
If $f(\kappa_1) < f(\kappa_2)$ then $\kappa_1 < \kappa_2$.
\end{prop}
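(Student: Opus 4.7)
The plan is to prove the contrapositive: $f$ is strictly monotonically increasing on $[0,\ln d]$, so $f(\kappa_1) < f(\kappa_2)$ forces $\kappa_1 < \kappa_2$. The key simplification is to rewrite $f(\kappa)$ as the Gibbs-state expectation value of a single effective Hamiltonian. Since $U_{\mathrm{opt}}$ simply relabels the basis $\{\ket{i}_A\}_{i=0}^{d-1}$ to $\{\ket{i}_B\}_{i=0}^{d-1}$, one has $U_{\mathrm{opt}}^{\dagger} H\SB U_{\mathrm{opt}}=\sum_{i=0}^{d-1} E_i\suptiny{0}{0}{B}\ketbra{i}{i}\SA$, and therefore
\begin{equation}
f(\kappa)=\tr\bigl(\varrho_{\mathrm{opt}}(\beta(\kappa))\,\tilde{H}\bigr)=\frac{\tr(\tilde H\,e^{-\beta(\kappa)\tilde H})}{\tilde Z(\beta(\kappa))}=:\langle \tilde H\rangle_{\beta(\kappa)}.
\end{equation}
So $f$ is the composition of the standard Gibbs-state energy $\beta\mapsto\langle\tilde H\rangle_\beta$ with the map $\kappa\mapsto\beta(\kappa)$ defined implicitly through $S(\varrho_{\mathrm{opt}}(\beta))=\kappa$.

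Next I would invoke two standard monotonicities, both of which are essentially already established in the previous proposition. First, from the computation $\tfrac{d}{d\beta}S(\varrho_{\mathrm{opt}}(\beta))=\beta\bigl(\langle\tilde H\rangle_\beta^2-\langle\tilde H^2\rangle_\beta\bigr)=-\beta\,\mathrm{Var}_\beta(\tilde H)<0$ for $\beta\in(0,\infty)$, it follows that $\beta\mapsto S(\varrho_{\mathrm{opt}}(\beta))$ is strictly decreasing, so the inverse $\kappa\mapsto\beta(\kappa)$ exists, is differentiable on $(0,\ln d)$, and satisfies $\tfrac{d\beta}{d\kappa}<0$. Second, the standard identity $\tfrac{d}{d\beta}\langle\tilde H\rangle_\beta=-\mathrm{Var}_\beta(\tilde H)<0$ shows that the Gibbs-state energy is strictly decreasing in $\beta$. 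Chaining these via
\begin{equation}
\frac{df}{d\kappa}=\frac{d\langle\tilde H\rangle_\beta}{d\beta}\Bigl|_{\beta=\beta(\kappa)}\cdot\frac{d\beta}{d\kappa}=\mathrm{Var}_\beta(\tilde H)\cdot\frac{\mathrm{Var}_\beta(\tilde H)}{\beta\,\mathrm{Var}_\beta(\tilde H)}>0
\end{equation}
(using the derivative of the implicit relation $S(\varrho_{\mathrm{opt}}(\beta(\kappa)))=\kappa$) shows $f$ is strictly increasing on $(0,\ln d)$, and hence monotone on the closed interval by continuity.

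Finally, the contrapositive of strict monotonicity gives the claim: if $f(\kappa_1)<f(\kappa_2)$ then one cannot have $\kappa_1\geq\kappa_2$, so $\kappa_1<\kappa_2$. The only potential obstacle is the degenerate case $\tilde H\propto\mathds{1}$, which would make both variances vanish and spoil strict monotonicity, but this case was already explicitly excluded in the preceding proposition (where it was noted that $(U_{\mathrm{opt}},\mathds{1}/d)$ trivially solves the original problem~(\ref{equ:maxS}) in that degenerate scenario). Thus no additional work is required beyond observing that the two monotonicities established for the previous proposition directly combine, via the chain rule, to yield the desired strict monotonicity of $f$.
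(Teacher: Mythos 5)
Your proposal is correct and takes essentially the same route as the paper's own proof: chain rule $\tfrac{df}{d\kappa}=\tfrac{df}{d\beta}\bigl(\tfrac{dS}{d\beta}\bigr)^{-1}$, with both $\beta$-derivatives negative (equal to $-\mathrm{Var}_{\beta}(\tilde H)$ and $-\beta\,\mathrm{Var}_{\beta}(\tilde H)$, respectively), giving $\tfrac{df}{d\kappa}=\tfrac{1}{\beta}>0$ and hence strict monotonicity, with the degenerate case $\tilde H\propto\mathds{1}$ excluded exactly as the paper does. The only nitpick is cosmetic: the middle expression in your displayed chain-rule computation simplifies to $\mathrm{Var}_{\beta}(\tilde H)/\beta$ rather than the correct value $1/\beta$, but since positivity is all that is needed, the argument is unaffected.
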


The above is saying that if $f$ has an inverse then that inverse is strictly monotonically increasing. This is indeed how the proof proceeds.

\begin{proof}
First, note that we have
\begin{align}
    \frac{d}{d\kappa} f(\kappa) &=\, \frac{d}{d \beta} f(\beta) \frac{d \beta}{d\kappa}\,=\,
    \frac{d}{d \beta} f(\beta) \left(\frac{d}{d\beta} S(\varrho_{\mathrm{opt}}(\beta)\right)^{-1}.
\end{align}
Further, we have already seen that
\begin{equation}
    \frac{d}{d\beta} S(\varrho_{\mathrm{opt}}(\beta)\,=\,\beta (\tr(\varrho \tilde{H})^{2}- \tr(\varrho \tilde{H}^2)) \,<\, 0.
\end{equation}
Similarly one calculates
\begin{equation}
      \frac{d}{d\beta} f(\beta)\,=\, (\tr(\varrho \tilde{H})^{2}- \tr(\varrho \tilde{H}^2)) \,<\, 0.
\end{equation}
So for $\kappa \in (0, \ln(d))$ we have
\begin{equation}
    \frac{d}{d\kappa} f(\kappa) \,=\, \frac{1}{\beta} \,>\,0.
\end{equation}
This proves that $f$ is strictly monotonically increasing. It therefore has an inverse $f^{-1}$ that is also strictly monotonically increasing which proves our claim.
\end{proof}

We are now ready to solve the maximisation in (\ref{equ:maxS}).
\begin{prop}
There is a (unique) $\beta$ such that $(\varrho_{\mathrm{opt}}(\beta), U_{\mathrm{opt}})$ is a solution of \emph{(\ref{equ:maxS})}.
\end{prop}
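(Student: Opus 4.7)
The plan is to invert the chain of results established by the previous two propositions, converting the minimum-energy-at-fixed-entropy problem into the desired maximum-entropy-at-fixed-energy problem by means of the strict monotonicity of $f$.

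Concretely, for an admissible value of $c$ in the range $[f(0), f(\ln d)]$ I would define $\kappa^* := f^{-1}(c)$ using the inverse guaranteed by Proposition~\ref{claim:strictmon}, and set $\beta^* := \beta(\kappa^*)$. The candidate pair $(\varrho_{\mathrm{opt}}(\beta^*), U_{\mathrm{opt}})$ has entropy $\kappa^*$ and, by definition of $f$, energy exactly $c$, so it is automatically feasible for~(\ref{equ:maxS}).

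Optimality would then follow by contradiction. Suppose another feasible pair $(\varrho, U)$ achieves $S(\varrho) > \kappa^*$. The first proposition guarantees
\begin{equation*}
\tr[\varrho(H\SA + U^{\dagger} H\SB U)] \,\geq\, f(S(\varrho)),
\end{equation*}
and strict monotonicity of $f$ then yields $f(S(\varrho)) > f(\kappa^*) = c$, contradicting feasibility of $(\varrho,U)$. Uniqueness of $\beta$ is a separate consequence of the fact, already recorded in the proof of the first proposition, that $\beta \mapsto S(\varrho_{\mathrm{opt}}(\beta))$ is strictly monotonically decreasing on $(0,\infty)$; this means the equation $S(\varrho_{\mathrm{opt}}(\beta)) = \kappa^*$ pins down $\beta^*$ uniquely.

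The main obstacle, if any, is purely the boundary handling rather than the core argument. For $c \geq f(\ln d) = \tr(H\SA + H\SB)/d$ the energy constraint is non-binding and the maximum entropy $\ln d$ is attained by $\varrho = \mathds{1}/d$ at $\beta = 0$; for $c < f(0)$ the problem is infeasible and can be discarded. The degenerate case $\tilde H \propto \mathds{1}$ already set aside in the preceding proof has to be treated separately as well, but it trivialises, since any $U$ together with $\varrho = \mathds{1}/d$ then solves (\ref{equ:maxS}) for every admissible $c$.
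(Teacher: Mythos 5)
Your proof is correct and takes essentially the same route as the paper: both rest on the minimum-energy proposition plus the strict monotonicity of $f$ to convert the energy constraint into an entropy bound (you argue by contradiction, the paper by a direct inequality chain for an arbitrary feasible pair, which is an immaterial difference), with the non-binding regime and the degenerate $\tilde{H}\propto\mathds{1}$ case treated separately in both. The only cosmetic slip is the boundary value: it should be $\tr(\tilde{H})/d=\tfrac{1}{d}\sum_{i=0}^{d-1}\bigl(E_{i}^{A}+E_{i}^{B}\bigr)$, which coincides with $\tr(H_{A}+H_{B})/d$ only when $d_{A}=d_{B}$.
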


\begin{proof}
First, if $c \geq \frac{1}{d} \sum_{i=0}^{d-1} E_{i}\suptiny{0}{0}{A}+E_{i}\suptiny{0}{0}{B}$, then $(\varrho_{\mathrm{opt}}(0),U_{\mathrm{opt}})$ is the sought after solution, because
\begin{equation}
     H((\tfrac{1}{d}, \dots, \tfrac{1}{d}))\,=\, S(\varrho_{\mathrm{opt}}(0)) \,\geq\, S(\varrho)\ \forall\, \varrho\,,
\end{equation}
and $\tr(\varrho_{\mathrm{opt}}(0) \tilde{H}) \leq c$. Second, if $c \leq \frac{1}{d} \sum_{i=0}^{d-1} E_{i}\suptiny{0}{0}{A}+E_{i}\suptiny{0}{0}{B}$ then
\begin{equation}
    \frac{d}{d \beta} f(\beta) \,<\, 0
\end{equation}
ensures that there exists a (unique) $\beta$, say $\bar{\beta}$, such that
\begin{equation}
     f(\bar{\beta}) \,=\, \tr ( \varrho_{\mathrm{opt}} (\bar{\beta}) \tilde{H} )\,=\,c.
\end{equation}
Let us now prove that $(\varrho_{\mathrm{opt}}(\bar{\beta}), U_{\mathrm{opt}})$ is the desired solution in this case. Consider a pair $(\varrho_1,U_1)$ such that
\begin{equation}
     \tr (\varrho_1 H\SA)+ \tr ( U_1 \varrho_1 U_1^{\dagger} H\SB) \,\leq\, c\,
     =\, \tr (\varrho_{\mathrm{opt}} (\bar{\beta}) \tilde{H}).
\end{equation}
Let us further consider $\varrho_{\mathrm{opt}}(\beta(\kappa))$, the solution of (\ref{equ:minEn}) and choose $\kappa=S(\varrho_1)$. Then it holds that
\begin{align}
     \tr( \varrho_{\mathrm{opt}}(\beta(\kappa)) \tilde{H}) &\leq\, \tr (\varrho_1 H\SA)+ \tr ( U_1 \varrho U_1^{\dagger} H\SB) \nonumber\\
     &\leq\, \tr (\varrho_{\mathrm{opt}} (\bar{\beta}) \tilde{H}),
\end{align}
which, using Proposition~(\ref{claim:strictmon}), implies
\begin{equation}
     S(\varrho_{\mathrm{opt}}(\beta(\kappa)) \,\leq\, S(\varrho_{\mathrm{opt}} (\bar{\beta})).
\end{equation}
But $S(\varrho_{\mathrm{opt}}(\beta(\kappa)) = S(\varrho_1)$ such that
\begin{equation}
     S(\varrho_1) \,\leq\, S(\varrho_{\mathrm{opt}} (\bar{\beta})).
\end{equation}
\end{proof}


\subsection{Correlating finitely many copies}\label{passive}

Here we discuss a protocol that can be applied to $n$ copies of the initial state, i.e., $\tau\SA (\beta)^{\otimes n} \otimes \tau\SB (\beta)^{\otimes n}$, to increase the temperature of the marginals in cases where STUs for single copies exist for small temperature differences, but are not attainable for larger temperature differences. For $n$ copies, the protocol consists of $n$ consecutive steps. In each step, a fixed STU achieving some finite temperature increase is applied to different LCSs that correspond to particular pairs of subsystems, such that for $i,j=1,\dots,n$, each subsystem $A_{i}$ interacts with one and only one subsystem $B_{j}$, and no subsystem interacts again with a subsystem it has previously interacted with. This ensures that all marginals are left diagonal and thermal after each step, and leads to a step-wise increase in the temperature of the marginals. In particular, in the $j$th step the unitary is applied to the subsystems $A_{i}$ and $B_{i + j -1 \, \operatorname{mod}(n) }$. This construction guarantees that the tensor product structure of thermal states is preserved for each pair, i.e., $\tau\SA (\beta_j) \otimes \tau\SB (\beta_j)$ where $\beta_j$ denotes the effective local inverse temperature after the $j$th step. Furthermore, this means that in each step the STU can be applied, as the marginals are in a thermal state with the same temperature and moreover, product to each other. While this protocol ensures that the desired structures are preserved, the necessary conditions for the protocol to achieve arbitrary final temperatures are still part of ongoing research. In general the question remains, whether it is possible to reach any arbitrary temperature difference, $\beta'< \beta$, within finitely many steps. This discussion contrasts the proof of the existence of STUs in the asymptotic case, discussed in Sec.~\ref{subsubsec: asymcase}.\\

To illustrate the protocol, let us illustrate the case of $n=4$ copies here. In the following, each dot reprsents a subsystem and the application of STUs is denoted by lines connnecting pairs of subsystems. Initially one is confronted with the following situation.

\begin{center}
\begin{tikzpicture}
\coordinate[label=left:$A_4$] (A1) at (0,1);
\coordinate[label=right:$B_4$] (B1) at (5,1);
\coordinate[label=left:$A_3$] (A2) at (0,2);
\coordinate[label=right:$B_3$] (B2) at (5,2);
\coordinate[label=left:$A_2$] (A3) at (0,3);
\coordinate[label=right:$B_2$] (B3) at (5,3);
\coordinate[label=left:$A_1$] (A4) at (0,4);
\coordinate[label=right:$B_1$] (B4) at (5,4);
\fill (A1) circle (2pt);
\fill (B1) circle (2pt);
\fill (A2) circle (2pt);
\fill (B2) circle (2pt);
\fill (A3) circle (2pt);
\fill (B3) circle (2pt);
\fill (A4) circle (2pt);
\fill (B4) circle (2pt);
\end{tikzpicture}
\end{center}
In the first step, we connect subsystems corresponding to the same copy, i.e., $A_i$ and $B_i$.
\begin{center}
\begin{tikzpicture}
\coordinate[label=left:$A_4$] (A1) at (0,1);
\coordinate[label=right:$B_4$] (B1) at (5,1);
\coordinate[label=left:$A_3$] (A2) at (0,2);
\coordinate[label=right:$B_3$] (B2) at (5,2);
\coordinate[label=left:$A_2$] (A3) at (0,3);
\coordinate[label=right:$B_2$] (B3) at (5,3);
\coordinate[label=left:$A_1$] (A4) at (0,4);
\coordinate[label=right:$B_1$] (B4) at (5,4);
\draw (A1) -- (B1);
\draw (A2) -- (B2);
\draw (A3) -- (B3);
\draw (A4) -- (B4);
\fill (A1) circle (2pt);
\fill (B1) circle (2pt);
\fill (A2) circle (2pt);
\fill (B2) circle (2pt);
\fill (A3) circle (2pt);
\fill (B3) circle (2pt);
\fill (A4) circle (2pt);
\fill (B4) circle (2pt);
\end{tikzpicture}
\end{center}
As specified, we are now in the situation where we have slightly decreased the inverse temperature of the marginals to $\tau(\beta_1)$. We now apply a unitary to the subsystems $A_{i}$ and $B_{i +1 \, \text{mod}(n) }$, further decreasing the inverse temperature of the marginals.
\begin{center}
\begin{tikzpicture}
\coordinate[label=left:$A_4$] (A1) at (0,1);
\coordinate[label=right:$B_4$] (B1) at (5,1);
\coordinate[label=left:$A_3$] (A2) at (0,2);
\coordinate[label=right:$B_3$] (B2) at (5,2);
\coordinate[label=left:$A_2$] (A3) at (0,3);
\coordinate[label=right:$B_2$] (B3) at (5,3);
\coordinate[label=left:$A_1$] (A4) at (0,4);
\coordinate[label=right:$B_1$] (B4) at (5,4);
\draw (A1) -- (B1);
\draw (A2) -- (B2);
\draw (A3) -- (B3);
\draw (A4) -- (B4);
\draw[dashed] (A1) -- (B4);
\draw[dashed] (A2) -- (B1);
\draw[dashed] (A3) -- (B2);
\draw[dashed] (A4) -- (B3);
\fill (A1) circle (2pt);
\fill (B1) circle (2pt);
\fill (A2) circle (2pt);
\fill (B2) circle (2pt);
\fill (A3) circle (2pt);
\fill (B3) circle (2pt);
\fill (A4) circle (2pt);
\fill (B4) circle (2pt);
\end{tikzpicture}
\end{center}
In the third step, the subsystems $A_{i}$ and $B_{i + 2 \, \text{mod}(n) }$ are connected.
\begin{center}
\begin{tikzpicture}
\coordinate[label=left:$A_4$] (A1) at (0,1);
\coordinate[label=right:$B_4$] (B1) at (5,1);
\coordinate[label=left:$A_3$] (A2) at (0,2);
\coordinate[label=right:$B_3$] (B2) at (5,2);
\coordinate[label=left:$A_2$] (A3) at (0,3);
\coordinate[label=right:$B_2$] (B3) at (5,3);
\coordinate[label=left:$A_1$] (A4) at (0,4);
\coordinate[label=right:$B_1$] (B4) at (5,4);
\draw (A1) -- (B1);
\draw (A2) -- (B2);
\draw (A3) -- (B3);
\draw (A4) -- (B4);
\draw[dashed] (A1) -- (B4);
\draw[dashed] (A2) -- (B1);
\draw[dashed] (A3) -- (B2);
\draw[dashed] (A4) -- (B3);
\draw[dotted] (A4) -- (B2);
\draw[dotted] (A3) -- (B1);
\draw[dotted] (A2) -- (B4);
\draw[dotted] (A1) -- (B3);
\fill (A1) circle (2pt);
\fill (B1) circle (2pt);
\fill (A2) circle (2pt);
\fill (B2) circle (2pt);
\fill (A3) circle (2pt);
\fill (B3) circle (2pt);
\fill (A4) circle (2pt);
\fill (B4) circle (2pt);
\end{tikzpicture}
\end{center}
Lastly, we connect $A_{i}$ and $B_{i + 3 \, \text{mod}(n) }$ as prescribed.
\begin{center}
\begin{tikzpicture}
\coordinate[label=left:$A_4$] (A1) at (0,1);
\coordinate[label=right:$B_4$] (B1) at (5,1);
\coordinate[label=left:$A_3$] (A2) at (0,2);
\coordinate[label=right:$B_3$] (B2) at (5,2);
\coordinate[label=left:$A_2$] (A3) at (0,3);
\coordinate[label=right:$B_2$] (B3) at (5,3);
\coordinate[label=left:$A_1$] (A4) at (0,4);
\coordinate[label=right:$B_1$] (B4) at (5,4);
\draw (A1) -- (B1);
\draw (A2) -- (B2);
\draw (A3) -- (B3);
\draw (A4) -- (B4);
\draw[dashed] (A1) -- (B4);
\draw[dashed] (A2) -- (B1);
\draw[dashed] (A3) -- (B2);
\draw[dashed] (A4) -- (B3);
\draw[dotted] (A4) -- (B2);
\draw[dotted] (A3) -- (B1);
\draw[dotted] (A2) -- (B4);
\draw[dotted] (A1) -- (B3);
\draw[dashdotted] (A4) -- (B1);
\draw[dashdotted] (A3) -- (B4);
\draw[dashdotted] (A2) -- (B3);
\draw[dashdotted] (A1) -- (B2);
\fill (A1) circle (2pt);
\fill (B1) circle (2pt);
\fill (A2) circle (2pt);
\fill (B2) circle (2pt);
\fill (A3) circle (2pt);
\fill (B3) circle (2pt);
\fill (A4) circle (2pt);
\fill (B4) circle (2pt);
\end{tikzpicture}
\end{center}
We have now used the four copies of the initial state, to increase the correlations between the two sides step by step, while retaining the product structure of the thermal marginals, arriving at the final reduced states $\rho_{A/B}=\tau(\beta_4)^{\otimes 4}$ for the subsystems $A$ and $B$.


\subsection{Proof of Lemma \ref{lemma:swappingM}}\label{App:Proof:lemma:swappingM}

In this section, we will first show that for any $3\times 3$ doubly stochastic matrix $M$, there exists another $3\times 3$ doubly stochastic matrix $\tilde{M}$ such that $M (\mathds{1}+\Pi)=(\mathds{1}+\Pi) \tilde{M}$, a statement we called Lemma~\ref{lemma:swappingM} in the main text.

\begin{proof}[Proof of Lemma \emph{\ref{lemma:swappingM}}]
In general, all doubly stochastic matrices can be written as convex combinations of permutation matrices as the following
\begin{equation}
     M=\sum_{i=1}^{d!} \alpha_i\, \Pi^{(i)},
     \label{eq:gen. doubly stoch}
\end{equation}
where $\Pi^{(i)}$ indicate permutation matrices in dimension $d$, and $\sum_i \alpha_i =1$. Since we need to show that the statement is true for any doubly stochastic matrix $M$, it is sufficient to prove that the statement is true for $M$ being a permutation matrix. In the following we will just focus on dimension $3$. In this dimension there are $3!=6$ permutation matrices where we collect the cyclic permutations $\Pi\suptiny{0}{0}{(1)}_{\mathrm{C}}=\mathds{1}$, $\Pi\suptiny{0}{0}{(2)}_{\mathrm{C}}=\Pi$, $\Pi\suptiny{0}{0}{(3)}_{\mathrm{C}}=\Pi^{2}$, and anticyclic permutations
\begin{align}
    \Pi\suptiny{0}{0}{(1)}_{\mathrm{AC}} &=\,\begin{pmatrix} 1 & 0 & 0 \\ 0 & 0 & 1 \\ 0 & 1 & 0 \end{pmatrix},\
    \Pi\suptiny{0}{0}{(2)}_{\mathrm{AC}} \,=\,\begin{pmatrix} 0 & 1 & 0 \\ 1 & 0 & 0 \\ 0 & 0 & 1 \end{pmatrix},\
    \Pi\suptiny{0}{0}{(3)}_{\mathrm{AC}} \,=\,\begin{pmatrix} 0 & 0 & 1 \\ 0 & 1 & 0 \\ 1 & 0 & 0 \end{pmatrix},\
\end{align}
in a particular order, i.e.,
\begin{align}
    \Pi\suptiny{0}{0}{(1)} &=\,\Pi\suptiny{0}{0}{(1)}_{\mathrm{C}},\
    \Pi\suptiny{0}{0}{(2)} \,=\,\Pi\suptiny{0}{0}{(1)}_{\mathrm{AC}},\
    \Pi\suptiny{0}{0}{(3)} \,=\,\Pi\suptiny{0}{0}{(3)}_{\mathrm{C}},\nonumber\\
    \Pi\suptiny{0}{0}{(4)} &=\,\Pi\suptiny{0}{0}{(3)}_{\mathrm{AC}},\
    \Pi\suptiny{0}{0}{(5)} \,=\,\Pi\suptiny{0}{0}{(2)}_{\mathrm{C}},\
    \Pi\suptiny{0}{0}{(6)} \,=\,\Pi\suptiny{0}{0}{(2)}_{\mathrm{AC}}.
    \label{permutationdim3dim}
\end{align}
where $\Pi\suptiny{0}{0}{(1)}$, $\Pi\suptiny{0}{0}{(5)}$, and $\Pi\suptiny{0}{0}{(3)}$ trivially commute with $\Pi\suptiny{0}{0}{(5)}=\Pi$. It is further straightforward to show that
\begin{align}
    \Pi\suptiny{0}{0}{(6)} (\mathds{1}+\Pi\suptiny{0}{0}{(5)}) &= (\mathds{1}+\Pi\suptiny{0}{0}{(5)}) \Pi\suptiny{0}{0}{(2)}\nonumber\\
    \Pi\suptiny{0}{0}{(2)} (\mathds{1}+\Pi\suptiny{0}{0}{(5)}) &= (\mathds{1}+\Pi\suptiny{0}{0}{(5)}) \Pi\suptiny{0}{0}{(4)}
    \label{eq:comm. relation 3}\\
    \Pi\suptiny{0}{0}{(4)} (\mathds{1}+\Pi\suptiny{0}{0}{(5)}) &= (\mathds{1}+\Pi\suptiny{0}{0}{(5)}) \Pi\suptiny{0}{0}{(6)}\nonumber.
\end{align}
For any $3\times 3$ doubly stochastic matrix $M$ given in the form of Eq.~(\ref{eq:gen. doubly stoch}), by using Eq.~(\ref{eq:comm. relation 3}), one may thus find a $3\times 3$ doubly stochastic matrix $\tilde{M}$ of the form
\begin{align}
     \tilde{M} &=\alpha_1\, \Pi\suptiny{0}{0}{(1)}+\alpha_3\, \Pi\suptiny{0}{0}{(3)}+\alpha_5\, \Pi\suptiny{0}{0}{(5)}\nonumber\\
     &+\alpha_2\, \Pi\suptiny{0}{0}{(4)}+\alpha_4\, \Pi\suptiny{0}{0}{(6)}+\alpha_6\, \Pi\suptiny{0}{0}{(2)},
\end{align}
which satisfies the required condition.
\end{proof}


\subsection{Generalisation of the majorised marginals approach}\label{sec:generalisations}

One of the approaches to investigat{ing} the existence of {the} unitaries mentioned in {Q}uestion~\ref{question:1} is {to} generalise the `majorised marginals approach' of Sec.~\ref{sec:commutingM}. This approach appears to be promising because of its simplicity and utility. For such a generalisation to higher dimensions to be successful, Eq.~(\ref{eq:EqualEvoMarginal}) would demand to prove the following claim.

\begin{claim}\label{claim:swappingM}
For {any} $d \times d$ doubly stochastic matrix $M$, {with} $d\geq 4$, there exists a doubly stochastic matrix \(\tilde{M}\) such that
\begin{equation}
    M (\mathds{1}+\Pi^i)=(\mathds{1}+\Pi^i) \tilde{M}_i~~ \forall i \in \{ 1, \dots , \lfloor \tfrac{d}{2} \rfloor\}.
    \label{eq:gen comm M}
\end{equation}
\end{claim}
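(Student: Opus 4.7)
My strategy is to mimic the proof of Lemma~\ref{lemma:swappingM} and reduce to a finite combinatorial problem via Birkhoff. Since both sides of the claimed identity $M(\mathds{1}+\Pi^i)=(\mathds{1}+\Pi^i)\tilde{M}_i$ are linear in $M$ and the doubly stochastic matrices form the convex hull of permutation matrices, it suffices to exhibit, for every permutation matrix $P$ and every index $i\in\{1,\dots,\lfloor d/2\rfloor\}$, a doubly stochastic $\tilde{P}_i$ obeying $P(\mathds{1}+\Pi^i)=(\mathds{1}+\Pi^i)\tilde{P}_i$; the required $\tilde{M}_i$ is then the same convex combination of the $\tilde{P}_i$ as $M$ is of the $P$. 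This reduction is free and mirrors exactly the step that made the $d=3$ case finite.

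\textbf{Combinatorial reformulation.} Let $\pi$ be the permutation with $P_{k,\pi(k)}=1$. A direct computation, using $(\Pi^i)_{kl}=\delta_{k,l+i\operatorname{mod}d}$, shows that row $k$ of $P+P\Pi^i$ has ones precisely at columns $\pi(k)$ and $\pi(k)-i$ (mod $d$). Seeking $\tilde{P}_i$ first as a permutation matrix $\sigma$, row $k$ of $\tilde{P}_i+\Pi^i\tilde{P}_i$ has ones at columns $\sigma(k)$ and $\sigma(k-i)$, so the problem reduces to the row-wise set equalities
$$\{\sigma(k),\sigma(k-i)\}=\{\pi(k),\pi(k)-i\}\qquad\forall k\in\mathbb{Z}_d.$$
At each row $k$ one must decide whether $\sigma(k)=\pi(k)$ (``type $A$'') or $\sigma(k)=\pi(k)-i$ (``type $B$''); fixing this choice at $k$ forces the opposite type at $k\pm i$, so the types alternate along each orbit of the shift $k\mapsto k+i$ on $\mathbb{Z}_d$. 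The shift partitions $\mathbb{Z}_d$ into $\gcd(i,d)$ orbits of common length $d/\gcd(i,d)$, reducing the task to a parity/consistency check on each orbit.

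\textbf{Execution and main obstacle.} On orbits of even length the $A/B$ alternation closes consistently, producing a candidate value of $\sigma(k)$ for every $k$ on that orbit; on odd-length orbits one must relax $\tilde{P}_i$ to a convex combination $\tfrac{1}{2}\sigma^{(A)}+\tfrac{1}{2}\sigma^{(B)}$ averaging the two cyclic phases. I expect the decisive step to be the verification that the map $k\mapsto\sigma(k)$ thus assembled across all orbits is a genuine permutation of $\mathbb{Z}_d$: distinct orbits can force the same column value, violating injectivity. The only freedom to correct this is $\ker(\mathds{1}+\Pi^i)$, which is nontrivial only when $d$ is even and $\Pi^i$ has eigenvalue $-1$, and is then of dimension at most $\gcd(i,d)$, generally too small to absorb inter-orbit collisions created by a generic $\pi\in S_d$. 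The plan would therefore be completed by an explicit global construction of $\sigma$ (or a kernel-corrected $\tilde{P}_i$) that resolves these collisions; I anticipate this to be the hard part, since for $d\ge 4$ the richer group $S_d$ allows cycle structures of $\pi$ whose interaction with the $\Pi^i$-orbits is incompatible with any consistent $A/B$ labelling, in contrast to the $d=3$ situation where $S_3$ is small enough that the six cases listed in Eq.~(\ref{permutationdim3dim}) exhaust all possibilities and the commutation relations~(\ref{eq:comm. relation 3}) always produce a valid $\tilde{M}$.
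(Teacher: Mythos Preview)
The fundamental problem is that Claim~\ref{claim:swappingM} is \emph{false}, and the paper does not prove it but rather disproves it with an explicit counterexample in $d=4$. Your proposal therefore cannot be completed: the ``hard part'' you correctly isolate --- assembling the orbit-wise assignments into a genuine permutation (or into any doubly stochastic $\tilde{P}_i$) --- is not merely hard but impossible in general.

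Concretely, the paper takes the permutation matrix
\[
M=\begin{pmatrix}
1&0&0&0\\
0&0&0&1\\
0&1&0&0\\
0&0&1&0
\end{pmatrix}
\]
(the permutation fixing $0$ and cycling $1\mapsto 3\mapsto 2\mapsto 1$) and shows directly that there is no doubly stochastic $\tilde{M}$ with $M(\mathds{1}+\Pi)=(\mathds{1}+\Pi)\tilde{M}$. In your language, $i=1$ and $d=4$ give a single orbit of even length $4$, so the $A/B$ alternation closes consistently and produces two candidate permutations $\sigma^{(A)}$ and $\sigma^{(B)}$ --- but neither is a bijection, and no convex combination of them (nor any kernel correction, since $\ker(\mathds{1}+\Pi)$ is one-dimensional here) yields a doubly stochastic matrix satisfying the required identity. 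Your intuition in the final paragraph that ``cycle structures of $\pi$ whose interaction with the $\Pi^i$-orbits is incompatible with any consistent $A/B$ labelling'' exist for $d\ge 4$ is exactly right; the correct conclusion is that this obstruction is genuine and the claim fails, not that it awaits a more clever construction.
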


Note that by permuting the chosen basis vectors in a particular manner, $(\mathds{1}+\Pi^i)$ can always be written in the form of $(\mathds{1}+\Pi)$. Additionally, we know that such a transformation of the basis transforms any doubly stochastic matrix to another doubly stochastic matrix. Since Eq.~(\ref{eq:gen comm M}) should be true for any doubly stochastic matrix $M$, these two facts imply that Eq.~(\ref{eq:gen comm M}) can be reduced to showing that for any doubly stochastic $d\times d$ matrix $M$ there exists a doubly stochastic matrix $\tilde{M}$ such that
\begin{equation}
    M (\mathds{1}+\Pi)\,=\,(\mathds{1}+\Pi) \tilde{M}.
    \label{eq:swapping relation}
\end{equation}
Using a counterexample, we show that Claim~\ref{claim:swappingM} does not hold in general for dimensions $d\geq 4$. In particular, we construct a counterexample in dimension $4$.\\


\noindent\textbf{Counterexample to Claim}~\ref{claim:swappingM}.\
Let
\begin{equation}
M=\begin{pmatrix}
1&0&0&0\\
0&0&0&1\\
0&1&0&0\\
0&0&1&0.
\end{pmatrix}{.}
\label{m-temp}
\end{equation}
As we will now show, there exists no doubly stochastic $\tilde{M}$ such that $M (\mathds{1} + \Pi)= (\mathds{1}+\Pi) \tilde{M}$. To do so, we first calculate
\begin{equation}
M(\mathds{1}+\Pi)=\begin{pmatrix}1&0&0&1\\
0&0&1&1\\
1&1&0&0\\
0&1&1&0 \end{pmatrix}.
\end{equation}
We are then interested in determining whether this can be equal to $(\mathds{1}+\Pi) \tilde{M}$ for some $\tilde{M}$, with elements $m_{ij} \geq 0$, for $i,j =0,\dots,3$ such that each column and row sums to $1$. From the first row of $(\mathds{1}+\Pi)\tilde{M}$, we obtain $m_{{00}}+m_{{30}}=1$, which implies $m_{{10}}=m_{{20}}=0$, $m_{{01}}=m_{{31}}=0$, and $m_{{02}}=m_{{32}}=0$, and further we obtain $m_{{03}}+m_{{33}}=1$, which implies $m_{{13}}=m_{{23}}=0$. Thus, $\tilde{M}$ must be of the form
\begin{equation}
\tilde{M}=\begin{pmatrix}
m_{00}&0&0&m_{03}\\
0&m_{11}&m_{12}&0\\
0&m_{21}&m_{22}&0\\
m_{30}&0&0&m_{33}\\
\end{pmatrix}.
\end{equation}
Then, from the second row of $(\mathds{1}+\Pi)\tilde{M}$ we obtain $m_{{00}}=0$, implying $m_{{30}}=1$, as well as $m_{{11}}=0$, which implies $m_{{21}}=1$. Moreover, we have $m_{{12}}=1$, which means $m_{{22}}=0$, while $m_{{03}}=1$ suggests $m_{{33}}=0$. We hence have
\begin{equation}
\tilde{M}=\begin{pmatrix}
0&0&0&1\\
0&0&1&0\\
0&1&0&0\\
1&0&0&0\\
\end{pmatrix}.
\end{equation}
But since $((\mathds{1}+\Pi) \tilde{M})_{31}=0 \neq (M ( \mathds{1}+\Pi))_{31}=0$ we arrive at a contradiction.


Since Lemma~\ref{lemma:swappingM} {does not hold in general}, one can try to relax it to a weaker statement which is still suitable for our purposes. One way to do so is to demand that an equivalent of Eq.~(\ref{eq:swapping relation}) holds only when applied to (certain) vectors, in the spirit of the observation that although the set of doubly stochastic matrices does not coincide with that of unistochastic ones, given a vector $\vec{v}$ with nonnegative components and a doubly stochastic matrix $M$, there is yet a unistochastic $M_U$ such that $M  \vec{v} = M_U  \vec{v}$. Thus, we investigate whether the following statement is true or not:

\begin{claim}\label{claim:swappingMv}
Given a vector $\vec{v}$ of nonnegative components and a doubly stochastic matrix $M$, there exists a doubly stochastic matrix $\tilde{M}_{v}$, which may depend on $\vec{v}$, such that
\begin{equation}
    M (\mathds{1}+\Pi) \vec{v}=(\mathds{1}+\Pi) \tilde{M}_{v}  \vec{v}.
\end{equation}
\end{claim}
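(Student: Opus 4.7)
The plan is to reformulate Claim~\ref{claim:swappingMv} in terms of the vector $\vec{u} := \tilde{M}_{v}\vec{v}$. By the HLP theorem, a doubly stochastic matrix $\tilde{M}_{v}$ with $\tilde{M}_{v}\vec{v} = \vec{u}$ exists if and only if $\vec{u}$ has nonnegative components and $\vec{u}\prec\vec{v}$. The claim therefore reduces to finding $\vec{u}$ satisfying (i) $\vec{u}\geq 0$ componentwise, (ii) $\vec{u}\prec\vec{v}$, and (iii) $(\mathds{1}+\Pi)\vec{u} = M(\mathds{1}+\Pi)\vec{v}$. Since (iii) is a linear constraint independent of $\tilde{M}_{v}$ and therefore the most rigid, I would first analyse its solvability in $\vec{u}$ and only then test (i) and (ii).

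The decisive structural input is the spectrum of the circulant matrix $\mathds{1}+\Pi$, whose eigenvalues are $1 + e^{2\pi i k/d}$ for $k = 0,\dots,d-1$. For odd $d$ none of these vanishes, $\mathds{1}+\Pi$ is invertible, and (iii) uniquely determines $\vec{u} = (\mathds{1}+\Pi)^{-1} M (\mathds{1}+\Pi)\vec{v}$. What remains, in order to prove the claim in odd dimensions, is to verify nonnegativity of this $\vec{u}$ and the majorisation $\vec{u}\prec\vec{v}$. I expect this to be the genuine obstacle for $d\geq 5$, since it goes beyond the permutation-decomposition argument used in Lemma~\ref{lemma:swappingM} for $d=3$ and presumably requires new, dimension-specific input.

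For even $d$, by contrast, $\mathds{1}+\Pi$ has a one-dimensional kernel spanned by $\vec{k} := (1,-1,1,-1,\dots,-1)^{T}$, with range $\vec{k}^{\perp}$. A generic doubly stochastic $M$ does not preserve this subspace, because $M^{T}\vec{k}$ is generally not proportional to $\vec{k}$; hence $M(\mathds{1}+\Pi)\vec{v}$ can fall outside the range of $\mathds{1}+\Pi$ and make condition (iii) outright infeasible. This leads me to expect Claim~\ref{claim:swappingMv} to fail in even dimensions, and the cleanest way to settle the matter would be to upgrade the $d=4$ counterexample to Claim~\ref{claim:swappingM} from Appendix~\ref{sec:generalisations} by supplying a specific $\vec{v}$. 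A brief check suggests that, with the same permutation $M$ as in Appendix~\ref{sec:generalisations} and $\vec{v} = (1,0,0,0)^{T}$, one gets $M(\mathds{1}+\Pi)\vec{v} = (1,0,1,0)^{T}$, whose inner product with $\vec{k}$ equals $2\neq 0$, so (iii) admits no solution $\vec{u}$ whatsoever. The expected conclusion is then that the ``majorised marginals'' route requires a further relaxation in even dimensions, for instance by restricting attention to those pairs $(M,\vec{v})$ for which $M$ preserves the range of $\mathds{1}+\Pi$ on the orbit of $\vec{v}$.
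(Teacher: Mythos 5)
Your proposal reaches the same verdict as the paper: Claim~\ref{claim:swappingMv} is false, and it is refuted by exactly the counterexample the paper uses, namely the permutation $M$ of Appendix~\ref{sec:generalisations} together with $\vec{v}=(1,0,0,0)^{T}$ in $d=4$. The difference lies in how the infeasibility is certified. The paper writes out $(\mathds{1}+\Pi)\tilde{M}_{v}\vec{v}$ entrywise and derives an inconsistency among the linear equations for the first column of $\tilde{M}_{v}$; you instead note that for even $d$ the alternating vector $(1,-1,\dots,-1)^{T}$ spans the kernel of $(\mathds{1}+\Pi)^{T}$, so the range of $\mathds{1}+\Pi$ is its orthogonal complement, and since $M(\mathds{1}+\Pi)\vec{v}=(1,0,1,0)^{T}$ has overlap $2$ with that vector, no vector $\vec{u}$ whatsoever — let alone one of the form $\tilde{M}_{v}\vec{v}$ — can satisfy $(\mathds{1}+\Pi)\vec{u}=M(\mathds{1}+\Pi)\vec{v}$. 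This is a cleaner and slightly stronger formulation of the same obstruction (the paper's entry bookkeeping likewise never needs nonnegativity or stochasticity), and it has the added value of explaining structurally why even dimensions are the problematic ones, namely that $-1$ is an eigenvalue of $\Pi$. Your HLP reformulation and the observation that for odd $d$ the matrix $\mathds{1}+\Pi$ is invertible, so that only nonnegativity and majorisation of $(\mathds{1}+\Pi)^{-1}M(\mathds{1}+\Pi)\vec{v}$ would remain to be checked, go beyond the paper's treatment of this claim; you rightly leave that case open, and since the claim is stated universally, your $d=4$ counterexample — like the paper's — already suffices to disprove it.
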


The relaxation being clearly that now $\tilde{M}$ is allowed to depend on $\vec{v}$. Unfortunately, the previous counterexample carries over to Claim~\ref{claim:swappingMv}, as we shall see.\\

\noindent\textbf{Counterexample to Claim}~\ref{claim:swappingMv}.\
Let us consider $M$ as in Eq.~(\ref{m-temp}) and let \( \vec{v}=(1,0,0,0)^{{T}}\). Then we have
\begin{equation}\label{eq:Monv}
    M (\mathds{1}+\Pi)  \vec{v} = (1,\,0,\,1,\,0)^{T}.
\end{equation}
We then want to know if this equals $(\mathds{1}+\Pi) \tilde{M}_{v}  \vec{v}$ for some $\tilde{M}_{v}$ with elements $m_{ij} \geq 0 , \, i, j ={0},\dots,{3}$ such that each column and row sums to $1$. We obtain
\begin{equation}
    (\mathds{1}+\Pi) \tilde{M}_{v}  \vec{v} =
    \begin{pmatrix}
        m_{{00}}+m_{{30}}\\
        m_{{00}}+m_{{10}}\\
        m_{{10}}+m_{{20}}\\
        m_{{20}}+m_{{30}}
    \end{pmatrix}.
\end{equation}
Comparing this with Eq.~(\ref{eq:Monv}), we have $m_{00}+m_{30}=m_{10}+m_{20}$ and $m_{20} + m_{30} = m_{00} + m_{10}$, which yields $m_{30}=m_{10}$, and concurrently $m_{00} + m_{30}=0$ and $m_{00} + m_{30}=1$, which means we arrive at a contradiction.


\subsection{Counterexample for majorisation relations}\label{app:claim:qmajri}

Here, via counterexample, we show that the following claim (discussed in Sec.~\ref{sec:3_passing}) does not hold in general:

\begin{claim}\label{claim:qmajr}
The relation $\frac{\vec{q}}{\lVert \vec{q} \rVert} \succ \frac{\vec{b}_i}{\lVert \vec{b}_i \rVert}$ holds $\forall$ $d\geq2$ and $\beta' \leq \beta$.
\end{claim}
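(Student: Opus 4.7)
The plan is to rewrite both sides of the claimed majorisation as Gibbs-like distributions on the $d$-element label set and then combine (i) the standard result that, for a \emph{fixed} Hamiltonian, lowering the temperature sharpens the Gibbs distribution in the majorisation order, with (ii) a structural argument that absorbs the mismatch between the two effective Hamiltonians into a convex combination of permutations of a common Gibbs state. Writing $Z(\beta)=\sum_k e^{-\beta E_k}$ and $\tilde Z_i(\beta')=\sum_k e^{-\beta'(E_k+E_{k+i})}$, one has
\begin{equation*}
    \bigl(\vec{q}/\lVert\vec{q}\rVert\bigr)_j
    = \frac{e^{-2\beta E_j}}{Z(2\beta)},
    \qquad
    \bigl(\vec{b}_i/\lVert\vec{b}_i\rVert\bigr)_j
    = \frac{e^{-\beta'(E_j+E_{j+i})}}{\tilde Z_i(\beta')},
\end{equation*}
so the first vector is the Gibbs state of $H=\mathrm{diag}(E_j)$ at inverse temperature $2\beta$, while the second is the Gibbs state of the effective Hamiltonian $H+\Pi^{-i}H\Pi^{i}$ at inverse temperature $\beta'$.

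First I would identify the sorting permutations $\sigma_q$ and $\sigma_{b_i}$ that arrange each vector in non-increasing order; since $E_j$ is non-decreasing in $j$, the former is the identity, whereas the latter sorts $\{E_j+E_{j+i}\}_j$. The claim is then equivalent to the family of partial-sum inequalities
\begin{equation*}
    \sum_{j=0}^{m-1}\frac{e^{-2\beta E_{\sigma_q(j)}}}{Z(2\beta)}
    \;\geq\;
    \sum_{j=0}^{m-1}\frac{e^{-\beta'(E_{\sigma_{b_i}(j)}+E_{\sigma_{b_i}(j)+i})}}{\tilde Z_i(\beta')} \qquad (1\leq m\leq d).
\end{equation*}
For $i=0$ the two vectors share their sorting and the same Hamiltonian up to an overall factor, and the inequality reduces to the well-known monotonicity of the Gibbs distribution in its inverse temperature, using $\beta'\leq\beta$. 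For $i\neq 0$ the task is to control the partial sums in spite of the rearrangement induced by $\sigma_{b_i}$.

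The main idea for $i\neq 0$ is to express $\vec{b}_i/\lVert\vec{b}_i\rVert$ as a convex combination $\sum_\pi \lambda_\pi\,\pi\bigl(\tau(\beta'')\bigr)$ of permutations of a \emph{single} Gibbs distribution $\tau(\beta'')$ of the original Hamiltonian $H$, with an effective inverse temperature $\beta''=\beta''(\beta',\{E_j\},i)$ chosen so that $\beta''\leq 2\beta$ whenever $\beta'\leq\beta$. Since majorisation is invariant under permutations and preserved under convex combinations with a common majorised element, such a decomposition combined with the monotonicity step would yield $\vec{q}/\lVert\vec{q}\rVert=\tau(2\beta)\succ\tau(\beta'')\succeq\vec{b}_i/\lVert\vec{b}_i\rVert$. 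A natural candidate for $\beta''$ is the one that matches the R\'enyi $\infty$-entropy (i.e., the largest component), or alternatively the collision entropy, of $\vec{b}_i/\lVert\vec{b}_i\rVert$; one then has to verify that the resulting $\beta''$ respects the bound $\beta''\leq 2\beta$ for all admissible $\beta',\{E_j\},i$.

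The hard part will be justifying the decomposition step for arbitrary spectra: the effective Hamiltonian $H+\Pi^{-i}H\Pi^{i}$ generically has gaps that cannot be matched to any rescaling of the gaps of $H$, so it is far from obvious that its Gibbs distribution lies in the lower cone $\{v:\tau(2\beta)\succ v\}$. In particular, a strongly non-uniform level structure (for instance $E_{j+i}-E_j\gg E_1$ for some $j$) can push the top entry of $\vec{b}_i/\lVert\vec{b}_i\rVert$ arbitrarily close to $1$ while the top entry of $\vec{q}/\lVert\vec{q}\rVert$ stays bounded away from $1$, which threatens the $m=1$ partial-sum inequality. Controlling (or excluding) precisely this regime is where I expect the argument to require most work, and should the obstruction prove essential, the natural fallback is to extract a list of sufficient conditions on $\{E_j\}$, $\beta$, and $\beta'$ under which the majorisation can be certified rather than a fully universal statement.
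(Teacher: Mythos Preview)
Your proposal is aiming at the wrong target: in the paper this claim is stated precisely in order to be \emph{refuted}, not proved. The paper's treatment of Claim~\ref{claim:qmajr} is a counterexample, and the obstruction you flag in your last paragraph is not a technical wrinkle to be ``controlled'' but the actual reason the statement fails.

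Concretely, the paper takes the limit $E_{d-1}\to\infty$ (with any finite $\beta,\beta'>0$), so that $p_{d-1}(\beta)=p_{d-1}(\beta')=0$. Then $\vec{q}/\lVert\vec{q}\rVert$ has exactly one vanishing component (at index $d-1$), while for $i\neq 0$ the vector $\vec{b}_i/\lVert\vec{b}_i\rVert$ has two vanishing components (at indices $d-1$ and $d-1-i$), since each entry $(\vec{b}_i)_j=p'_{j}p'_{j+i}$ picks up the factor $p'_{d-1}$ whenever $j=d-1$ or $j+i\equiv d-1\pmod d$. But if $x\succ y$ with $\lVert x\rVert_1=\lVert y\rVert_1$, then the tail conditions force $x^{\downarrow}_d\leq y^{\downarrow}_d$ and $x^{\downarrow}_{d-1}+x^{\downarrow}_d\leq y^{\downarrow}_{d-1}+y^{\downarrow}_d$; with $y^{\downarrow}_{d-1}=y^{\downarrow}_d=0$ this would require $x$ to have at least two zero entries as well. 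Hence a vector of support size $d-1$ cannot majorise one of support size $d-2$, and the claimed relation fails. This is essentially the regime you yourself singled out, phrased as a rank/support obstruction rather than via the $m=1$ partial sum.

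Your decomposition strategy therefore cannot succeed in general: no choice of effective $\beta''$ or convex combination of permuted Gibbs states will produce $\tau(2\beta)\succ \vec{b}_i/\lVert\vec{b}_i\rVert$ in the above limit, because the target lies outside the majorisation lower cone of $\vec{q}/\lVert\vec{q}\rVert$ for a structural reason. The correct resolution is to abandon the proof attempt and present the counterexample; the paper then replaces this failed route with the weaker but valid relations of Lemma~\ref{lemma:GenPassNormD} (comparing $\vec{r}_i$ to $\vec{b}_i$ rather than $\vec{q}$ to $\vec{b}_i$) and, where needed, the relation $\vec{q}/\lVert\vec{q}\rVert\succ (\mathds{1}+\Pi^i)\vec{b}_i/(2\lVert\vec{b}_i\rVert)$, which does hold in the relevant low-dimensional cases.
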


\noindent\textbf{Counterexample to Claim}~\ref{claim:qmajr}.\
Consider the case where the last eigenvalue of the local Hamiltonian is infinitely large, $E_{d-1}\to \infty$. In this case, we know that for a thermal state with any finite temperature, the corresponding probability weight vanishes, $p_{d-1}(\beta)\to 0$. Due to the definitions of the vectors $\vec{q}$ and $\vec{r}_i$ [Eq.~(\ref{deco. marginal})], it is clear that $p_{d-1}(\beta)$ contributes to one element of the vector $\vec{q}$, i.e., to $({\vec{q}})_{d-1}=p_{d-1}^2$, and two elements of the vector $\vec{r}_i$, $(\vec{r}_{{i}})_{d-1-i}=p_{d-1-i}\,p_{d-1}$ and $(\vec{r}_{{i}})_{d-1}=p_{d-1}p_{d-1+i}$. Hence, the vectors $\vec{q}$ and $\vec{b}_i=\vec{r_i}(\beta^\prime)$ have $d-1$ and $d-2$ nonzero elements, respectively. Now recalling from majorisation theory that no vector can be majorised by a vector with higher rank one can see that Claim~\ref{claim:qmajr} does not hold.


\subsection{Proof of Lemma \emph{\ref{lemma:GenPassNormD}}}
\label{app:PassNorm}

Here, we present the proof of Lemma~\ref{lemma:GenPassNormD} for any $d$-dimensional system, which ensures that condition~(\ref{passing con i}) holds in general.

\begin{proof}[Proof of Eq.~\emph{(\ref{lemma:rismallerbi})} in Lemma~\emph{\ref{lemma:GenPassNormD}}.]
To prove that Eq.~(\ref{lemma:rismallerbi}) holds, we need to show that for any $\beta> \beta^{\prime}$, $g(\beta ):= \lVert \vec{q} \rVert $ is greater than $g(\beta^{\prime} )= \lVert \vec{ a} \rVert$ in $d$-dimensional systems. To achieve this, one may use the positivity of the first derivative of the function $g(\beta)$. We therefore calculate
\begin{align}
    \partial_\beta \lVert \vec{q} \rVert &=\partial_\beta ( \frac{\sum_i e^{-2\beta E_i}}{\sum_{m,n} e^{-\beta (E_m+E_n)}})\\
    &=\,\tfrac{1}{z(\beta)^4}\sum_{i,m,n}(E_m+E_n-2E_i)e^{-\beta (2E_i+E_m+E_n)}\nonumber\\
    &=\,\tfrac{2}{z(\beta)^3}\sum_{i,m}(E_m-E_i)e^{-\beta (2E_i+E_m)}\nonumber\\
    &=\,\tfrac{2}{z(\beta)^3}\sum_{m\,>\, i}\big[(E_m-E_i)e^{-\beta (2E_i+E_m)}\nonumber\\
    &\ \ \ \ \ +\,(E_i-E_m)e^{-\beta (E_i+2E_m)}\big]\nonumber\\
    &\,=\tfrac{2}{z(\beta)^3}\sum_{m\,>\, i}(E_m-E_i)(e^{-\beta (2E_i+E_m)}-e^{-\beta (E_i+2E_m)}).\nonumber
\end{align}
Without loss of generality, we have ordered the energy eigenvalues in increasing order, $E_{i+1}\geq E_i$. Then, for any pair $(m,i)$ with $m> i$, we know that $(e^{-\beta (2E_i+E_m)}-e^{-\beta (E_i+2E_m)})$ is nonnegative, implying $\partial_\beta \lVert \vec{q} \rVert \geq 0$. Also note that the last inequality is strict unless $E_i=E_m$ for all $m>i$ implying $H \propto \mathds{1}$ for which the problem is already solved. For all practical purposes, the inequality is therefore strict.
\end{proof}

\begin{proof}[Proof of Eq.~\emph{(\ref{lemma:rimajbi})} in Lemma~\emph{\ref{lemma:GenPassNormD}}.]
Using our convention of LCSs, see Sec.~\ref{sec:LCSs}, we have
\begin{equation}
    \vec{r}_i:= \sum_{j=0}^{d-1} p_{j\, j+i} \,\vec{e}_{j}\ \ \
    \text{and}\ \ \
    \vec{b}_i:= \sum_{j=0}^{d-1} p_{j\, j+i}^\prime \,\vec{e}_{j},
\end{equation}
where $p_{j\, j+i}(\beta):=e^{-\beta (E_j+E_{j+i})}/\mathcal{Z}(\beta)^2$ and $p^\prime_{j\, j+i}:=p_{j\, j+i}(\beta\pr)$. We then calculate
\begin{align}
    \frac{\vec{r}_i}{\lVert \vec{r}_i \rVert} &= \sum_{j=0}^{d-1} \frac{p_{j,j+i}}{\sum_{k=0}^{d-1} p_{k\,k+i}} \,\vec{e}_{j}
    \,=\, \sum_{j=0}^{d-1} \frac{e^{-\beta (E_j+E_{j+i})}}{\sum_{k=0}^{d-1} e^{-\beta (E_k+E_{k+i})}} \,\vec{e}_{j}
    \nonumber\\
    &= \sum_{j=0}^{d-1} \frac{e^{-\beta A_j}}{\sum_{k=0}^{d-1} e^{-\beta A_k}} \,\vec{e}_{j},
\end{align}
where $ A_j:= E_j+E_{j+i}$ for all $j=0,\dots, d-1$. The last expression is nothing else than the vectorised form of a thermal state at inverse temperature $\beta$ with respect to the Hamiltonian $\sum_{j=0}^{d-1} A_j \ket{j}\!\!\bra{j}$. Since the vector of diagonal entries of any thermal state majorises the vectorised diagonal of any thermal state (with respect to the same Hamiltonian) with lower inverse temperature $\beta'< \beta$, which is nothing else than $\vec{b}_i/\lVert \vec{b}_i \rVert$, Eq.~(\ref{lemma:rimajbi}) holds, which concludes the proof.
\end{proof}


\subsection{The existence of STUs with the stronger version of condition~(\ref{passing con ii})}\label{app:strong condition II}

In this section, we will show how one can prove the existence of the STUs if condition~(\ref{passing con i}) and the stronger version of condition~(\ref{passing con ii}) from Sec.~\ref{sec:3_passing} hold.

condition~(\ref{passing con i}) ensures that there exist doubly stochastic matrices $M_{r_i}$ which transform $\vec{r}_i/ \lVert \vec{r}_i\rVert$ to $\vec{b}_i/ \lVert \vec{b}_i\rVert$ and also $\vec{q}/ \lVert \vec{q}\rVert$ to $\vec{a}/ \lVert \vec{a} \rVert$, i.e.,
\begin{equation}
    M_{r_i} \frac{\vec{r}_i}{\lVert \vec{r}_i\rVert}=\frac{\vec{b}_i}{\lVert \vec{b}_i\rVert} ~~~\mathrm{for} ~~ i=0,\dots,k.
    \label{eq:rimajbiddd}
\end{equation}
Using Eqs.~(\ref{eq:EqualEvoMarginal C}) and~(\ref{eq:rimajbiddd}), the marginals then become
\begin{align}
    \vec{\tilde{p}}
    &= M_q\vec{q} + \sum_{i=1}^{k} (\lfloor\tfrac{2i}{d}\rfloor+1)^{-1} (\mathds{1}+ \Pi^i) \frac{\rVert     \vec{r}_i\lVert}{\rVert \vec{b}_i\lVert}\vec{b}_i .
    \label{eq:EqualEvoMarginal Cprime11}
\end{align}
To reach a thermal state with higher temperature, one can compensate the required norm of $(\openone+\Pi^i)\,\vec{b}_i$ by extra norm from the vector $\vec{q}$. To achieve that, we need to have  $\lVert \vec{q}\rVert \geq \lVert \vec{ a} \rVert$, which is ensured by condition~(\ref{passing con i}). In addition, another requirement for shifting the norm is the existence of doubly stochastic matrices $M_{q\to2 {b}_i}$ to transform $\vec{q}/ \lVert \vec{q}\rVert$ to $(\openone+\Pi^i)\,\vec{b}_i/2\lVert \vec{b}_i\rVert$. Due to the HLP theorem, condition~(\ref{passing con ii}) ensures that such matrices exist. Therefore, $M_q$ can be written as
\begin{equation}
    M_q= \alpha_0 \,M_{q\to a}+
    \sum_{i=1}^{k}  \alpha_i\, M_{q\to 2 {b}_i} ,
    \label{eq: cov. com. M0}
\end{equation}
where $M_{q\to a}$ and the $M_{q\to {b}_i}$ are doubly stochastic matrices that map $\vec q$ to $\vec a$ and $\vec q$ to each of the $(\openone+\Pi^i)\,\vec{b}_i/2\lVert \vec{b}_i\rVert$, respectively. Applying $M_q$ to the vector $\vec{q}$ in Eq.~(\ref{eq:EqualEvoMarginal Cprime11}), we have
\begin{align}
    \vec{\tilde{p}}\,&=\alpha_0\,\frac{\lVert \vec{q} \rVert}{\lVert \vec{a} \rVert} \,\vec{a}+ \sum_{i=1}^{k} (\alpha_i\,\frac{\lVert \vec{q} \rVert}{2\lVert \vec{b}_i \rVert}+(\lfloor\tfrac{2i}{d}\rfloor+1)^{-1}\frac{\rVert \vec{r}_i\lVert}{\rVert \vec{b}_i\lVert}) (\openone+\Pi) \vec{b}_i.
    \label{eq:trans. 4d pass 2 }
\end{align}
In order to obtain a thermal state at inverse temperature $\beta\pr\leq \beta$, one must choose the coefficients $\alpha_i$ as
\begin{equation}
\label{eq:alphai}
    \alpha_0=\frac{\rVert \vec{a}\lVert}{\rVert \vec{q}\lVert},~~\alpha_i=2(\lfloor\tfrac{2i}{d}\rfloor+1)^{-1}\frac{(\rVert \vec{b}_i\lVert-\rVert \vec{r}_i\lVert)}{\rVert \vec{q}\lVert}.
\end{equation}
Using the fact that convex combinations of doubly stochastic matrices are doubly stochastic matrices, to ensure that $M_q$ is indeed a doubly stochastic matrix, one needs to show that the coefficients $\alpha_i$ as given by Eq.~(\ref{eq:alphai}) are positive and sum to one. The inequality $\rVert \vec{b}_i\lVert \geq \rVert \vec{r}_i\lVert$ in condition~(\ref{passing con ii}) ensures their positivity. Furthermore by using that $\lVert \vec{p} \rVert = \lVert \vec{\tilde{p}} \rVert$ we have that
\begin{equation}
    \lVert \vec{p}\rVert =\lVert \vec{q}\rVert + 2\sum_{i=1}^{k} (\lfloor\tfrac{2i}{d}\rfloor+1)^{-1}  {\rVert \vec{r}_i\lVert}=\lVert \vec{a}\rVert + 2\sum_{i=1}^{k} (\lfloor\tfrac{2i}{d}\rfloor+1)^{-1}  {\rVert \vec{b}_i\lVert},
\end{equation}
which shows that $\alpha_0 +\sum_{i=1}^{k} \alpha_i=1$, concluding the proof. \qed


\subsection{Proofs of Lemma \ref{lemma:PassNorm3d} and Lemma \ref{lemma:PassNorm4d}}\label{App:proof:passnorminD}

Here we present detailed proofs of Lemma~\ref{lemma:PassNorm3d} and Lemma \ref{lemma:PassNorm4d}.

\begin{proof} [Proof of Lemma~\emph{\ref{lemma:PassNorm3d}}]
To prove $\frac{\vec{q}}{\lVert \vec{q} \rVert} \succ \frac{(\mathds{1} + \Pi) b}{2 \lVert \vec{ b} \rVert}$, we can employ the following two majorisation relations:
\begin{align}
    \frac{\vec{q}}{\lVert \vec{q} \rVert} &\succ \frac{(\mathds{1} + \Pi) \vec{r}}{2 \lVert \vec{ r} \rVert},
    \label{eq:qmajrapp}\\
    \frac{(\mathds{1} + \Pi) \vec{r}}{2 \lVert \vec{ r} \rVert} &\succ \frac{(\mathds{1}+\Pi) \vec{b}}{2 \lVert \vec{ b} \rVert}.
    \label{eq:2rmaj2bapp}
\end{align}
If we show that these relations are true for any $\beta' \leq \beta$, we can say that our statement is proven.

To prove the relation in~(\ref{eq:qmajrapp}), we need to show that the greatest/smallest entry of $\vec{q}/\lVert \vec{q} \rVert$ is greater/smaller than the greatest/smallest entry of $\frac{\mathds{1}+\Pi}{2} \frac{\vec{r}}{\lVert \vec{ r} \rVert}$. That is, we first need to check that
\begin{align}
 \frac{p_{00}}{p_{00}+p_{11}+p_{22}} &\geq \frac{p_{01}+p_{20}}{2(p_{01}+p_{12}+p_{20})}.
\end{align}
Disregarding the trivial case where $p_{0}=1$, and $p_{1}=p_{2}=0$, this inequality can be transformed to
\begin{align}
   &e^{-\beta E_1} + 2 e^{-\beta E_1} e^{- \beta E_2} + e^{-\beta E_2}\nonumber\\
   &\ \ \geq ( e^{- \beta E_1 }+ e^{-\beta E_2})
   (e^{- 2 \beta E_1}+ e^{-2 \beta E_2})
   \label{eq:mystery inequality}
\end{align}
This inequality indeed holds, since the left-hand side is larger than (or equal to)  $(e^{-\beta E_1}+e^{-\beta E_2})^2$, which, in turn, is larger or equal to the righ-hand side of~(\ref{eq:mystery inequality}).

Then, second, we need to check the inequality
\begin{align}
 \frac{p_{22}}{p_{00}+p_{11}+p_{22}}    &{\leq}\frac{p_{20}+p_{12}}{2 (p_{01}+p_{12}+p_{20})},
\end{align}
where the relevant case (i.e., for $p_{1}\neq1$ such that at least $p_{1}\neq0$) can be rewritten as
\begin{align}
    & e^{-2 \beta E_2} (e^{-\beta (E_1+E_2)} + 2 e^{-\beta E_1} + e^{-\beta E_2})\nonumber\\
    & \ \ \leq
    (e^{-\beta E_2}+ e^{-\beta (E_1+E_2)}) (1+e^{- 2 \beta E_1}).
    \label{eq:mystery inequality 2}
\end{align}
This second inequality holds since the right-hand side of~(\ref{eq:mystery inequality 2}) is larger than (or equal to) $(e^{-\beta E_2}+e^{-\beta (E_1+E_2)})^2$, which, in turn, is larger or equal to the righ-hand side of~(\ref{eq:mystery inequality 2}).

To prove Eq.~(\ref{eq:2rmaj2bapp}), we show that the greatest entry of $\frac{(\mathds{1}+\Pi)\vec{r}}{2 \lVert \vec{ r} \rVert}$ is monotonically increasing with $\beta$, and that its smallest entry is monotonically decreasing with $\beta$. For the former we calculate
\begin{align}
    \partial_{\beta} &\frac{p_{01}+p_{20}}{2(p_{01}+p_{12}+p_{20})}= \partial_{\beta} \frac{e^{-\beta E_1}+e^{-\beta E_2}}{2( e^{- \beta E_1}+ e^{-\beta (E_1+E_2)}+e^{-\beta E_2})}\nonumber\\
    &=\, \frac{E_2 e^{-\beta (2E_1+E_2)}+E_1 e^{-\beta (E_1+2 E_2)}}{2(e^{-\beta E_1}+e^{-\beta (E_1+E_2)}+e^{-\beta E_2})^2}\geq 0.
\end{align}
And for the latter
\begin{align}
    \partial_{\beta} &\frac{p_{20}+p_{21}}{2(p_{01}+p_{12}+p_{20})}= \partial_{\beta} \frac{e^{-\beta E_2}+e^{-\beta (E_1+E_2)}}{2( e^{- \beta E_1}+ e^{-\beta (E_1+E_2)}+e^{-\beta E_2})}\nonumber\\
    &=\, \frac{(E_1-E_2) e^{-\beta (E_1+E_2)}-E_2 e^{-\beta (2E_1+ E_2)}}{2(e^{-\beta E_1}+e^{-\beta (E_1+E_2)}+e^{-\beta E_2})^2}\leq 0.
\end{align}
\end{proof}

\begin{proof}[Proof of Eq.~\emph{(\ref{lemma:qmaj2ri4d})} in Lemma~\emph{\ref{lemma:PassNorm4d}}.]
We prove {this} lemma under {the} stated condition on {the} gap structure of the local Hamiltonians, i.e., $\delta_{i}{\geq} \delta_{i+1}$. Here we {need to show} that
\begin{align}
    \frac{\vec{q}}{\lVert \vec{q} \rVert} &\succ \frac{\mathds{1}+\Pi}{2} \frac{\vec{b}_1}{\lVert \vec{b}_1 \rVert}{,}\label{qmajb14d}\\
    \frac{\vec{q}}{\lVert \vec{q} \rVert} &\succ \frac{\mathds{1}+\Pi^2}{2} \frac{\vec{b}_2}{\lVert \vec{b}_2 \rVert}= \frac{\vec{b}_2}{\lVert \vec{b}_2 \rVert}\label{qmajb24d}.
\end{align}
To {do so}, we {argue} that the majorisation relations
$ \frac{\vec{q}}{\lVert \vec{q} \rVert} \succ \frac{\mathds{1}+\Pi^i}{2} \frac{\vec{r}_i}{\lVert \vec{r}_i \rVert}$ and $\frac{\mathds{1}+\Pi^i}{2} \frac{\vec{r}_i}{\lVert \vec{r}_i \rVert}\succ \frac{\mathds{1}+\Pi^i}{2} \frac{\vec{b}_i}{\lVert \vec{b}_i \rVert}${,} for $i=1,2{,}$ {hold}.

{Let us} start with Eq.~(\ref{qmajb14d}). The condition {\(\delta_{i} \geq \delta_{i+1}\)} sets the ordering of \(\frac{\mathds{1}+\Pi}{2} \frac{{\vec{r}_{1}}}{\lVert {\vec{r}_{1}} \rVert}\). Still, some ambiguity remains as
\begin{equation}
p_{01}+p_{30} \geq p_{12}+p_{01} \geq p_{30}+p_{23} \geq p_{23}+p_{12}.
\end{equation}
Thus, we need to prove the following three inequalities:
\begin{subequations}
\begin{align}
\frac{p_{00}}{p_{00}+p_{11}+p_{22}+p_{33}} &\geq \frac{p_{01}+p_{30}}{2 (p_{01}+p_{12}+p_{23}+p_{30})}{,}\label{item a4}\\[1mm]
\frac{p_{00}+p_{11}}{p_{00}+p_{11}+p_{22}+p_{33}} &\geq \frac{p_{01}+p_{30}+p_{12}+p_{01}}{2 (p_{01}+p_{12}+p_{23}+p_{30})}{,}\label{item b4}\\[1mm]
\frac{p_{33}}{p_{00}+p_{11}+p_{22}+p_{33}} &\leq \frac{p_{32}+p_{12}}{2 (p_{01}+p_{12}+p_{23}+p_{30})}\label{item c4}.
\end{align}
\end{subequations}
Inequality~(\ref{item a4}) can be rewritten as
\begin{align}
    \frac{1}{1+e^{-2 \beta E_1}+e^{-2 \beta E_2}+e^{-2 \beta E_3}} &\geq \frac{1}{2 \left( 1+ \frac{p_{12} + p_{23}}{p_{01}+p_{30}}\right)},
\end{align}
which can further be turned into the inequality
\begin{align}
    1+1+e^{-\beta E_2}+e^{-\beta E_2}  &\geq 1+e^{-2 \beta E_1}+e^{-2 \beta E_2}+e^{-2 \beta E_3},
\end{align}
which {holds because} $E_1 \geq 0$ and $E_3 \geq E_2$.

{Similarly}, inequality~(\ref{item b4}) can be {recast} as
\begin{align}
    \frac{1}{1+\frac{p_{22}+p_{33}}{p_{00}+p_{11}}} &\geq \frac{1}{2} \left( \frac{1}{1+\frac{p_2}{p_0}}+\frac{1}{1+\frac{p_3}{p_1}} \right)
\end{align}
which implies the inequality
\begin{align}
    f\left( \frac{p_2^2+p_3^2}{p_0^2+p_1^2}\right) &\geq \frac{1}{2} \left[f\left( \frac{p_2}{p_0}\right)+f\left( \frac{p_3}{p_1}\right)\right],
\end{align}
where {we have introduced} \(f(x) := 1/(1+x)\). We {can see that} \(f'(x) \leq 0\) {and} \(\frac{p_2^2+p_3^2}{p_0^2+p_1^2} \leq \frac{p_2}{p_0}\), and in the case of \(E_3 \leq E_1 +E_2\) also \(\frac{p_2}{p_0} \leq \frac{p_3}{p_1}\){. Thus,}
\begin{equation}
f\left( \frac{p_3}{p_1} \right) \leq f \left( \frac{p_2}{p_0} \right) \leq f \left( \frac{p_2^2 +p_3^2}{p_0^2+p_1^2} \right),
\end{equation}
{from which}
\begin{equation}
\frac{1}{2} \left[f\left( \frac{p_3}{p_1} \right) + f \left( \frac{p_2}{p_0} \right)\right] \leq f \left( \frac{p_2^2 +p_3^2}{p_0^2+p_1^2} \right)
\end{equation}
follows. {This} proves inequality~(\ref{item b4}).

Inequality~(\ref{item c4}) is equivalent to
\begin{align}
    \frac{1}{e^{2 \beta E_3}+e^{2 \beta (E_3-E_1)}+ e^{2 \beta (E_3-E_2)}+1} & \leq \frac{1}{2 \left( 1 + \frac{p_{01}+p_{30}}{p_{32}+p_{12}} \right)},
\end{align}
which can be rewritten as
\begin{align}
    1 + 2\frac{p_0}{p_2} &\leq e^{2 \beta E_3}+e^{2 \beta (E_3-E_1)}+ e^{2 \beta (E_3-E_2)},
\end{align}
which is true since the right-hand side is larger or equal $e^{2 \beta E_2}+2$, which, in turn, is larger than (or equal to) the left-hand side, because $(1-e^{\beta E_2})^2 \geq 0$. This proves Inequality~(\ref{item c4}).

{Thus far, we have shown} that $ \frac{\vec{q}}{\lVert \vec{q} \rVert} \succ \frac{\mathds{1}+\Pi}{2} \frac{\vec{r}_1}{\lVert \vec{r}_1 \rVert}$. To complete the first part {of the} proof, we {also} need to show that
\begin{equation}
\frac{\mathds{1}+\Pi}{2} \frac{\vec{r}_1}{\lVert \vec{r}_1 \rVert} \succ \frac{\mathds{1}+\Pi}{2} \frac{\vec{b}_1}{\lVert \vec{b}_1 \rVert}.
\label{eq-tmp-a}
\end{equation}
To do so, we {first} prove that the derivative{s} with respect to \(\beta\) of the {two} largest elements {of \(\frac{\mathds{1}+\Pi}{2} \frac{\vec{r}_1}{\lVert \vec{r}_1 \rVert}\) are} positive and those of the two smallest elements {are} negative. {Let us represent} the elements of \(\frac{\mathds{1}+\Pi}{2} \frac{\vec{r}_1}{\lVert \vec{r}_1 \rVert}\) as
\begin{equation}
    \frac{\mathds{1}+\Pi}{2} \frac{\vec{r}_1}{\lVert \vec{r}_1 \rVert}=
    \bigl(f_{{0}}(\beta), f_{{1}}(\beta), f_{{2}}(\beta), f_{{3}}(\beta) \bigr)^{T}.
\end{equation}
{Note that}
\begin{equation}
\begin{aligned}
f_{{0}}(\beta)&= \frac{p_{01} + p_{30}}{2(p_{01}+p_{12}+p_{23}+p_{30})}=\frac{1}{2} \frac{p_0(p_1+p_3)}{(p_0+p_2)(p_1+p_3)}\\
&= \frac{1}{2} \frac{1}{1+e^{-\beta E_2}}{,}
\end{aligned}
\end{equation}
{which yields}
\begin{equation}\label{eq:f1deri}
f'_{{0}}(\beta)= \frac{1}{2} \frac{E_2\,e^{-\beta E_2}}{(1+e^{-\beta E_2})^2} \geq 0.
\end{equation}
Similarly{, we obtain}
\begin{equation}
\begin{aligned}\label{eq:fderi}
f_{{1}}(\beta)&= \frac{1}{2} \frac{1}{1+e^{-\beta (E_3-E_1)}}  , \,f'_{{1}}(\beta)=  \frac{(E_3-E_1)e^{-\beta (E_3-E_1)}}{2(1+e^{-\beta (E_3-E_1)})^2} \geq 0{,}\\
f_{{2}}(\beta)&= \frac{1}{2} \frac{1}{1+e^{\beta E_2}}  ,\,f'_{{2}}(\beta)= \frac{-E_2\,e^{\beta E_2}}{2(1+e^{\beta E_2})^2} \leq 0{,}\\
f_{{3}}(\beta)&= \frac{1}{2} \frac{1}{1+e^{\beta (E_3-E_1)}}  ,\,f'_{{3}}(\beta)=  \frac{-(E_3-E_1)e^{\beta (E_3-E_1)}}{2(1+e^{\beta (E_3-E_1)})^2} \leq 0.
\end{aligned}
\end{equation}

In the regime \(E_3 \leq E_1+E_2\), the components {are ordered as}
\begin{equation}
f_{{0}}(\beta) \geq f_{{4}}(\beta) \geq f_{{3}}(\beta) \geq f_{{2}} (\beta),
\end{equation}
and {thus} one needs to {show} for {\(\beta' \leq \beta\)} that
\begin{equation}
\begin{aligned}
f_{{0}}(\beta) &\geq f_{{0}}(\beta'){,}\\
f_{{0}}(\beta)+f_{{1}}(\beta) &\geq f_{{0}}(\beta')+f_{{1}}(\beta'){,}\\
f_{{2}}(\beta) &\leq f_{{2}}(\beta'){,}\\
f_{{2}}(\beta)+f_{{3}}(\beta) &\leq f_{{2}}(\beta')+f_{{3}}(\beta').
\end{aligned}
\end{equation}
These relations follow straightforwardly from Eqs.~(\ref{eq:f1deri}) and~(\ref{eq:fderi}), noting that a function $f_{i}(x)$ that fulfils $f'_{i}(x)\leq 0$ ($f'_{i}(x)\geq 0$) the relation $f_{i}(x_1)\leq f_{i}(x_{2})$ ($f_{i}(x_1)\geq f_{i}(x_{2})$) is also satisfied for $x_{1}\geq x_{2}$.

In {a similar fashion, in} the regime \(E_3 \geq E_1+E_2\) the ordering of the components is
\begin{equation}
f_{{1}}(\beta) \geq f_{{0}}(\beta) \geq f_{{2}}(\beta) \geq f_{{3}} (\beta),
\end{equation}
and {thus} one needs to {show} for {\(\beta' \leq \beta\)} that
\begin{equation}
\begin{aligned}
f_{{1}}(\beta) &\geq f_{{1}}(\beta'){,}\\
f_{{0}}(\beta)+f_{{1}}(\beta) &\geq f_{{0}}(\beta')+f_{{1}}(\beta'){,}\\
f_{{3}}(\beta) &\leq f_{{3}}(\beta'){,}\\
f_{{2}}(\beta)+f_{{3}}(\beta) &\leq f_{{2}}(\beta')+f_{{3}}(\beta').
\end{aligned}
\end{equation}
{These relations} also follow straightforwardly from Eqs.~(\ref{eq:f1deri}) and (\ref{eq:fderi}).

We next turn our attention to Eq.~(\ref{qmajb24d}). {Because} \(p_{02} \geq p_{13}\), we {need} to prove the following inequalities{:}
\begin{subequations}
\begin{align}
\frac{p_{00}}{p_{00}+p_{11}+p_{22}+p_{33}} &\geq \frac{p_{02}}{2 (p_{02}+p_{13})}{,}\label{item aa4}\\
\frac{p_{00}+p_{11}}{p_{00}+p_{11}+p_{22}+p_{33}} &\geq \frac{2p_{02}}{2 (p_{02}+p_{13})}{,}\label{item bb4}\\
\frac{p_{33}}{p_{00}+p_{11}+p_{22}+p_{33}} &\leq \frac{p_{13}}{2 (p_{02}+p_{13})}\label{item cc4}.
\end{align}
\end{subequations}
Inequality~(\ref{item aa4}) can be simplified to
\begin{align}
\frac{p_{0}}{p_{00}+p_{11}+p_{22}+p_{33}} &\geq \frac{p_{2}}{2 (p_{02}+p_{13})},
\end{align}
where we can use the inequalities $p_{0}p_{02}\geq p_{2}p_{11} $ and $p_{0}p_{13}\geq p_2 p_{33}$ to arrive at the condition $p_{0}p_{13}\geq  p_{2}p_{22}$.
Due to our assumption on the energy eigenvalues, we {have} $p_{03}\geq p_{22}$, which proves inequality~(\ref{item aa4}).

Inequality~(\ref{item bb4}) can be rewritten as
\begin{align}
    \frac{p_{00}+p_{11}}{p_{00}+p_{11}+p_{22}+p_{33}} &\geq \frac{2p_{02}}{2 (p_{02}+p_{13})},
\end{align}
which implies $p_{13}(p_{00}+p_{11}) \geq p_{02}(p_{22}+p_{33})$. Moreover, because $p_{00}+p_{11}\geq 2 p_{01}$, we have
\begin{align}
    p_{13} (p_{00}+p_{11}) &\geq 2 p_{01} p_{13}\geq p_{01} p_{13}+p_{02} p_{33}\nonumber\\
    &\geq p_{02} p_{22} + p_{02} p_{33},
\end{align}
where in the last step the condition $\delta_i \geq \delta_{i+1}$ was used. This proves inequality~(\ref{item bb4}).

Inequality~(\ref{item cc4}) can be {recast} as
\begin{align}
    \frac{p_{3}}{p_{00}+p_{11}+p_{22}+p_{33}} &\leq \frac{p_{1}}{2 (p_{02}+p_{13})},
\end{align}
from which we arrive at
\begin{align}
    2p_3 p_{02}+ p_3p_{13}  &\leq p_1 (p_{01}+p_{10}+p_{22})\nonumber\\
    &\leq p_1 (p_{00}+p_{11}+p_{22}),
\end{align}
which is true for any energy spectrum.
\end{proof}
Unfortunately, Eq.~(\ref{lemma:qmaj2ri4d}) in Lemma \ref{lemma:PassNorm4d} fails when \(\delta_i \ngeq \delta_{i+1}\).\\

\noindent\textbf{Counterexample to Eq.}~(\ref{lemma:qmaj2ri4d}) for $\delta_i \,{<}\, \delta_{i+1}$.\
The relation \(\frac{\vec{q}}{\lVert \vec{q} \rVert} \succ \frac{\mathds{1}+\Pi}{2} \frac{\vec{r}_1}{\lVert \vec{r}_1 \rVert}\) fails in the regime \(\delta_i \,{<}\, \delta_{i+1}\). In that regime, the greatest {element} of \(\frac{\mathds{1}+\Pi}{2} \frac{\vec{r}_1}{\lVert \vec{r}_1 \rVert}\) is \(\frac{p_{12}+p_{01}}{2(p_{01}+p_{12}+p_{23}+p_{30})}\) and so
\begin{equation}
\begin{aligned}
&\frac{p_{00}}{p_{00}+p_{11}+p_{22}+p_{33}} \geq \frac{p_{12}+p_{01}}{2(p_{01}+p_{12}+p_{23}+p_{30})}\\
\end{aligned}
\end{equation}
must hold in order for the relation to be true. {But} in the limit \(E_3 \rightarrow \infty, E_1 \rightarrow 0, E_2 \rightarrow 0\) the above inequality becomes
\begin{equation}
\begin{aligned}
\frac{p_{00}}{3 p_{00}} \geq \frac{2 p_{00}}{2(p_{00}+p_{00})}\Leftrightarrow \frac{1}{3} \geq \frac{1}{2},
\end{aligned}
\end{equation}
which is obviously {invalid}. The relation \(\frac{\vec{q}}{\lVert \vec{q} \rVert} \succ \frac{\vec{r}_1}{\lVert \vec{r}_1 \rVert}\) also fails sometimes. Looking at the limit \(E_3 \rightarrow \infty\), \(\frac{\vec{q}}{\lVert \vec{q} \rVert} \rightarrow (p_{00}, p_{11}, p_{22}, 0)\) and \(\frac{\vec{r}_1}{\lVert \vec{r}_1 \rVert} \rightarrow (p_{02}, 0 , p_{20}, 0)\). For the majorisation relation to hold, we would need \(p_{22} \leq 0\){,} which is clearly not true in general.\\

\noindent
\emph{Proof of Eq.}~\emph{(\ref{lemma:rismallerbi4d})} \emph{in Lemma}~\emph{\ref{lemma:PassNorm4d}}.\ We need to prove that $\lVert \vec{r}_i\rVert \leq \lVert \vec{b}_i\rVert${,} for $i=1,2${,} for $d=4$. To {do so, we again use} the first derivative of $\lVert \vec{r}_i\rVert$ and show that it is always negative. {From}
\begin{equation}
\lVert \vec{r}_i\rVert=\sum_{j}p_j p_{j+i}=\frac{\sum_j e^{-\beta(E_j+E_{j+i})}}{\sum_{m,n}e^{-\beta(E_m+E_n)}}{,}
\end{equation}
the partial derivative with respect to $\beta$ reads
\begin{align}
    &\partial_\beta \lVert\vec{r}_i\rVert =\sum_{j,m,n}(E_m+E_n-E_j-E_{j+i})\tfrac{e^{-\beta(E_m+E_n+E_j+E_{j+i})}}{z(\beta)^4}\nonumber\\
    &=\tfrac{1}{z(\beta)^3}\sum_{j,m}(2E_m-E_j-E_{j+i})e^{-\beta(E_m+E_j+E_{j+i})}.
\label{rjprime}
\end{align}
Now we need to show that for any $i\neq 0$, $\partial_\beta \lVert {\vec{r}_{i}}\rVert$ is negative. For $i=1$ we find
\begin{widetext}
\begin{align}
    \partial_\beta \lVert \vec{r}_1\rVert
    &=\frac{-1}{z(\beta)^3}\big[E_1(e^{-\beta E_1}+e^{-\beta(E_1+E_2)}-e^{-2\beta E_1})+(E_2-E_1)(-e^{-\beta(E_1+E_2)}+e^{-\beta(2E_1+E_2)}\nonumber\\[1mm]
    &\ +e^{-\beta(E_1+E_2+E_3)}-e^{-\beta(E_1+2E_2)})+(E_3-E_2)(e^{-\beta(E_2+E_3)}-e^{-\beta(E_1+E_2+E_3)}+e^{-\beta(2E_2+E_3)}-e^{-\beta(E_2+2E_3)})\nonumber\\[1mm]
    &\ +E_3(e^{-\beta(E_2+E_3)}+e^{-\beta E_3}-e^{-\beta(E_1+ E_3)}-e^{-2\beta E_3})\big]\leq 0.
\label{r1prime}
\end{align}
Since we have ordered the energy eigenvalues in {the} increasing order, it is straightforward to show that $\partial_\beta \lVert \vec{r}_1\rVert$ is always nonpositive for any set of energy eigenvalues in dimension $4$.

{Now similarly, we can} show that $ \lVert \vec{r}_2\rVert \leq \lVert \vec{b}_2\rVert$. By employing Eq.~(\ref{rjprime}), $\partial_\beta \lVert \vec{r}_2\rVert$ {is} obtained as
\begin{align}
\partial_\beta \lVert \vec{r}_2 \rVert
&=\frac{-2}{z(\beta)^3}\big[E_1( e^{-\beta E_2}+2e^{-\beta (E_1 +E_3)}-e^{-\beta (E_1+ E_2)}-e^{-2 \beta E_2}-e^{-\beta ( E_2+E_3)})+(E_2-E_1)(e^{-\beta E_2}+e^{-\beta (E_1 +E_3)}\nonumber\\[1mm]
&\ +e^{-\beta (E_1+ E_2)}+e^{-\beta (2E_1+ E_3)}-e^{-2 \beta E_2}-e^{-\beta (E_1+ E_2+E_3)}-e^{-\beta ( E_2+E_3)}-e^{-\beta ( E_1+2E_3)})\nonumber\\[1mm]
&\ +(E_3-E_2)(e^{-\beta (E_1 +E_3)}+e^{-\beta (2E_1+ E_3)}+e^{-\beta (E_1+ E_2+E_3)}-2e^{-\beta ( E_2+E_3)}-e^{-\beta ( E_1+2E_3)})\big].
\end{align}
The above expression is also nonpositive under {the} assumption $\delta_i\,{\geq}\, \delta_{i+1}$.\hfill \qed
\end{widetext}

\subsection{Proof of monotonicity and convexity of the thermal curve}\label{app:CEforConditionii}

In this section, we will show that the condition~(\ref{item ii}) mentioned in Sec.~\ref{sec:two-qutrit case_geometric} is true for every choice of the set $\{E_i\}_{i=0}^{d-1}$ and any initial inverse temperature.

Expressing a general point in the curve $\vec p(\beta\pr)=(x_0(\beta\pr),\dots,x_{d-2}(\beta\pr),-1)$ as a linear combination of the vertices $\vec{v}_i$ in condition~(\ref{item i}) we have
\begin{equation}
\begin{pmatrix}x_0(\beta\pr) \\
x_1(\beta\pr) \\
\vdots \\
x_{d-2}(\beta\pr) \\
-1
\end{pmatrix}=\begin{pmatrix} x_0(\beta) & 0 & 0 &\dots & 0 \\
x_1(\beta) & x_1(\beta) & 0 &\dots & 0 \\
\vdots & \ddots & x_2(\beta) &\ddots & 0 \\
x_{d-2}(\beta) & \dots & \ddots &\ddots & 0 \\
-1 & -1 & -1 &\dots & -1 \end{pmatrix}
\begin{pmatrix}
a_0 \\
a_1 \\
\vdots \\
a_{d-2} \\
a_{d-1}
\end{pmatrix},
\end{equation}
where the $(a_0,\dots,a_{d-1})$ are coefficients given by
\begin{equation}
    a_i = \frac{x_i(\beta\pr)}{x_i(\beta)} - \frac{x_{i-1}(\beta\pr)}{x_{i-1}(\beta)} = \frac{x_i(\beta\pr)}{x_{i-1}(\beta)} \left(\frac{x_{i-1}(\beta)}{x_i(\beta)} - \frac{x_{i-1}(\beta\pr)}{x_i(\beta\pr)} \right) ,
\end{equation}
which are positive for all $i$ and satisfy $\sum_i a_i=1$
(and thus below $1$ for all $i$) if the following condition holds
    \begin{align}
        \frac{\rm d}{{\rm d} \beta}\Bigl(\frac{x_{m}}{x_{m+1}} (\beta) \Bigr)  &=\frac 1 {x_{m+1}}\,\Bigl(\frac{\partial x_{m}}{\partial \beta} - \frac{x_{m}}{x_{m+1}}\frac{\partial x_{m+1}}{\partial \beta}\Bigr)\,\geq\,0\ \forall m ,
    \end{align}
which means that the function $x_i/x_{i+1} (\beta)$ is monotonically decreasing with $\beta$. This in turn means that the final point can be reached as a convex combination of the vertices.

With a convenient relabeling of the index we can write $x_{m+1}/x_{m}(\beta)$ in the form
\begin{align}
\tfrac{x_{m}}{x_{m-1}} &=\tfrac{(m+1)e^{-\beta E_{m+1}}-\sum_{i=0}^{m}e^{-\beta E_i}}{me^{-\beta E_{m}}-\sum_{i=0}^{m-1}e^{-\beta E_i}}\nonumber\\
 &=\tfrac{(m+1)e^{-\beta E_{m+1}}-me^{-\beta E_{m}}+me^{-\beta E_{m}}-\sum_{i=0}^{m-1}e^{-\beta E_i}-e^{-\beta E_{m}}}{me^{-\beta E_{m}}-\sum_{i=0}^{m-1}e^{-\beta E_i}}\nonumber\\
 &= 1+\tfrac{(m+1)(e^{-\beta E_{m+1}}-e^{-\beta E_{m}})}{me^{-\beta E_{m}}-\sum_{i=0}^{m-1}e^{-\beta E_i}}.
\end{align}
To prove our claim [condition~(\ref{item ii})], we need to show that for any $\beta\geq0$ the function $x_{m}/x_{m-1}$ is monotonically decreasing with $\beta$. That is, we have to show that
\begin{align}
 R(\beta)   &=\,\frac{\rm d}{\rm d \beta}\Bigl(\frac{x_{m}}{x_{m-1}} (\beta) \Bigr) \,\leq\, 0.
 \label{ineq R}
\end{align}
The derivative takes the form
\begin{widetext}
\begin{align}
   R(\beta) &=\tfrac{m+1}{x_{m-1}^2} \Bigl[-(E_{m+1}e^{-\beta E_{m+1}}-E_{m}e^{-\beta E_{m}})(me^{-\beta E_{m}}-\sum_{i=0}^{m-1}e^{-\beta E_i})+(e^{-\beta E_{m+1}}-e^{-\beta E_{m}})(m\,E_{m}e^{-\beta E_{m}}-\sum_{i=0}^{m-1}E_{i}e^{-\beta E_i})\Bigr]\nonumber\\
   &=\tfrac{m+1}{x_{m-1}^2}\Bigl(\sum_{i=0}^{m-1}[(E_{m+1}-E_i)e^{-\beta( E_{m+1}+E_i)}-(E_{m}-E_i)e^{-\beta( E_{m}+E_i)}]-m (E_{m+1}-E_m)e^{-\beta( E_{m+1}+E_m)}\Bigr)\nonumber\\
   &=\tfrac{m+1}{x_{m-1}^2}\Bigl(\sum_{i=0}^{m-1}[(E_{m+1}-E_i)e^{-\beta( E_{m+1}+E_i)}-(E_{m}-E_i)e^{-\beta( E_{m}+E_i)}]-m (E_{m+1}-E_i+E_i-E_m)e^{-\beta( E_{m+1}+E_m)}\Bigr)\nonumber\\
      &=\tfrac{m+1}{x_{m-1}^2}\sum_{i=0}^{m-1}\bigl[(E_{m+1}-E_i)(e^{-\beta( E_{m+1}+E_i)}-e^{-\beta( E_{m+1}+E_m)})-(E_{m}-E_i)(e^{-\beta( E_{m}+E_i)}-e^{-\beta( E_{m+1}+E_m)})\bigr]\nonumber\\
      &=\tfrac{m+1}{x_{m-1}^2}e^{-\beta( E_{m+1}+E_m)}\sum_{i=0}^{m-1}\bigl[(E_{m+1}-E_i)(e^{\beta( E_{m}-E_i)}-1)-(E_{m}-E_i)(e^{\beta( E_{m+1}-E_i)}-1)\bigr].
\end{align}
To prove the nonpositivity of $R(\beta)$ we need to show that
\begin{equation}
  (E_{m+1}-E_i)(e^{\beta( E_{m}-E_i)}-1)-(E_{m}-E_i)(e^{\beta( E_{m+1}-E_i)}-1)\leq 0 \ \ \ \text{for}\ \ E_i\leq E_m \leq E_{m+1}.
  \label{eq: negativity R}
\end{equation}
To do so, we define $y_m:=\beta(E_{m}-E_i)$, and write inequality~(\ref{eq: negativity R}) in the form
\begin{align}
   h(y_{m+1})   &=\, \frac{y_{m+1}}{e^{y_{m+1}}-1}\,\leq\, \frac{y_{m}}{e^{y_{m}}-1}.
\end{align}
This inequality is satisfied for all $y_{m+1}\geq y_{m}$ if the function $h(y)$ is monotonically decreasing. To see that this is the case, we calculate the derivative
\begin{align}
    \frac{\partial h(y)}{\partial y}    &=\,\frac{(1-y)e^{y}-1}{(e^{y}-1)^2},
\end{align}
which can be seen to be smaller or equal to zero since $(1-y)e^{y}$ is a monotonically decreasing function for $y\geq0$, i.e., $\tfrac{\partial}{\partial y}(1-y)e^{y}=-y e^{y}\leq0$ and hence has its maximum value of $1$ at $y=0$, confirming that $(1-y)e^{y}-1\leq0$ and that $h(y)$ is monotonically decreasing.
\end{widetext}


\subsubsection{Explicit partial derivatives in dimension 3}

To show explicitly that the curve $x(\beta)$ is monotonically increasing in $d=3$, we first calculate
\begin{subequations}\label{eq:partials of x and y wrt beta}
  \begin{align}
    \tfrac{\partial x}{\partial\beta}  &=\,-\,\mathcal{Z}^{-2}f_{x}(\beta), \label{eq:partial x wrt beta1}\\
    \tfrac{\partial y}{\partial\beta}  &=\,-\,3\mathcal{Z}^{-2}f_{y}(\beta), \label{eq:partial y wrt beta1}
  \end{align}
\end{subequations}
where the functions $f_{x}(\beta)$ and $f_{y}(\beta)$ are given by
\begin{subequations}\label{eq:fx and fy1}
  \begin{align}
    f_{x}(\beta)  &=\,E_{1}e^{-\beta E_{1}}(2+e^{-\beta E_{2}})+E_{2}e^{-\beta E_{2}}(1-e^{-\beta E_{1}})\,\geq\,0, \label{eq:fx1}\\
    f_{y}(\beta)  &=\,(E_{2}-E_{1})e^{-\beta (E_{1}+E_{2})}+E_{2}e^{-\beta E_{2}}\,\geq\,0. \label{eq:fy1}
  \end{align}
\end{subequations}
With this, we can then evaluate
\begin{align}
    \tfrac{\partial y}{\partial x}    &=\,\left(\tfrac{\partial x}{\partial\beta}\right)^{-1}\tfrac{\partial y}{\partial\beta}\,=\,
    \tfrac{3f_{y}(\beta)}{f_{x}(\beta)}\,\geq\,0,
    \label{eq:partial y wrt x1}
\end{align}
which allows us to confirm that the curve segment lies above (with respect to the coordinates $x$ and $y$ in the plane of the polytope) the line connecting $\bigl(x(\beta),y(\beta)\bigr)$ and $\bigl(\tilde{x},\tilde{y}\bigr)$.

To show that this curve is a convex function, we need to show that $\partial^{2}y/\partial x^{2}\geq0$, and calculate
\begin{align}
    \tfrac{\partial^{2}y}{\partial x^{2}}    &=\,\left(\tfrac{\partial x}{\partial\beta}\right)^{-1}\tfrac{\partial}{\partial\beta}\,\left[\left(\tfrac{\partial x}{\partial\beta}\right)^{-1}\tfrac{\partial y}{\partial\beta}\right]\nonumber\\[1.5mm]
    &=\,\left(\tfrac{\partial x}{\partial\beta}\right)^{-2}\left\{\tfrac{\partial^{2}y}{\partial \beta^{2}}-\left(\tfrac{\partial y}{\partial \beta}\right)\left(\tfrac{\partial x}{\partial\beta}\right)^{-1}\left(\tfrac{\partial^{2}x}{\partial \beta^{2}}\right)\right\}.
    \label{eq:2nd partial y wrt x}
\end{align}
The second partial derivatives with respect to $\beta$ are found to be
\begin{subequations}\label{eq:second partials of x and y wrt beta}
  \begin{align}
    \tfrac{\partial x^{2}}{\partial\beta^{2}}  &=\,\mathcal{Z}^{-3}\bigl[2\tfrac{\partial\mathcal{Z}}{\partial\beta}f_{x}-\mathcal{Z}\tfrac{\partial f_{x}}{\partial\beta}\bigr], \label{eq:2nd partial x wrt beta}\\[1.5mm]
    \tfrac{\partial y^{2}}{\partial\beta^{2}}  &=\,3\mathcal{Z}^{-3}\bigl[2\tfrac{\partial\mathcal{Z}}{\partial\beta}f_{y}-\mathcal{Z}\tfrac{\partial f_{y}}{\partial\beta}\bigr]. \label{eq:2nd partial y wrt beta}
  \end{align}
\end{subequations}
Combining this with Eq.~(\ref{eq:2nd partial y wrt x}) we find
\begin{align}
    \tfrac{\partial^{2}y}{\partial x^{2}}    &=\,3\,\mathcal{Z}^{2}f_{x}^{-3}\bigl(f_{y}\frac{\partial f_{x}}{\partial\beta}-f_{x}\tfrac{\partial f_{y}}{\partial\beta}\bigr).
    \label{eq:2nd partial y wrt x rewritten}
\end{align}
The derivatives of $f_{x}(\beta)$ and $f_{y}(\beta)$ are
\begin{subequations}\label{eq:derivatives of fx and fy}
  \begin{align}
    \tfrac{\partial f_{x}}{\partial\beta}  &=\,\bigl(E_{2}^{2}-E_{1}^{2}\bigr)e^{-\beta (E_{1}+E_{2})}-2E_{1}^{2}e^{-\beta E_{1}}-E_{2}^{2}e^{-\beta E_{2}},\label{eq:derivative of fx}\\
    \tfrac{\partial f_{y}}{\partial\beta}  &=\,-\,\bigl(E_{2}^{2}-E_{1}^{2}\bigr)e^{-\beta (E_{1}+E_{2})}\,-\,E_{2}^{2}e^{-\beta E_{2}}, \label{eq:derivative of fy}
  \end{align}
\end{subequations}
which we insert into Eq.~(\ref{eq:2nd partial y wrt x rewritten}) to arrive at
\begin{align}
    \tfrac{\partial^{2}y}{\partial x^{2}}    &=\,6\,\mathcal{Z}^{3}f_{x}^{-3}
    E_{1}E_{2}\bigl(E_{2}-E_{1}\bigr)e^{-\beta(E_{1}+E_{2})}\,\geq\,0,
    \label{eq:2nd partial y wrt x rewritten final}
\end{align}
where we have made use of the fact that $f_{x}\geq0$ (since $1\geq e^{-\beta E_{1}}$), along with $E_{1},E_{2},\mathcal{Z}\geq 0$ and $E_{2}\geq E_{1}$.


\subsubsection{Explicit partial derivatives in dimension 4}\label{app:derivatives}

In this appendix, we present the explicit expressions for the first and second partial derivatives. The derivatives of the coordinates $x(\beta)$, $y(\beta)$ and $z(\beta)$ along the curve of thermal states can be written as
\begin{subequations}\label{eq:partials of x and y wrt beta1}
  \begin{align}
    \frac{\partial x}{\partial\beta}  &=\,-\,\mathcal{Z}^{-2}f_{x}(\beta), \label{eq:partial x wrt beta}\\
    \frac{\partial y}{\partial\beta}  &=\,-\,\mathcal{Z}^{-2}f_{y}(\beta), \label{eq:partial y wrt beta}\\
    \frac{\partial z}{\partial\beta}  &=\,-\,\mathcal{Z}^{-2}f_{z}(\beta), \label{eq:partial z wrt beta}
  \end{align}
\end{subequations}
where the functions $f_{x}(\beta)$, $f_{y}(\beta)$, and $f_{z}(\beta)$ are given by
\begin{subequations}\label{eq:fx and fy}
  \begin{align}
    f_{x}(\beta)  &=\,E_{1}e^{-\beta E_{1}}(2+e^{-\beta E_{2}}+e^{-\beta E_{3}})\nonumber\\[1mm]
    & \ \ +(E_{2}e^{-\beta E_{2}}+E_{3}e^{-\beta E_{3}})(1-e^{-\beta E_{1}})\,\geq\,0, \label{eq:fx}\\[1.5mm]
    f_{y}(\beta)  &=\,3(E_{2}-E_{1})e^{-\beta (E_{1}+E_{2})}+3E_{2}e^{-\beta E_{2}}\nonumber\\[1mm]
    & \ \ +E_{3}e^{-\beta E_{3}}(1+e^{-\beta E_{1}}-2e^{-\beta E_{2}})\nonumber\\[1mm]
    & \ \ +e^{-\beta E_{3}}(2E_{2}e^{-\beta E_{2}}-E_{1}e^{-\beta E_{1}})\geq0,\label{eq:fy}\\[1.5mm]
    f_{z}(\beta)  &=\,4e^{-\beta E_{3}}\bigl[E_{3}+e^{-\beta E_{1}}(E_{3}-E_{1})\nonumber\\[1mm]
    & \ \ \
    +e^{-\beta E_{2}}(E_{3}-E_{2})\bigr]\geq0,
  \end{align}
\end{subequations}
where $f_{y}\geq0$ follows, since $E_{3}\geq E_{2}$ and $(1+e^{-\beta E_{1}}-2e^{-\beta E_{2}})\geq0$ which implies that the terms in $f_{y}$ proportional to $e^{-\beta E_{3}}$ can be bounded by $E_{3}e^{-\beta E_{3}}\bigl(E_{2}+e^{-\beta E_{1}}(E_{2}-E_{1})\bigr)\geq0$. Since $f_{x}$, $f_{y}$, and $f_{z}$ are nonnegative, all partial derivatives with respect to $\beta$ are nonpositive, but
\begin{align}
    \frac{\partial y}{\partial x}    &=\,\left(\frac{\partial x}{\partial\beta}\right)^{-1}\frac{\partial y}{\partial\beta}\,=\,
    \frac{f_{y}(\beta)}{f_{x}(\beta)}\,\geq\,0,
    \label{eq:partial y wrt x}\\[1mm]
    \frac{\partial z}{\partial x}    &=\,\left(\frac{\partial x}{\partial\beta}\right)^{-1}\frac{\partial z}{\partial\beta}\,=\,
    \frac{f_{z}(\beta)}{f_{x}(\beta)}\,\geq\,0,
    \label{eq:partial z wrt x}\\[1mm]
    \frac{\partial z}{\partial y}    &=\,\left(\frac{\partial y}{\partial\beta}\right)^{-1}\frac{\partial z}{\partial\beta}\,=\,
    \frac{f_{z}(\beta)}{f_{y}(\beta)}\,\geq\,0.
    \label{eq:partial z wrt y}
\end{align}
For the second partial derivatives, we have
\begin{align}
    \frac{\partial^{2}y}{\partial x^{2}}    &=\,\left(\frac{\partial x}{\partial\beta}\right)^{-1}\frac{\partial}{\partial\beta}\,\left[\left(\frac{\partial x}{\partial\beta}\right)^{-1}\frac{\partial y}{\partial\beta}\right]\nonumber\\[1mm]
    &=\,\left(\frac{\partial x}{\partial\beta}\right)^{-2}\left\{\frac{\partial^{2}y}{\partial \beta^{2}}-\left(\frac{\partial y}{\partial \beta}\right)\left(\frac{\partial x}{\partial\beta}\right)^{-1}\left(\frac{\partial^{2}x}{\partial \beta^{2}}\right)\right\}\nonumber\\[1mm]
    & =\,\mathcal{Z}^{2}f_{x}^{\nr -3}\bigl(f_{y}\frac{\partial f_{x}}{\partial\beta}-f_{x}\frac{\partial f_{y}}{\partial\beta}\bigr),
    \label{eq:2nd partial y wrt x alternative}\\[1.5mm]
    \frac{\partial^{2}z}{\partial x^{2}}    &=\,
    \mathcal{Z}^{2}f_{x}^{\nr -3}\bigl(f_{z}\frac{\partial f_{x}}{\partial\beta}-f_{x}\frac{\partial f_{z}}{\partial\beta}\bigr),\\[1.5mm]
    \frac{\partial^{2}z}{\partial y^{2}}    &=\,
    \mathcal{Z}^{2}f_{y}^{\nr -3}\bigl(f_{z}\frac{\partial f_{y}}{\partial\beta}-f_{y}\frac{\partial f_{z}}{\partial\beta}\bigr).
\end{align}
We thus have to calculate the derivatives of the functions $f_{x}$, $f_{y}$, and $f_{z}$ with respect to $\beta$, which evaluate to
\begin{align}
    \frac{\partial f_{x}}{\partial \beta}    &=\,-E_{1}^{2}e^{-\beta E_{1}}\bigl(2+e^{-\beta E_{2}}+e^{-\beta E_{3}}\bigr)
    \nonumber\\
    & \ \ -\bigl(E_{2}^{2}e^{-\beta E_{2}}+E_{3}^{2}e^{-\beta E_{3}}\bigr)\bigl(1-e^{-\beta E_{1}}\bigr)\,,
    \label{eq:partial fx wrt beta}\\[1.5mm]
    \frac{\partial f_{y}}{\partial \beta}    &=\,E_{1}^{2}e^{-\beta E_{1}} \bigl(3e^{-\beta E_{2}}+e^{-\beta E_{3}}\bigr)\nonumber\\
    & \ \ -E_{2}^{2}e^{-\beta E_{2}} \bigl(3+3e^{-\beta E_{1}}+2e^{-\beta E_{3}}\bigr)\nonumber\\
    & \ \ -E_{3}^{2}e^{-\beta E_{3}} \bigl(1+e^{-\beta E_{1}}-2e^{-\beta E_{2}}\bigr)\,,
    \label{eq:partial fy wrt beta}\\[1.5mm]
    \frac{\partial f_{z}}{\partial \beta}    &=\,4e^{-\beta E_{3}} \bigl[E_{1}^{2}e^{-\beta E_{1}}+E_{2}^{2}e^{-\beta E_{2}}
    \nonumber\\
    & \ \ \ -\,E_{3}^{2}\bigl(1+e^{-\beta E_{1}}+e^{-\beta E_{2}}\bigr)\bigr]\,.
    \label{eq:partial fz wrt beta}
\end{align}
Then, since $\mathcal{Z}^{2}\geq0$ and $f_{x},f_{y},f_{z}\geq0$, we just have to evaluate
\vspace*{-6mm}
\begin{widetext}
\begin{subequations}
\begin{align}
    f_{y}\frac{\partial f_{x}}{\partial\beta}-f_{x}\frac{\partial f_{y}}{\partial\beta}    &=\,
    2\bigl[E_{1}E_{2}e^{-\beta(E_{1}+E_{2})}(E_{2}-E_{1})  +  E_{1}E_{3}e^{-\beta(E_{1}+E_{3})}(E_{3}-E_{1})\bigr]\bigl(3+e^{-\beta E_{3}}\bigr)\mathcal{Z}\nonumber\\
    & \ \ \ +\,
    2E_{2}E_{3}e^{-\beta(E_{2}+E_{3})}(E_{3}-E_{2})\bigl(1+e^{-\beta E_{1}}\bigr)\bigl(2+2e^{-\beta E_{1}}-e^{-\beta E_{2}}+e^{-\beta E_{3}}\bigr)
     \,\geq\,0,\\[2.5mm]
    f_{z}\frac{\partial f_{x}}{\partial\beta}-f_{x}\frac{\partial f_{z}}{\partial\beta} &=\,
    4\mathcal{Z}e^{-\beta E_{3}}\,\bigl[
    E_{1}e^{-\beta E_{1}}(E_{2}-E_{1}) \bigl(e^{-\beta E_{2}}(E_{3}-E-{2})+2E_{3}\bigr)\nonumber\\
    & \ \ \ +(E_{3}-E_{2})E_{3}\bigl(E_{1}e^{-\beta(E_{1}+E_{2})}+2E_{1}e^{-\beta E_{1}}+E_{2}e^{-\beta E_{2}}(1-e^{-\beta E_{1}})\bigr)
    \bigr]
     \,\geq\,0,\\[2.5mm]
    f_{z}\frac{\partial f_{y}}{\partial\beta}-f_{y}\frac{\partial f_{z}}{\partial\beta} &=\,
    12\mathcal{Z}e^{-\beta (E_{2}+E_{3})}\bigl(E_{3}-E_{2}\bigr)\,\bigl[
    E_{1}^{2}e^{-\beta E_{1}}+E_{3}\bigl(E_{2}-E_{1}\bigr)e^{-\beta E_{1}}+E_{2}\bigl(E_{3}-e^{-\beta E_{1}}E_{1}\bigr)
    \bigr]
     \,\geq\,0.
\end{align}
\end{subequations}
\end{widetext}
\clearpage


\subsection{Geometry of equal thermal marginals for two qutrits}\label{sec:two qutrit case}

For local dimension $d=3$, the transformations we consider here to ensure symmetric marginals imply that the vectors collecting the diagonal entries of the final state marginals are of the form
\begin{align}
 \mathbf{\tilde{p}}=\mathbf{\tilde{p}}\SA=\mathbf{\tilde{p}}\SB
    &= M_q \mathbf{q} + (\mathds{1}+ \Pi)M_{r} \mathbf{r}\,,
\label{eq:EqualEvoMarginal d3}
\end{align}
where $\mathbf{q}=(q_{i})$ with $q_{i}=p_{i}^{2}$ and $(\mathds{1}+ \Pi)\mathbf{r}=\bigl(p_{i}(1-p_{i})\bigr)$. This means that the set of reachable states corresponds to the convex hull of the points $\tilde{\mathbf{p}}\suptiny{0}{0}{(i,j)}:=\Pi\suptiny{0}{0}{(i)}\mathbf{q}+(\mathds{1}+ \Pi)\Pi\suptiny{0}{0}{(j)}\mathbf{r}$ for $i,j\in\{1,\ldots,3!\}$. However, in our case the problem can be further reduced to that of showing that the points $\mathbf{p}(\beta\pr<\beta)$ lie within a restricted part of the polytope, i.e., a triangle between the points $\mathbf{p}(\beta)$, $\mathbf{p}(\beta\rightarrow0)$, and $\tilde{\mathbf{p}}\suptiny{0}{0}{(6,1)}=\Pi\suptiny{0}{0}{(6)}\mathbf{q}+(\mathds{1}+ \Pi)\Pi\suptiny{0}{0}{(1)}\mathbf{r}$, where
\begin{align}
    \Pi\suptiny{0}{0}{(1)}=\mathds{1},\ \
    \Pi\suptiny{0}{0}{(6)} \,=\,\begin{pmatrix} 0 & 1 & 0 \\ 1 & 0 & 0 \\ 0 & 0 & 1 \end{pmatrix},\ \
    \Pi = \begin{pmatrix} 0 & 0 & 1 \\ 1 & 0 & 0 \\ 0 & 1 & 0 \end{pmatrix}.
    .
\end{align}
Note that the point $\vec{p}(\beta\rightarrow0)$ trivially lies in the polytope, since it can be obtained from an equally weighted convex combination of all cyclic permutations obtained for $\Pi\suptiny{0}{0}{(i)}=\Pi\suptiny{0}{0}{(j)}=\mathds{1},\Pi,\Pi^{2}$. Then we switch from the coordinates $\{p_{0},p_{1},p_{2}\}$ to the coordinates $\{x,y,z\}$, given by
\begin{subequations}
\label{eq:new coords3d}
    \begin{align}
        x   &=\,-\,p_{0}\,+\,p_{1},\\
        y   &=\,-\,p_{0}\,-\,p_{1}\,+\,2p_{2},\\
        z   &=\,-p_{0}\,-\,p_{1}\,-\,p_{2}\,=\,-1.
    \end{align}
\end{subequations}
The plane containing the polytope is defined by $z=-1$ such that $p_{2}=1-p_{0}-p_{1}$, and we therefore have $y=2-3p_{0}-3p_{1}$. The corners $(p_{0},p_{1},p_{2})=(1,0,0)$, $(0,1,0)$, and $(0,0,1)$ of the outer triangle containing all (diagonal) qutrit states are mapped to $(x,y)=(-1,-1)$, $(1,-1)$, and $(0,2)$, respectively, see Fig.~\ref{fig:polytope}. Moreover, the curve of thermal states is parameterised as $\bigl(x(\beta),y(\beta)\bigr)$ with $x(\beta)=-p_{0}(\beta)+p_{1}(\beta)$ and $y(\beta)=2-3p_{0}(\beta)-3p_{1}(\beta)$, and connects the points $(x,y)=(-1,-1)$ corresponding to $\mathbf{p}(\beta\rightarrow\infty)$ and $(x,y)=(0,0)$ corresponding to $\mathbf{p}(\beta\rightarrow0)$.

Next, we note that for any given thermal state with coordinates $\bigl(x(\beta),y(\beta)\bigr)$, the coordinates $\bigl(x=\tilde{x},y=\tilde{y}\bigr)$ of the corresponding point $\tilde{\vec{p}}\suptiny{0}{0}{(6,1)}$ are given by
\begin{subequations}
\label{eq:new coords tilde point}
\begin{align}
    \tilde{x}&=\,(p_{0}-p_{1})\bigl[2(p_{0}+p_{1})-1\bigr]\,\geq\,0,\\
    \tilde{y}&=\,2\,-\,3p_{0}\,-\,3p_{1}\,=\,y(\beta),
\end{align}
\end{subequations}
and consequently this point always lies on the same height $\tilde{y}=y(\beta)$ as the point corresponding to $\vec{p}(\beta)$, but to the right, i.e., $\tilde{x}\geq0\geq x(\beta)$ of any of the points corresponding to thermal states (in particular $\mathbf{p}(\beta\rightarrow0)$). This shows that the point with coordinates $(x_{0},x_{1})=(0,y(\beta))$ is included in the polytope, since it can be obtained as a convex combination of $\vec{p}(\beta)$ and $\tilde{\vec{p}}\suptiny{0}{0}{(6,1)}$.

Moreover, evaluating the partial derivatives, we have found $(\partial y(\beta)/\partial x(\beta)\geq0)$ and $(\partial^{2} y/\partial x^{2}\geq0)$, see Appendix~\ref{app:derivatives}, i.e., the curve described by $\bigl(x(\beta),y(\beta)\bigr)$ is a monotonically increasing $(\partial y(\beta)/\partial x(\beta)\geq0)$ and convex $(\partial^{2} y/\partial x^{2}\geq0)$ function, implying that the curve segment corresponding to all points $\bigl(x(\beta\pr),y(\beta\pr)\bigr)$ with $\beta\pr\leq\beta$ is contained within the triangle $\mathbf{p}(\beta)$, $\tilde{\mathbf{p}}\suptiny{0}{0}{(6,1)}$, and $\mathbf{p}(\beta\rightarrow0)$.


\subsection{Geometry of equal thermal marginals for two ququarts}\label{sec:two ququart case}

\begin{figure*}[ht!]
(a)\includegraphics[width=0.46\textwidth,trim={0cm 5mm 0cm 0mm}]{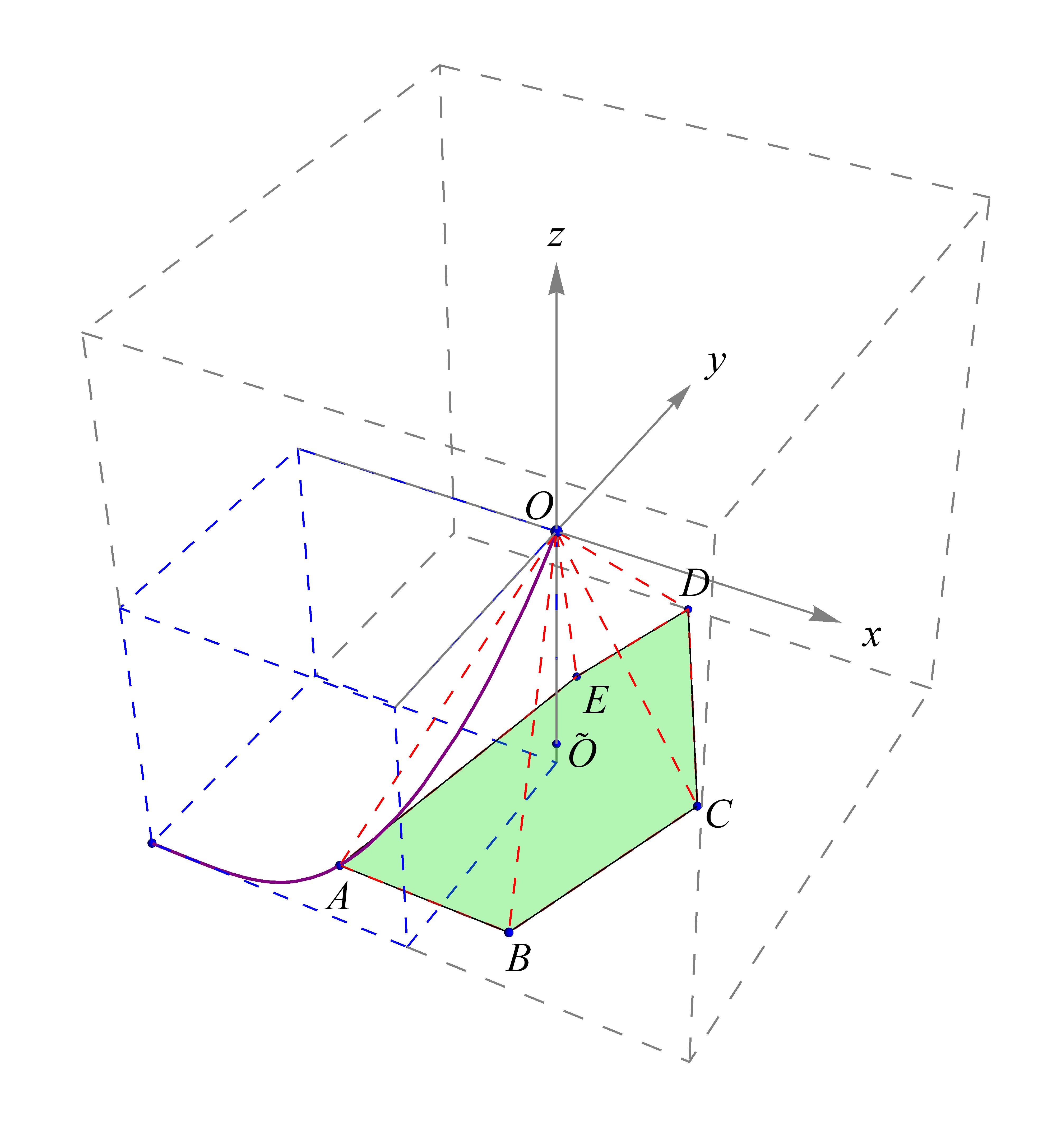}
(b) \includegraphics[width=0.45\textwidth,trim={0cm 0mm 0cm 0mm}]{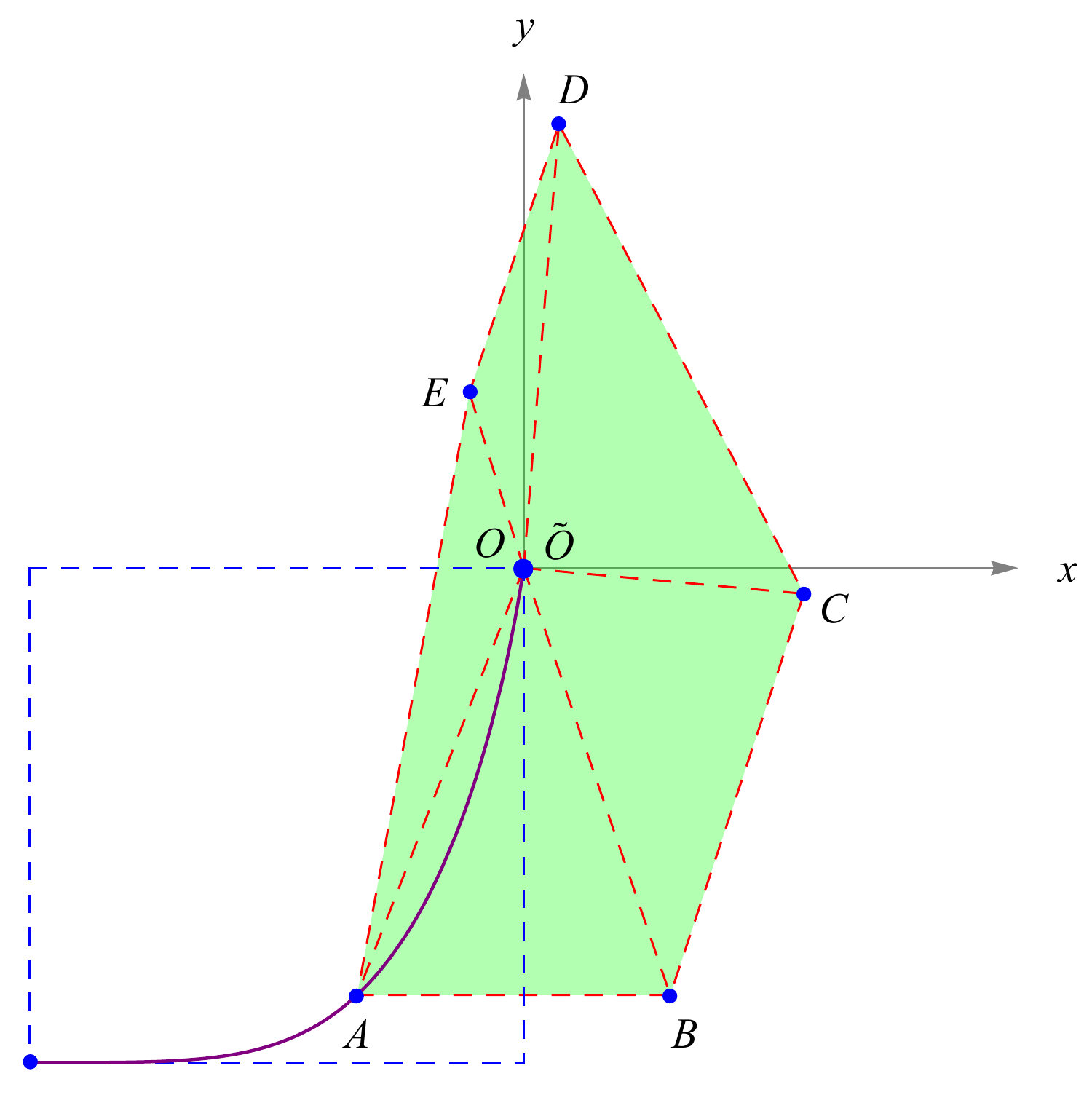}\\
(c)\includegraphics[width=0.44\textwidth,trim={0cm 0mm 0cm 0mm}]{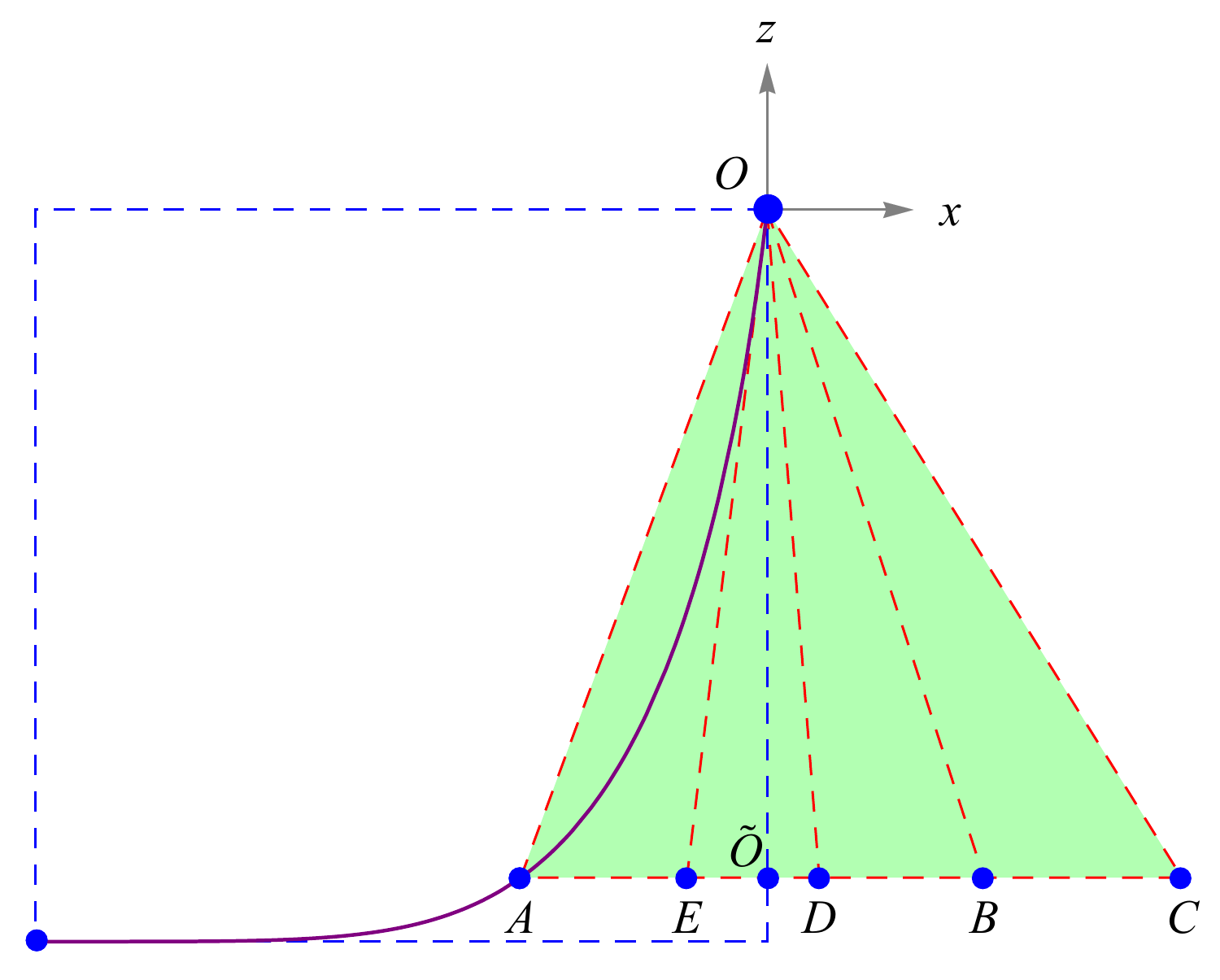}\ \ \
(d)\includegraphics[width=0.44\textwidth,trim={0cm 0mm 0cm 0mm}]{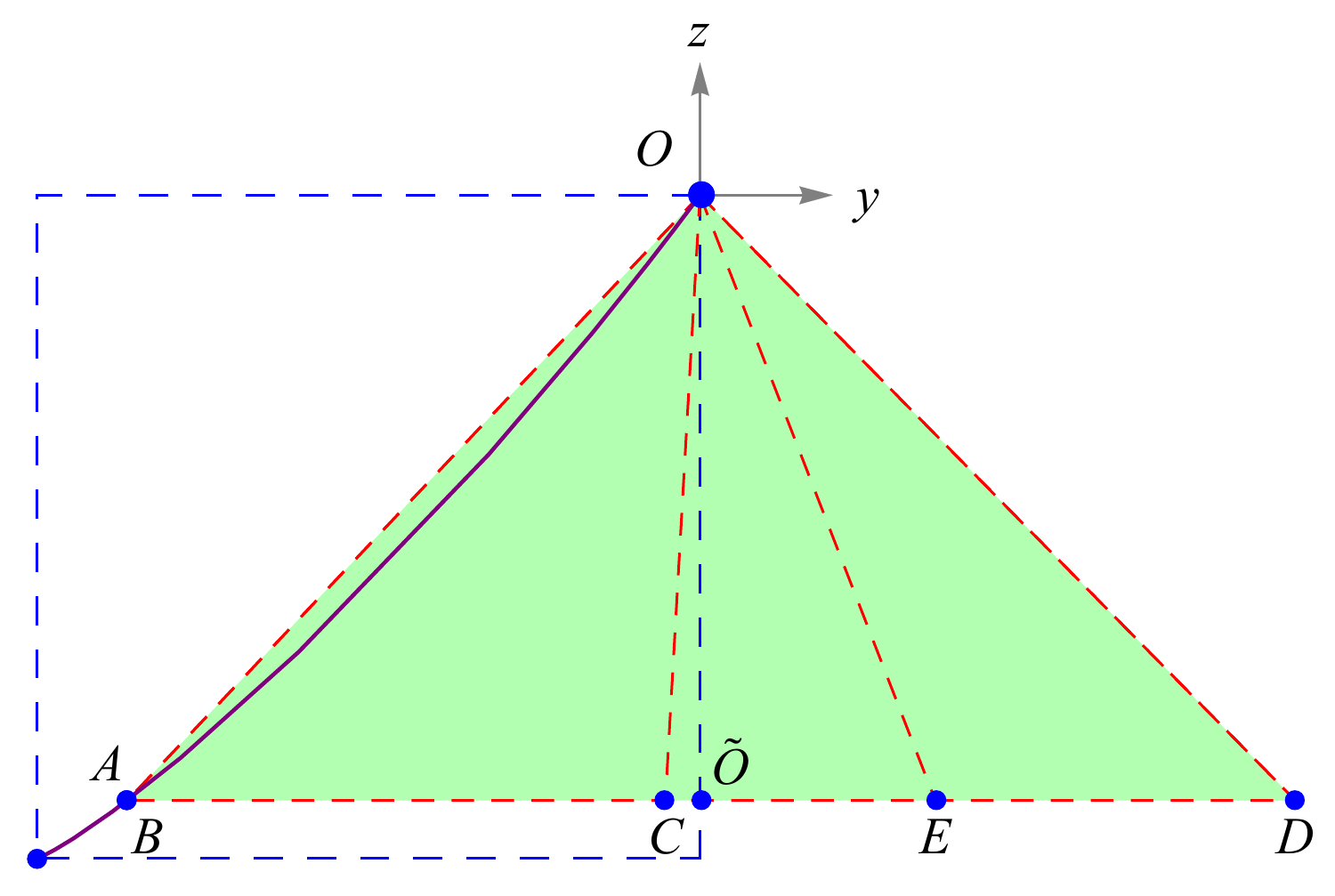}
\caption{\textbf{Illustration of geometric method for $d=4$}. (a) In local dimension $d=4$, the space of marginal probability distributions can be mapped to the hyperplane of constant coordinate $w=-1$, leaving a three-dimensional space with coordinates $x$, $y$, and $z$. For any fixed Hamiltonian, the thermal states form a one-parameter family of states along a line connecting $(x,y,z)=(-1,-1,-1)^{T}$ with $(0,0,0)^{T}$, where the curve is strictly confined to the region $x,y,z\leq0$ (dashed blue). The dashed red lines delineate the minimal polytope spanned by the points $A$, $B$, $C$, $D$, $E$ (which all share the same $z$-coordinate $z=z(\beta)$) and the origin $O$. (b), (c), (d) show the projections onto the $x$-$y$, $x$-$z$ and $y$-$z$ plane, respectively.}
\label{fig:polytopedimfour}
\end{figure*}

For local dimension $d=4$, we have marginal vectors of the form
\begin{align}
 \mathbf{\tilde{p}}=\mathbf{\tilde{p}}\SA=\mathbf{\tilde{p}}\SB
    &= M_q \mathbf{q} + (\mathds{1}+ \Pi)M_{r_{1}} \mathbf{r}_{1}\,+\,\tfrac{1}{2}(\mathds{1}+ \Pi^{2})M_{r_{2}} \mathbf{r}_{2}.
\label{eq:EqualEvoMarginal d4}
\end{align}
The set of reachable states corresponds to the convex hull of the points $\tilde{\mathbf{p}}\suptiny{0}{0}{(i,j,k)}:=\Pi\suptiny{0}{0}{(i)}\mathbf{q}+(\mathds{1}+ \Pi)\Pi\suptiny{0}{0}{(j)}\mathbf{r}_{1}+\tfrac{1}{2}(\mathds{1}+ \Pi^{2})\Pi\suptiny{0}{0}{(k)}\mathbf{r}_{2}$ for $i,j,k\in\{1,\ldots,4!\}$.
Switching from the coordinates $\{p_{0},p_{1},p_{2},p_{3}\}$ to the new coordinates $\{x,y,z,w\}$, given by
\begin{subequations}
\label{eq:new coords d4}
    \begin{align}
        x   &=\,-\,p_{0}\,+\,p_{1},\\
        y   &=\,-\,p_{0}\,-\,p_{1}\,+\,2p_{2},\\
        z   &=\,-\,p_{0}\,-\,p_{1}\,-\,p_{2}\,+\,3p_{3},\\
        w   &=\,-\,p_{0}\,-\,p_{1}\,-\,p_{2}\,-\,p_{3}\,=\,-1\,.
    \end{align}
\end{subequations}

We here show that the points $(x,y,z)=(0,y(\beta),z(\beta))$ and $(0,0,z(\beta))$ are included. To achieve this, we identify a set of 5 points that all lie in the plane of constant $z=z(\beta)$, and show that the two-dimensional polytope spanned by these points encloses both $(0,y(\beta),z(\beta))$ and $(0,0,z(\beta))$, see Fig.~\ref{fig:polytopedimfour}. These five points are $A=\tilde{\mathbf{p}}\suptiny{0}{0}{(1,1,1)}$, $B=\tilde{\mathbf{p}}\suptiny{0}{0}{(7,1,1)}$, $C=\tilde{\mathbf{p}}\suptiny{0}{0}{(13,7,1)}$, $D=\tilde{\mathbf{p}}\suptiny{0}{0}{(9,7,1)}$, and $E=\tilde{\mathbf{p}}\suptiny{0}{0}{(9,1,1)}$, where
\begin{align}
    \Pi &=\,\begin{pmatrix} 0 & 0 & 0 & 1 \\ 1 & 0 & 0 & 0 \\ 0 & 1 & 0 & 0 \\ 0 & 0 & 1 & 0 \end{pmatrix},\
    \Pi\suptiny{0}{0}{(7)} \,=\,\begin{pmatrix} 0 & 1 & 0 & 0 \\ 1 & 0 & 0 & 0 \\ 0 & 0 & 1 & 0 \\ 0 & 0 & 0 & 1 \end{pmatrix},\nonumber\\[1mm]
    \Pi\suptiny{0}{0}{(9)} &=\,\begin{pmatrix} 0 & 1 & 0 & 0 \\ 0 & 0 & 1 & 0 \\ 1 & 0 & 0 & 0 \\ 0 & 0 & 0 & 1 \end{pmatrix},\
    \Pi\suptiny{0}{0}{(13)} \,=\,\begin{pmatrix} 0 & 0 & 1 & 0 \\ 1 & 0 & 0 & 0 \\ 0 & 1 & 0 & 0 \\ 0 & 0 & 0 & 1 \end{pmatrix}.
\end{align}
The point $B$ can be seen to have the same $y$-coordinate as the starting point $A$, $y(B)=y(A)=y(\beta)$, while
\begin{align}
    x(B)    &=\,\bigl(p_{0}(\beta) - p_{1}(\beta)\bigr)\bigl(2p_{0}(\beta) + 2p_{1}(\beta)-1\bigr)\,\geq\,0,
\end{align}
since $p_{0}(\beta)\geq p_{1}(\beta)$ and $p_{0}(\beta)+p_{1}(\beta)\geq\tfrac{1}{2}$ for any Hamiltonian and any temperature. This means the point $(0,y(\beta),z(\beta))$ is contained in polytope since it can be obtained as a convex combination of $A$ and $B$.

Finally, to show that the point $\tilde{O}=(0,0,z(\beta))$ is contained within the polytope, we show that the points $C$, $D$ and $E$ satisfy, $x(C)\geq0$, $y(D)\geq0$ and $x(E)\leq 0$, while $(A\tilde{O}\times AE)_{z}\geq0$, $(E\tilde{O}\times ED)_{z}\geq0$ and $(D\tilde{O}\times DC)_{z}\geq0$, where for any two points $X$ and $Y$ we use the notation $XY=Y-X$. In other words, the closed path connecting the five points $A$, $B$, $C$, $D$ and $E$ encircles the point $\tilde{O}$, meaning that $\tilde{O}$ can be written as a convex combination of these points.

In a slight abuse of notation, we now use the shorthand $p_{i}\equiv p_{i}(\beta)$ for the thermal state diagonal components to express the relevant coordinates for the points $C$, $D$ and $E$ as
\begin{subequations}
\begin{align}
    x(C)    &=\,\bigl(p_{0} - p_{1}\bigr) \bigl(2 p_{0} + 2p_{1} + p_{2} - 1\bigr)\nonumber\\
    & \ \ + \bigl(p_{1} - p_{2}\bigr) \bigl(p_{0} + p_{1} + p_{2}\bigr)\,\geq\,0,\\
    y(D) &=\,3\bigl(p_{0} - p_{1}\bigr) \bigl(p_{0} + p_{1} + p_{2} - \tfrac{1}{3}\bigr)\nonumber\\
    & \ \ + 3\bigl(p_{1} - p_{2}\bigr) \bigl(p_{0} + p_{1} + p_{2}-\tfrac{2}{3}\bigr)\,\geq\,0,\\
    x(E) &=\,-\bigl[\bigl(p_{0} - p_{1}\bigr)\bigl(1 - p_{0} - p_{1}\bigr)\nonumber\\
    & \ \ \ \ + \bigl(p_{1} - p_{2}\bigr) \bigl(p_{1} + p_{2}\bigr)\bigr]\,\leq\,0,
\end{align}
\end{subequations}
while the $z$-components of the relevant cross products are given explicitly below. In Appendix~\ref{app:derivatives} we have further shown that the partial derivatives $\partial y/\partial x$, $\partial z/\partial x$, and $\partial z/\partial y$ as well as second derivatives $\partial^{2} y/\partial x^{2}$, $\partial^{2} z/\partial x^{2}$, and $\partial^{2} z/\partial y^{2}$ are nonnegative along the curve of thermal states, meaning that conditions~(\ref{item i}) and~(\ref{item ii}) are satisfied for $d=4$.


\subsubsection{Relevant cross products}\label{app:cross products}

Finally, the $z$-components of the relevant cross products are given
\clearpage
\begin{widetext}
\begin{subequations}
\begin{align}
    (A\tilde{O}\times AE)_{z}    &=\,\bigl(p_{0} - p_{1}\bigr)^{2} \bigl(2 p_{0} - 2p_{1} + 3p_{2}\bigr)
     \,+\, \bigl(p_{0} - p_{1}\bigr)\bigl(p_{1} - p_{2}\bigr) \bigl(p_{0} - p_{1} + 4p_{2}\bigr)
     \,+\, 2\bigl(p_{1} - p_{2}\bigr)^{2} \bigl(p_{1} + p_{2}\bigr)
     \,\geq\,0,\\
    (E\tilde{O}\times ED)_{z} &=\,2p_{1}\bigl(p_{0}-p_{2}\bigr)
    \Bigl[  \bigl(p_{0}-p_{1}\bigr) \bigl(1-p_{1}+p_{2}\bigr)
        + \bigl(p_{1}-p_{2}\bigr)\bigl(p_{1}+2p_{2}-p_{3}\bigr)\Bigr]\,\geq\,0,\\
    (D\tilde{O}\times DC)_{z} &=\,
    \bigl(p_{0} - p_{1}\bigr)^{2} \bigl(p_{0} + p_{2}\bigr)\bigl((3 p_{0} + 3 p_{1} - 1) + 3 (p_{1} - p_{3})\bigr)
     \nonumber\\
     & \ \ +\, \bigl(p_{0} - p_{1}\bigr)\bigl(p_{1} - p_{2}\bigr)
     \bigl(3 (p_{0} - p_{1}) (p_{0} + p_{1}) + 6 p_{0} (p_{2} - p_{3}) + 2 p_{2} (2 p_{0} + 5 p_{1} + 2 p_{2} - p_{3})\bigr)
     \nonumber\\
     &\ \ +\, \bigl(p_{1} - p_{2}\bigr)^{2} \bigl(
     \bigl[p_{0} - 3 p_{3} (1 - p_{3}) + 6 p_{1} p_{2}\bigr] + (p_{1} - p_{2}) + 3 p_{2}^{2} + 2 (p_{0} - p_{1}) (1 + 3 p_{0})\bigr)
     \,\geq\,0,
\end{align}
\end{subequations}
where the last inequality follows since the term in angled brackets in the last line can be shown to nonnegative using $6p_{1}p_{2}\geq 6p_{3}^{2}$ and $3 p_{3} (1 - 3p_{3})\leq \tfrac{1}{4}\leq p_{0}$.
\end{widetext}


\newpage
\subsection{Outlook on the geometric method in higher dimensions}\label{app:higherdgeom}

Here we present a possible recursive generalisation of the geometric approach, based on re-expressing condition~(\ref{item i}) in terms of majorisation relations rather than trying to prove it in a full geometric way.

To illustrate this method we refer back to the proof of Theorem~\ref{theorem:dim3result} for the case $d=4$. We have seen that the point $\vec{v}_1 =(0,x_1(\beta),x_2(\beta),-1)$ can be reached with a transformation of the type $\vec{v}_1 =M_{q_1}\vec{q}+(\openone+\Pi)\openone\vec{r}_1 +\frac{1}{2} (\openone+\Pi^2)\openone \vec{r}_2$, where
\begin{equation}
M_{q_1}=\begin{pmatrix} m & 1-m & 0 & 0 \\ 1-m & m & 0 & 0 \\
    0 & 0 & 1 & 0 \\
    0 & 0 & 0 & 1 \end{pmatrix}  ,
\end{equation}
with $m=1-1/2(p_0+p_1)$, which is doubly stochastic since $p_0+p_1\geq 1/2$ holds for $d=4$.

To prove that $\vec{v}_2=(0,0,x_2(\beta),-1)$ can be reached, we also considered $M_{r_2}=\openone$, which turns out to be a big simplification, due to the following reasoning. Assuming $M_{r_2}=\openone$ we can rewrite the general transformation that (potentially) reaches $\vec{v}_2$ as
\begin{equation}\label{eq:geomproof1}
    \vec{v}_2+\vec q - \vec p - (\openone+\Pi)(M_{r_1}-\openone) \vec{r}_1 = M_q \vec q ,
\end{equation}
where we used the fact that $\tfrac 1 2 (\openone+\Pi^2)\vec{r}_2 = \vec p-\vec q -(\openone+\Pi) \vec{r}_1$. Thus, with this simplification of assuming $M_{r_2}=\openone$, the problem reduces to showing that there exist some pair $(M_{q},M_{r_1})$ such that Eq.~(\ref{eq:geomproof1}) holds. In particular, we use the fact that $(1+\Pi)(M_{r_1}-\openone)$ has norm equal to zero,
which in turn implies that the sum of vectors on the left hand side of Eq.~(\ref{eq:geomproof1}) has the same norm as $\vec q$, since both $\vec{v}_2$ and $\vec p$ have norm $1$ (i.e., they are probabilities). Note also that the vector $(\openone+\Pi)(M_{r_1}-\openone)$ by itself need not be ordered with decreasing components, as well as the whole expression in the left hand side of Eq.~(\ref{eq:geomproof1}). More precisely, in this case the greatest component of expression $\vec{v}_2+\vec q - \vec p$ is the third.

To prove the statement, then, we can show that $\vec{v}_2+\vec q - \vec p - (\openone+\Pi)(M_{r_1}-\openone) \vec{r}_1$ is majorised by $\vec q$, or, more precisely, we have to find a proper stochastic matrix $M_{r_1}$ such that the above majorisation holds. This is achieved again with a matrix of the form $M_{r_1}=M_{q_1}$ that leads to
\begin{equation}\label{eq:geomproof2}
(\openone+\Pi)(M_{r_1}-\openone) \vec{r}_1=(-a,0,a,0) ,
\end{equation}
with $a=(m-1)p_1(p_0-p_2)$.
Then, the first condition for the majorisation relation
to hold is that the third component of the left hand side of  Eq.~(\ref{eq:geomproof1}) is smaller than the first component of $\vec q$, i.e., that
\begin{align}
\frac{p_0+p_1-2p_2}{3}+p_2^2+(m-1)p_1(p_0-p_2) \leq p_0^2 ,
\end{align}
which implies that we have to choose
\begin{align}
1-m= \max \{0, \frac{p_0+p_1-2p_2-3(p_0^2-p_2^2)}{3p_1(p_0-p_2)} \} ,
\end{align}
such that the majorisation relation is correctly satisfied, since the other conditions are also trivially satisfied.

The last step to check is that with this choice of the parameter $1-m$, $M_{r_1}$ is indeed a doubly stochastic matrix. This is true because
\begin{equation}\label{eq:1md4}
 1-m \leq 1 \iff p_0+p_1-2p_2-3(p_0^2-p_2^2) \leq 3p_1(p_0-p_2)
\end{equation}
also holds, as well as $1-m\geq 0$, which holds by construction.

For higher dimensions, we can try a sort of recursive reasoning, that we work out here for the case $d=5$.
Let us hypotesise that the three vertices $(\vec v_1, \vec v_2, \vec v_{3})$ can be found with a set of transformations $\{(M_{q_i}, (M_{r_1})_i , \openone)\}_{i=1,2,3}$, i.e., that are such that the last stochastic matrix is always the identity $(M_{r_{2}})_i=\openone$ for all $i=1,2,3$.
This allows us to decompose the last vector as
\begin{equation}
     (\mathds{1}+ \Pi^2)\vec{r}_2 = \vec{p}- \vec{q} - (\mathds{1}+ \Pi)\vec{r}_1 ,
\end{equation}
that is $\vec p - \vec q$ minus a sum of terms that have zero norm. Then, we rewrite the condition for a vertex $v_i$ to be reached as
\begin{align}\label{eq:geomproof3}
 \vec{v}_{i}- \vec p + \vec q - (\mathds{1}+ \Pi)((M_{r_1})_i-\openone) \vec{r}_1 &= M_{q_i}\vec{q} ,
\end{align}
which is satisfied if and only if the left hand side is majorised by $\vec q$. So now, the task would be to find suitable doubly stochastic matrices $(M_{r_1})_i$ such that the left hand side is majorised by $\vec q$ for each of the $\vec v_i$. Let us then start by considering $\vec v_1$, which in the probability coordinates is $\vec v_1=((p_0+p_1)/2,(p_0+p_1)/2,p_2,p_3,p_4)$. In this case,
the greatest component of $\vec{v}_{i}- \vec p + \vec q$ is the second, namely $\tfrac{p_0-p_1} 2 +p_1^2$. Unlike
lower dimensions, however, in this case we have that $\tfrac{p_0-p_1} 2 +p_1^2\leq p_0^2 \iff p_0+p_1 \geq 1/2$ does not always hold, meaning that this time $(M_{r_1})_1=\openone$ does not suffice. Thus, in general we have to find $(M_{r_1})_1$ such that the second component of the vector $\vec{\tilde r}_1:=(\mathds{1}+ \Pi)((M_{r_1})_1-\openone) \vec{r}_1$ is negative. The simplest would be if $\vec{\tilde r}_1=(-a,a,0,0,0)$ for some positive $a$. However, it can be readily checked that this is not possible. Nevertheless, we can look for vectors, for example, of the form $\vec{\tilde r}_1=(a_1,a_2,-a_1-a_2,0,0)$, for positive $a_2$ such that we can satisfy $\tfrac{p_0-p_1} 2 +p_1^2-a_2\leq p_0^2$ by choosing
\begin{align}
a_2 = \max \{ 0, (p_0-p_1)(\tfrac 1 2-p_0-p_1)\} .
\end{align}
Afterwards, the next condition for the majorisation to hold is either
\begin{align}
 p_0^2 +p_1^2 - a_1 -a_2 \leq p_0^2+p_1^2 ,
\end{align}
or
\begin{equation}\label{eq:lastappcond3}
    \tfrac{p_0-p_1} 2 +p_1^2+p_2^2+a_1\leq p_0^2+p_1^2 .
\end{equation}
While the first is trivially satisfied if $a_1$ and $a_2$ are both positive, the second needs to be checked.
After this, all other conditions for the majorisation will be trivially satisfied.
This can be achieved with the following matrix:
\begin{equation}
\begin{aligned}
(M_{r_1})_1&=\begin{pmatrix} 1-m_1 & m_1 & 0 & 0 &  0\\ 0 & 1-m_1 & m_1 & 0 & 0 \\
    m_1 & 0 & 1-m_1 & 0 & 0 \\
    0 & 0 & 0 & 1-m_2 & m_2 \\
    0 & 0 & 0 & m_2 & 1-m_2
    \end{pmatrix} ,
\end{aligned}
\end{equation}
with
\begin{equation}
\begin{aligned}
    a_2&=m_2p_4(p_0-p_3) = m_1(p_0p_1-p_2p_3) , \\
    a_1&=m_1p_2(p_1-p_3) ,
\end{aligned}
\end{equation}
and we have to ensure the following conditions (only in the case $1/2\geq p_0+p_1$), coming from $m_1\leq 1$ and $m_2\leq 1$
\begin{equation}
    \begin{gathered}\label{eq:1md5II}
     (p_0-p_1)(\tfrac 1 2-p_0-p_1)/(p_0p_1-p_2p_3) \leq 1 , \\
    (p_0-p_1)(\tfrac 1 2-p_0-p_1)/(p_4(p_0-p_3))  \leq 1 ,
    \end{gathered}
    \end{equation}
    plus we have to guarantee that Eq.~(\ref{eq:lastappcond3}) is satisfied, which leads to
    \begin{equation}\label{eq:lastappcond31}
       (p_0-p_1)(\tfrac 1 2-p_0-p_1)p_2(p_1-p_3)/(p_0p_1-p_2p_3) \leq p_0^2-p_2^2-\tfrac{p_0-p_1} 2 ,
    \end{equation}
    since $a_1=(p_0-p_1)(\tfrac 1 2-p_0-p_1)p_2(p_1-p_3)/(p_0p_1-p_2p_3)$.
The above conditions are nontrivial, but still satisfied, as it is shown below in Sec.~\ref{app:higherdgeomsub}.

Let us now come to $\vec v_2$. For this case we can try to re-use the result of $d=4$, and show that the majorisation relation
\begin{align}\label{eq:geomproof4}
 \vec{v}_{2}- \vec p + \vec q + (\mathds{1}+ \Pi)((M_{r_1})_2-\openone) \vec{r}_1 &= M_{q_2}\vec{q} ,
\end{align}
holds for
\begin{equation}
\begin{aligned}
(M_{r_1})_2&=\begin{pmatrix} m & 1-m & 0 & 0 &  0\\ 1-m & m & 0 & 0 & 0 \\
    0 & 0 & 1 & 0 & 0 \\
    0 & 0 & 0 & 1 & 0 \\
    0 & 0 & 0 & 0 & 1
    \end{pmatrix} ,
\end{aligned}
\end{equation}
with the appropriate coefficient $1-m$, which is exactly the same as in the previous case, namely
\begin{align}
1-m= \max \{0, \frac{p_0+p_1-2p_2-3(p_0^2-p_2^2)}{3p_1(p_0-p_2)} \} ,
\end{align}
and again we have to ensure that
\begin{equation}\label{eq:reld5d4}
 1-m \leq 1 \iff p_0+p_1-2p_2-3(p_0^2-p_2^2) \leq 3p_1(p_0-p_2) .
\end{equation}
However, while Eq.~(\ref{eq:reld5d4}) holds alwasy true for $d=4$, this is no longer the case for $d=5$. Still, there is a range of parameters $\{E_i\}_{i=1}^5$ where it holds (see also Sec.~\ref{app:higherdgeomsub} below).

Last, for the point $\vec{v}_{2}$ we can consider a matrix similar to the above, namely
\begin{align}
(M_{r_1})_3&=\begin{pmatrix} 1 & 0 & 0 & 0 &  0\\ 0 & m & 1-m & 0 & 0 \\
    0 & 1-m & m & 0 & 0 \\
    0 & 0 & 0 & 1 & 0 \\
    0 & 0 & 0 & 0 & 1
    \end{pmatrix} ,
\end{align}
but now the condition to be satisfied is $\tfrac{p_0+p_1+p_2+p_3} 4 -p_3+p_3^2-(1-m)p_2(p_1-p_3)\leq p_0^2$ which leads to
\begin{align}
    1-m\,=\,\max \{0, \frac{p_0+p_1+p_2-3p_3-4(p_0^2-p_3^2)}{4p_2(p_1-p_3)} \} .
\end{align}
The requirement $1-m \leq 1$ thus translates to
\begin{equation}\label{eq:proofd5last}
  p_0+p_1+p_2-3p_3-4(p_0^2-p_3^2) \leq 4p_2(p_1-p_3) .
\end{equation}
While this last relation might not hold in general, it can be satisfied in some appropriately chosen range of the parameters $\{E_i\}_{i=1}^4$. Thus, in order to fully prove the $d=5$ case with this method, one would need to find other possible $(M_{r_1})_2$ and $(M_{r_1})_3$ for the respective complementary parameter regions. Nevertheless, this can possibly still be achieved while keeping $M_{r_2}=\openone$.

In conclusion, we have seen a possible strategy to prove that all the vertices listed in condition~(\ref{item i}) can be reached with the transformations preserving equal marginals. This exploits the geometry of such achievable thermal marginals and tries to avoid the difficulties encountered in the ''passing on the norm´´ approach, by considering the assumption that it is always possible to choose one of the matrices $M_{r_i}$ as the identity. This assumption is in fact supported by the geometric proofs in dimensions $d=3$ and $d=4$.


\subsubsection{Validity of doubly stochasticity of $M_{r_i}$ for $d=5$}\label{app:higherdgeomsub}

Here we study the validity of the relations that guarantee that the matrices $M_{r_i}$ needed for the proof of $d=5$ are indeed doubly stochastic. We start with Eq.~(\ref{eq:1md4}), which was also needed for the $d=4$ case. To prove Eq.~(\ref{eq:1md4}) we can see that
\begin{equation}
    p_0+p_1-2p_2-3(p_0^2-p_2^2)\leq 3p_1(p_0-p_2) .
\end{equation}
The above relation in fact holds since we can rewrite it as
\begin{equation}
    (p_0-p_2)(1-3(p_0+p_1+p_2))+(p_1-p_2)\leq 0 ,
\end{equation}
and we can see that indeed
\begin{equation}
\begin{gathered}
    (p_0-p_2)(1-3(p_0+p_1+p_2))+(p_1-p_2)\leq \\ (p_0-p_2)(2-3(p_0+p_1+p_2)) \leq 0 ,
    \end{gathered}
\end{equation}
since $p_0+p_1+p_2 \geq 2/3$ holds for $d=4$, but not always $d=5$. In fact, for $d=5$ we can have points close to $p_0=p_1=p_2=1/5$.

To prove Eq.~(\ref{eq:1md5II}), we have to prove that, whenever $\tfrac 1 2 \geq p_0+p_1$ it holds that
\begin{equation}
    (p_0-p_1)(\tfrac 1 2-p_0-p_1) \leq p_4(p_0-p_3) \leq  p_0p_1-p_2p_3 .
\end{equation}
This is indeed true, since we can rewrite it as
\begin{equation}
    (p_0-p_1)(\tfrac 1 2-p_0-p_1) - p_4(p_0-p_3) \leq (p_0-p_1)(\tfrac 1 2-p_0-p_1-p_4) \leq 0 ,
\end{equation}
where the last relation holds whenever $\tfrac 1 2 \geq p_0+p_1$ because we can write
\begin{equation}
    \tfrac 1 2-p_0-p_1-p_4 = p_2+p_3 - \tfrac 1 2 \leq  p_0+p_1 - \tfrac 1 2 \leq 0 .
\end{equation}

Let us know consider Eq.~(\ref{eq:lastappcond31}). Since
$p_2(p_1-p_3)\leq (p_0p_1-p_2p_3)$ we can rewrite it as
\begin{equation}
      (p_0-p_1)(\tfrac 1 2-p_0-p_1) \leq p_0^2-p_2^2-\tfrac{p_0-p_1} 2 ,
\end{equation}
which is always satisfied since $p_0-p_1 \leq 2(p_0^2-p_1^2) \leq 2p_0^2-p_2^2-p_1^2$.

Finally, let us consider Eq.~(\ref{eq:proofd5last}). We can rewrite the expression as
\begin{equation}
    (p_0-p_3)(1-4(p_0+p_3))+(p_1-p_3)(1-4p_2)+p_2-p_3 \leq 0 ,
\end{equation}
and again, we can see that
\begin{equation}
\begin{gathered}
    (p_0-p_3)(3-4(p_0+p_2+p_3)) \leq \\ (p_0-p_3)(1-4(p_0+p_3))+(p_1-p_3)(1-4p_2)+p_2-p_3
    \end{gathered}
\end{equation}
and in particular both the above expressions are surely negative whenever
\begin{equation}
    p_0+p_2+p_3 \geq \tfrac{3}{4},
\end{equation}
which is not always true, but it is so in a certain range of parameters.

\end{document}